\newcommand{\lv}[1]{}
\newcommand{\pr}{\mathbf{Pr}}
\newcommand{\eps}{{\varepsilon}}
\newcommand{\etal}{{\it et al. }}
\newcommand{\nl}{\bot}
\newcommand{\veps}{\varepsilon}
\newcommand{\poly}{{\tt poly}}
\renewcommand{\L}{{\cal L}}
\newcommand{\ti}{{\bar i}}
\newcommand{\ostp}[1]{{O_{#1}^{'}}}
\newcommand{\X}{\mathcal{X}}
\newcommand{\R}{\mathbb{R}}
\newcommand{\Xst}{X^{\star}}
\newcommand{\cst}{c^{\star}}
\newcommand{\GC}{{\tt $t$-GoodCenters }}
\newcommand{\C}{\mathbb{C}}
\newcommand{\OPT}{OPT^{\star}}
\newcommand{\norm}[1]{\lVert #1 \rVert}
\newtheorem{fact}{Fact}
\begin{document}

\title{Streaming {PTAS} for Constrained $k$-Means}
%
%
\author{Dishant Goyal \and Ragesh Jaiswal \and Amit Kumar}
%
%
%
\institute{
Department of Computer Science and Engineering, \\
Indian Institute of Technology Delhi.\thanks{Email addresses: \email{\{Dishant.Goyal, rjaiswal, amitk\}@cse.iitd.ac.in}}
}
{\def\addcontentsline#1#2#3{}\maketitle}

\begin{abstract}
We generalise the results of Bhattacharya \etal~\cite{bjk} for the {\em list-$k$-means} problem defined as -- for a (unknown) partition $X_1, ..., X_k$ of the dataset $X \subseteq \R^d$, find a {\em list} of $k$-center sets (each element in the list is a set of $k$ centers) such that at least one of $k$-center sets $\{c_1, ..., c_k\}$ in the list gives an $(1+\veps)$-approximation with respect to the cost function $\min_{\textrm{permutation } \pi} \left[ \sum_{i=1}^{k} \sum_{x \in X_i} ||x - c_{\pi(i)}||^2 \right]$. 
The list-$k$-means problem is important for the constrained $k$-means problem since algorithms for the former can be converted to {PTAS} for various versions of the latter. 
The algorithm for the list-$k$-means problem by Bhattacharya \etal is a $D^2$-sampling based algorithm that runs in $k$ iterations exploring a tree of size $\left(\frac{k}{\veps}\right)^{O(\frac{k}{\veps})}$.
Under the assumption that a constant factor solution  is available for the (classical or unconstrained) $k$-means problem, we generalise the algorithm of Bhattacharya \etal in two ways -- (i) the algorithm runs in a single iteration, and (ii) for any fixed set $X_{j_1}, ..., X_{j_t}$ of $t \leq k$ clusters, the algorithm produces a list of $(\frac{k}{\veps})^{O(\frac{t}{\veps})}$ $t$-center sets such that (w.h.p.) at least one of them is good for $X_{j_1}, ..., X_{j_t}$.
Following are the consequences of our generalisations:
\begin{enumerate}
\item {\em Streaming algorithm}: The $D^2$-sampling algorithm running in a single iteration allows us to design a 2-pass, logspace streaming algorithm for the list-$k$-means problem.This can be converted to a 4-pass, logspace streaming {PTAS} for various constrained versions of the $k$-means problem.

\item {\em Faster PTAS under stability}: The second generalisation is useful in $k$-means clustering scenarios where finding good centers becomes easy once good centers for a few ``bad" clusters have been chosen. One such scenario is clustering under stability of Awasthi \etal~\cite{abs10} where the number of such bad clusters is a constant. 
Using the above idea, we significantly improve the running time of their algorithm from $O(dn^3) (k \log{n})^{\poly(\frac{1}{\beta}, \frac{1}{\veps})}$ to $O \left(dn^3  \left(\frac{k}{\veps} \right)^{O(\frac{1}{\beta \veps^2})} \right)$.

\item {\em Parallel Algorithm}: The algorithm of Bhattacharya \etal~\cite{bjk} is highly parallelizable except for an iteration of size $k$. Our single iteration algorithm allows us to convert a constant factor approximate solution to a $(1+\veps)$-factor approximate solution in fast parallel time.
\end{enumerate}
\end{abstract}

\tableofcontents



\newpage
\pagestyle{plain}
\setcounter{page}{1}

\section{Introduction}
Clustering is one of the most important tools for data analysis and the $k$-means clustering problem is the most prominent mathematical formulations of clustering. 
The goal of clustering is to partition data objects into groups, called {\em clusters}, such that similar objects are in the same cluster and dissimilar ones are in different clusters.
Defining the clustering problem mathematically requires us to quantify the notion of similarity/dissimilarity and there are various ways of doing this.
Given that in most contexts data objects can be represented as vectors in $\R^d$, a natural notion of distance between data points is the squared Euclidean distance and this gives rise to the $k$-means problem.
\begin{quote}
\underline{\bf The $k$-means problem}: Given a dataset $X \subset \R^d$ and a positive integer $k$, find a set $C \subset \R^d$ of $k$ points, called {\em centers}, such that the following cost function gets minimised:
$$\Phi(C, X) \equiv \sum_{x \in X} \min_{c \in C}{||x - c||^2}.$$
\end{quote}
The $k$-means problem has been widely studied by both theoreticians and practitioners and is quite uniquely placed in the computer science research literature. 
The theoretical worst-case analysis properties of the $k$-means problem is fairly well understood.
The problem has been shown to be $\mathsf{NP}$-hard~\cite{das08,mnv12,V09} and $\mathsf{APX}$-hard~\cite{acks15}.
A lot of work has been done on obtaining efficient constant approximation algorithms for this problem (e.g., \cite{KanungoMNPSW02,ahmadian17}). 
However, this is not the main focus of this work. 
In this work, we disucss approximation schemes for the $k$-means problem and its variants. 
Approximation schemes are family of algorithms $\{A\}_{\veps}$ that give $(1+\veps)$-approximation guarantee.

Given the hardness of approximation result~\cite{acks15}, it is known that a {\em Polynomial Time Approximation Scheme (PTAS)} is not possible unless $\mathsf{P} = \mathsf{NP}$.
However, there are efficient approximation schemes when at least one of $k, d$ is not part of the input (and hence assumed to be a fixed constant). 
The work on approximation schemes for the $k$-means problem can be split into two categories where one consists of algorithms under the assumption that $k$ is a constant while the other with $d$ as a constant.
Assuming $k$ is to be a constant, there are various PTAS~\cite{kss,FeldmanMS07,jks,jky15} with running time $O(nd \cdot 2^{\tilde{O}(\frac{k}{\veps})})$.\footnote{The multiplicative factor of $nd$ can be changed to an additive factor using useful data analysis tools and techniques such as {\em coresets}~\cite{FeldmanMS07} and {\em dimensionality reduction}~\cite{Linial1995}.}
Note that the running time has a dependence on $2^k$. This is nicely supported by a conditional lower bound result~\cite{abjk} that says that under the {\em Exponential Time Hypothesis (ETH)} any  approximation algorithm (beyond a fixed approximation factor) that runs in time polynomial in $n$ and $d$ will have a running time dependence of at least $2^k$.
On the other hand, PTAS based on the assumption that $d$ is a constant form another  line of research culminating in the work of Addad \etal~\cite{addad16} and Friggstad \etal~\cite{friggstad16} who gave a local search based PTAS with running time dependence on $d$ of the form $(\frac{k}{\veps})^{\zeta}$ where $\zeta = \frac{d^{O(d)}}{\veps^{O(\frac{d}{\veps})}}$.
The recent work of Makarychev \etal~\cite{mmr18} nicely consolidates the two lines of work by showing that the cost of the optimal $k$-means solution is preserved up to a factor of $(1+\veps)$ under a projection onto a random $O \left( \frac{\log{(k/\veps)}}{\veps^2} \right)$-dimensional subspace.

The $k$-means problem nicely models the {\em locality} requirement of clustering. 
That is, similar (or closely located points) should be in the same cluster and dissimilar (or far-away points) should be in different clusters. 
However, in many different clustering contexts in machine learning and data mining, locality is not the only desired clustering property. 
There are other constraints in addition to the to the locality requirement. 
For example, one requirement is that the clusters should be balanced or in other words contain roughly equal number of points. 
Modelling such requirements within the framework of the $k$-means problem gives rise to something known as a {\em constrained $k$-means problem}.
A constrained $k$-means problem can be modelled as follows: Let $\mathbb{C}$ denote the set of $k$-clusterings that satisfy the relevant constraint. Then the goal is to find a clustering $\X = \{X_1, ..., X_k\}$ of the dataset $X \subset \R^d$ such that the clustering is in $\mathbb{C}$ and the following cost function is minimised:
\[
\Delta(\X) \equiv \sum_{i=1}^{k} \Delta(X_i), \textrm{ where } \Delta(X_i) \equiv \Phi(\mu(X_i), X_i) \textrm{ and } \mu(X_i) \equiv \frac{\sum_{x \in X_i} x}{|X_i|}.
\]
Note that  $\mu(X_i)$ is the {\em centroid} of the data points $X_i$.\footnote{It can be easily shown that the centroid gives the best 1-means cost for any dataset and so $\Delta(X_i)$ denotes the optimal 1-means cost of dataset $X_i$.}
The above formulation in terms of the feasible clusterings $\mathbb{C}$ is an attempt to give a unified framework for considering different variations of the constrained clustering problem. 
The issue with such an attempt is how to concisely represent the set of feasible clusterings $\mathbb{C}$.
This issue was addressed in the nice work of Ding and Xu~\cite{dx15} who gave a unified framework for considering constrained versions of the $k$-means problem.
For every constrained version, instead of defining $\mathbb{C}$ they define a {\em partition algorithm} $\mathcal{P}^{\C}$ which when given a set of $k$ centers $\{c_1, ..., c_k\}$ outputs a feasible clustering $\{X_1, ..., X_k\}$ (i.e., a clustering in $\mathbb{C}$) that minimises the cost $\sum_{i=1}^{k} \Phi(\{c_i\}, X_i)$. 
They give efficient partition algorithms for a variety of constrained $k$-means problems. 
These problems and their description are given in Table~\ref{table:1}.
Note that the partition algorithm for the $k$-means problem (i.e., the classical unconstrained version) is simply the {\em Voronoi partitioning} algorithm.

\begin{table}
\centering
\begin{tabular}{|l|l|l|}
\hline
\# & {\bf Problem} & {\bf Description} \\ \hline
1. & \makecell[l]{$r$-gather $k$-means clustering \\ $(r, k)$-{\tt GMeans}} & \makecell[l]{Find clustering $\X = \{X_1, ..., X_k\}$ with minimum $\Delta(\X)$ \\such that for all $i$, $|X_i| \geq r$} \\ \hline
2. & \makecell[l]{$r$-Capacity $k$-means clustering \\ $(r, k)$-{\tt CaMeans}} & \makecell[l]{Find clustering $\X = \{X_1, ..., X_k\}$ with minimum $\Delta(\X)$ \\such that for all $i$, $|X_i| \leq r$} \\ \hline
3. & \makecell[l]{$l$-Diversity $k$-means clustering \\ $(l, k)$-{\tt DMeans}} & \makecell[l]{Given that every data point has an associated colour, \\find a clustering $\X = \{X_1, ..., X_k\}$  with minimum $\Delta(\X)$ \\such that for all $i$, the fraction of points sharing the \\same colour inside $X_i$ is $\leq \frac{1}{l}$} \\ \hline
4. & \makecell[l]{Chromatic $k$-means clustering \\ $k$-{\tt ChMeans}} & \makecell[l]{Given that every data point has an associated colour,  \\find a clustering $\X = \{X_1, ..., X_k\}$ with minimum $\Delta(\X)$ \\such that for all $i$, $X_i$ should not have more than two \\points with the same colour.} \\ \hline
5. & \makecell[l]{Fault tolerant $k$-means clustering \\ $(l, k)$-{\tt FMeans}} & \makecell[l]{Find clustering $\X = \{X_1, ..., X_k\}$ such that \\the sum of squared distances of the points to the $l$ nearest \\centers out of $\{\mu(X_1), ..., \mu(X_k)\}$, is minimised.} \\ \hline
6. & \makecell[l]{Semi-supervised $k$-means clustering \\ $k$-{\tt SMeans}} & \makecell[l]{Given a target clustering $\X' = \{X_1', ..., X_k'\}$ and constant $\alpha$ \\find a clustering
$\X = \{X_1, ..., X_k\}$ such that the cost \\$\alpha \cdot \Delta(\X) + (1-\alpha) \cdot Dist(\X', \X)$ is minimised. \\$Dist$ denotes the set-difference distance.} \\ \hline
\end{tabular}
\caption{Constrained $k$-means problems with efficient partition algorithm (see Section~4 in \cite{dx15}).}\label{table:1}
\end{table}

Efficient partition algorithms allows us to design PTAS in the following manner: Let $\X = \{X_1, ..., X_k\}$ be an optimal clustering for some constrained $k$-means problem with optimal cost $OPT = \Delta(\X) = \sum_{i=1}^{k} \Delta(X_i)$. 
Suppose in some way, we are able to find a $k$-center set $\{c_1, ..., c_k\}$ such that
\[
\min_{\textrm{permutation } \pi} \left[ \sum_{i=1}^{k} \sum_{x \in X_i} ||x - c_{\pi(i)}||^2 \right] \leq (1 + \veps) \cdot OPT.
\]
Then we can use the partition algorithm to find a clustering $\bar{\X} = \{\bar{X}_1, ..., \bar{X}_k\}$ such that $\Delta(\bar{\X}) \leq (1+\veps) \cdot OPT$.
It turns out that even though producing a single such $k$-center set may not be possible, producing a {\em list} of such $k$-center sets is possible. 
Using the partition algorithm to find the clustering with least cost from the list will give us a $(1+\veps)$-approximate solution.
This is the main idea used for designing PTAS by Ding and Xu~\cite{dx15} and Bhattacharya \etal~\cite{bjk}.
Bhattacharya \etal~\cite{bjk} gave quantitative improvements over the results of Ding and Xu in terms of the list size.
They also formally defined the {\em list-$k$-means} problem that is a natural problem in the context of the above discussion.\footnote{Note that Ding and Xu implicitly gave an algorithm for the list-$k$-means problem in their work~\cite{dx15} without naming it so.}
One of the main focus of discussion of this paper will be the list-$k$-means problem. 
So, let us first define the problem formally.

\begin{quote}
{\bf List-$k$-means}: Let $X \subset \R^d$ be the dataset and let $\X = \{X_1, ..., X_k\}$ be an arbitrary clustering of dataset $X$. 
Given $X$, positive integer $k$, and error parameter $\veps > 0$, find a {\em list} of $k$-center sets such that (whp\footnote{We use whp as an abbreviation for ``with high probability".}) at least one of the sets gives $(1+\veps)$-approximation with respect to the cost function: 
\[
\psi(\{c_1, ..., c_k\}, \X) \equiv \min_{\textrm{permutation } \pi} \left[ \sum_{i=1}^{k} \sum_{x \in X_i} ||x - c_{\pi(i)}||^2 \right].
\]
\end{quote}
Bhattacharya \etal~\cite{bjk} gave a lower bound on the list size using a counting argument and a closely matching upper bound using a {\em $D^2$-sampling} based approach.
$D^2$-sampling is a simple idea that is very useful in the context of the $k$-means/median clustering problems.
Here the centers are sampled from the given dataset in successive iterations where the probability of a point getting sampled as the center in an iteration is proportional to the squared distance of this point to the nearest center out of the centers already chosen in the previous iterations.
Before discussing the algorithm for the list-$k$-means problem, let us first make sure that the relevance of this problem in the context of the constrained $k$-means problems is well understood.
Indeed, given any constrained $k$-means clustering problem with feasible clusterings $\C$ and partition algorithm $\mathcal{P}^{\C}$, one can obtain a $(1+\veps)$-approximate solution by first running an algorithm for the list $k$-means problem (where the unknown clustering is any optimal clustering for the constrained $k$-means problem) to obtain a list $\L$ and then use the partition algorithm $\mathcal{P}^{\C}$ to pick the minimum cost clustering from $\L$.
From the previous discussion, it should be clear that this will give is a $(1+\veps)$-approximate solution (whp). 
Let us now discuss the $D^2$-sampling based algorithm for the list-$k$-means problem.

Bhattacharya \etal~\cite{bjk} gave an algorithm for the list-$k$-means problem with list size $|\L| = (\frac{k}{\veps})^{O(\frac{k}{\veps})}$ and running time $O(nd |\L|)$.
Their algorithm explores a rooted tree of size $(\frac{k}{\veps})^{O(\frac{k}{\veps})}$ and depth $k$ where the degree of every non-leaf vertex is 
$(\frac{k}{\veps})^{O(\frac{1}{\veps})}$. 
Every node in this tree has an associated center and the path from root to a leaf node gives one of the $k$-center sets for the output list.
Let $v$ be an internal node at depth $i$. 
The path from root to $v$ defines $i$ centers $C_v$ and their algorithm extends these $i$ centers to $(i+1)$ centers by $D^2$-sampling $\poly(\frac{k}{\veps})$ points w.r.t. $C_v$ and considering the centroids of all possible subsets of size $O(\frac{1}{\veps})$ of the sampled points plus copies of centers in $C_v$.\footnote{$D^2$-sampling w.r.t. a center set $C$ implies sampling from the dataset $X$ using a distribution where the probability of sampling point $x$ is proportional to $\min_{c \in C}{||x - c||^2}$. In the case $C = \emptyset$, $D^2$-sampling is the same as uniform sampling.}
This defines the $(\frac{k}{\veps})^{O(\frac{1}{\veps})}$ children of $v$ that are further explored subsequently.
In their analysis, they showed that for every node $v$, there is always (whp) a child of $v$ that is a good center for one of the clusters for which none of the centers in $C_v$ is good.

Note that the algorithm of Bhattacharya \etal\cite{bjk} in the previous paragraph has an unavoidable iteration of depth $k$ since their analysis works only when the centers are picked {\em one-by-one} in $k$ iterations.
We circumvent this inherent restriction by using a constant factor approximate solution $C$ to the  $k$-means problem (i.e., the unconstrained $k$-means problem) for the given dataset $X$.
That is, $\Phi(C, X) \leq \alpha \cdot \OPT$, where $\OPT$ denotes the optimal $k$-means cost. 
Note that there are a number of constant factor approximation algorithms available for the $k$-means problem. 
So, this assumption is not restrictive at all.
We can even further relax the assumption by noting that an $(O(1), O(1))$ bi-criteria approximate solution $C$ is sufficient. This means that $|C| = O(k)$ and $\Phi(C, X) \leq \alpha \cdot \OPT$.
There are bi-criteria approximation algorithms available for the $k$-means problem.
For example, there is a simple $O(nkd)$ bi-criteria approximation algorithm based on $D^2$-sampling that just samples $O(k)$ points (using $D^2$-sampling) and it has been shown~\cite{adk09} that the set of centers obtained gives a constant approximation with high probability.
Under the assumption that such a constant factor solution $C$ is available, we generalise the $D^2$-sampling based algorithm of Bhattacharya \etal~\cite{bjk} in the following two ways:
\begin{enumerate}
\item We consider the case where we may not need to find good centers for {\em all} clusters but for  $t \leq k$ clusters $X_{j_1}, ..., X_{j_t}$. 
For any fixed choice of $t$ clusters $X_{j_1}, ..., X_{j_t}$, our algorithm returns a list of $(\frac{k}{\veps})^{O(\frac{t}{\veps})}$ $t$-center sets such that (whp) at least one of them is ``good" for $X_{j_1}, ..., X_{j_t}$. We will make this notion more precise later in Section~\ref{sec:good-centers}.

\item The sampling algorithm runs in a {\em single} iteration where $\poly(\frac{t}{\veps})$ points from $X$ are $D^2$-sampled w.r.t. $C$. 
We show that good centers for clusters $X_{j_1}, ...., X_{j_t}$ can simultaneously be found from the sampled points and points in the set $C$.\footnote{Note that there is an iteration for probability amplification in algorithm \GC but since the $2^t$ rounds are independent,  they can be executed independently.}
\end{enumerate}
The formal description of the generalised algorithm is given below.
The algorithm below takes as input dataset $X$, an $\alpha$-approximate solution $C$, error parameter $\veps$, and $t$ and outputs a list $\mathcal{L}$ of $t$-center sets. 
We discuss the nice properties of this algorithm next.

\begin{framed}
\GC($X, C, \veps, t$)\\
\hspace*{0.6in} {\bf Inputs}: Dataset $X$, $\alpha$-approximate $C$, accuracy $\veps$, and number of centers $t$\\
\hspace*{0.6in} {\bf Output}: A list $\L$, each element in $\L$ being a $t$-center set\\
\hspace*{0.6in} {\bf Constants}: $\eta = \frac{2^{16} \alpha t}{\veps^4}; \tau = \frac{128}{\veps}$\\
\hspace*{0.2in} (1) \ \ \ $\L \leftarrow \emptyset$\\
\hspace*{0.2in} (2) \ \ \ Repeat $2^t$ times:\\
\hspace*{0.2in} (3)\hspace*{0.3in}  \ \ \ Sample a multi-set $M$ of $\eta t$ points from $X$ using $D^2$-sampling w.r.t. center set $C$\\
\hspace*{0.2in} (4)\hspace*{0.3in}  \ \ \ $M \leftarrow M \cup$ \{$\frac{128 t}{\veps}$ copies of each element in $C$\}\\
\hspace*{0.2in} (5)\hspace*{0.3in} \ \ \ For all disjoint subsets $S_1, ..., S_t$ of $M$ such that $\forall i, |S_i| = \tau$:\\
\hspace*{0.2in} (6)\hspace*{0.9in} $\L \leftarrow \L \cup (\mu(S_1), ..., \mu(S_t))$\\
\hspace*{0.2in} (7) \ \ \ return($\L$)
\end{framed}
Note that the list size produced by the above algorithm is $|\L| = (\frac{k}{\veps})^{O(\frac{t}{\veps})}$ and running time is $O(nd |\L|)$.
We will show that the \GC algorithm behaves well (whp) for {\bf any} fixed set of $t$ clusters $X_{j_1}, ..., X_{j_t}$ out of clusters $X_1, ..., X_k$.
What this means is the following: Let $X_{j_1}, ..., X_{j_t}$ denote any fixed set of $t$  clusters. 
The list $\L$ produced by the \GC algorithm, with high probability, will contain a $t$-center set $\mathcal{C}$ such that 
$$\psi(\mathcal{C}, \{X_{j_1}, ..., X_{j_t}\}) \leq \left(1+ \frac{\veps}{2} \right) \cdot \sum_{i=1}^{t} \Delta(X_{j_i}) + \frac{\veps}{2} \cdot OPT.$$ 
Note that $OPT$ denotes the optimal cost with respect to the clustering $\X = \{X_{1}, ..., X_{k}\}$. That is $OPT = \Delta(\X) = \sum_{i=1}^{k} \Delta(X_i)$.
We formally state our result as the next theorem.

\begin{theorem}\label{thm:2}
Let $0 < \veps \leq \frac{1}{2}$ and $t$ be any positive integer. 
Let $X_{j_1}, ..., X_{j_t}$ denote an arbitrary set of $t$ clusters out of $k$ clusters $X_1, ..., X_k$ of the dataset $X$.
Let $\L$ denote the list returned by the algorithm \GC($X, C, \veps, t$). 
Then with probability at least $\frac{3}{4}$, $\L$ contains a center set $\mathcal{C}$ such that:
\[
\psi \left(\mathcal{C}, \{X_{j_1}, ..., X_{j_t}\} \right) \leq \left(1+ \frac{\veps}{2} \right) \cdot \sum_{i=1}^{t} \Delta(X_{j_i}) + \frac{\veps}{2} \cdot OPT \leq (1+\veps) \cdot OPT,
\]
where $OPT = \sum_{i=1}^{k} \Delta(X_i)$.
\end{theorem}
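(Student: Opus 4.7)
The plan is to show that a single execution of the inner loop (lines~3--6) of \GC produces a good center set for all $t$ target clusters simultaneously with probability at most polynomially smaller than $2^{-t}$, so that the $2^t$ independent outer repetitions of line~2 amplify the success probability to at least $3/4$. The core technical task is, for each target cluster $X_{j_i}$, to exhibit a subset $S_i \subseteq M$ of size $\tau = 128/\veps$ whose centroid $\mu(S_i)$ is a near-optimal center for $X_{j_i}$, with the subsets $S_1, \ldots, S_t$ chosen pairwise disjoint.

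Fix a cluster $X_{j_i}$ and let $c^{\star}_i \in C$ be the point of the $\alpha$-approximate solution nearest to $\mu(X_{j_i})$. I would split into two regimes: call $X_{j_i}$ \emph{light} if
\[
|X_{j_i}| \cdot \|c^{\star}_i - \mu(X_{j_i})\|^2 \;\leq\; \tfrac{\veps}{2}\, \Delta(X_{j_i}) \,+\, \tfrac{\veps}{2\alpha}\, \Phi(C, X_{j_i}),
\]
and \emph{heavy} otherwise. For a light cluster, the identity $\Phi(\{c\}, Y) = \Delta(Y) + |Y| \cdot \|c - \mu(Y)\|^2$ shows that simply letting $S_i$ consist of $\tau$ copies of $c^{\star}_i$ -- available in $M$ thanks to the padding in line~4 -- already yields a cost within the desired bound; this step is deterministic once $c^{\star}_i$ is identified.

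For a heavy cluster, I would apply a weighted variant of Inaba's sampling lemma to the $D^2$-sampled portion of $M$. The heaviness condition forces $\Phi(C, X_{j_i})$ to carry a significant share of the total $D^2$-mass, so a Chernoff-type estimate on the resulting binomial count (using $\eta = 2^{16}\alpha t / \veps^4$) shows that $\Omega(\tau)$ of the $D^2$-samples fall inside $X_{j_i}$ with constant probability. These samples are drawn from the biased distribution $\|x - c^{\star}_i\|^2 / \Phi(\{c^{\star}_i\}, X_{j_i})$ on $X_{j_i}$, not uniformly. Following the padding trick of~\cite{jks,bjk}, I would construct $S_i$ as a carefully chosen mixture of these biased samples with copies of $c^{\star}_i$ pulled from line~4, in proportions that cancel the first-moment bias so that $\E[\mu(S_i)] = \mu(X_{j_i})$ exactly. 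A second-moment computation together with Markov's inequality then yields, with constant probability,
\[
\Phi\!\left(\{\mu(S_i)\}, X_{j_i}\right) \;\leq\; \left(1 + \tfrac{\veps}{2}\right) \Delta(X_{j_i}) \;+\; \tfrac{\veps}{2\alpha}\, \Phi(C, X_{j_i}).
\]

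Summing these per-cluster bounds over all light and heavy indices and invoking $\sum_{i=1}^t \Phi(C, X_{j_i}) \leq \Phi(C, X) \leq \alpha \cdot OPT$ delivers the target inequality $\psi(\mathcal{C}, \{X_{j_1}, \ldots, X_{j_t}\}) \leq (1+\veps/2) \sum_i \Delta(X_{j_i}) + (\veps/2) OPT$, and the final inequality $\leq (1+\veps) OPT$ is immediate from $\sum_i \Delta(X_{j_i}) \leq OPT$. The hardest step I expect is the joint-probability argument: one must show that the $t$ per-cluster success events can be \emph{decoupled}, which I would arrange by partitioning the $\eta t$ $D^2$-sampled points into $t$ disjoint blocks of size $\eta$ and verifying that cluster~$i$'s event depends only on block~$i$ plus the deterministic padding copies of $C$. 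Independence of the blocks then makes the joint per-iteration success probability a product of $t$ constants, and the $2^t$ outer repetitions of line~2 amplify this to at least $3/4$, completing the proof.
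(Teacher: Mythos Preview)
Your overall architecture---partition the $\eta t$ samples into $t$ independent blocks, argue per cluster, then take the product over $t$ clusters and amplify via the $2^t$ outer repetitions---matches the paper, and your light case is fine. The gap is in the heavy case, and it is twofold.

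First, your heavy/light criterion does not control the $D^2$-mass of a cluster. A cluster $X_{j_i}$ consisting of a single point $p$ (so $\Delta(X_{j_i})=0$) is always heavy by your test, since $\|c^\star_i-p\|^2>\tfrac{\veps}{2\alpha}\|c^\star_i-p\|^2$ holds whenever $\veps<2\alpha$; yet $\Phi(C,X_{j_i})/\Phi(C,X)$ can be made arbitrarily small by placing other clusters with huge cost toward $C$. With only $\poly(t/\veps)$ samples you then hit $X_{j_i}$ with vanishing probability and the Chernoff step fails. The paper splits on the $D^2$-mass itself: Case~I is $\Phi(C,X_j)\le\tfrac{\veps}{6\alpha t}\,\Phi(C,X)$ and Case~II is the complement, which is precisely what guarantees enough hits in the large-mass regime.

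Second, even when the mass is large, the conditional law of a $D^2$-sample landing in $X_{j_i}$ is proportional to $\min_{c\in C}\|x-c\|^2$, not to $\|x-c^\star_i\|^2$; different points of $X_{j_i}$ may attach to different centers of $C$, so padding with copies of a single $c^\star_i$ cannot cancel the first-moment bias. The paper's fix is more elaborate: it introduces a radius $R$, collapses the \emph{near} points of $X_j$ onto their individual nearest centers $c(x)$ to form a surrogate $X_j'$, shows a good center for $X_j'$ is good for $X_j$, and then runs a coupling (the $Y_u$, $Z_u$ variables) that turns the biased samples together with copies drawn from \emph{all} of $C$ into uniform samples from $X_j'$. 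In Case~I, where no samples from $X_j$ can be expected, the good center is the centroid of a size-$\tau$ multiset drawn entirely from $C$ (mimicking $\{c(x):x\in X_j\}$), not $\tau$ copies of one point---this is why line~(4) of \GC\ pads with copies of every element of $C$.
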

We shall formally prove the above theorem in Section~\ref{sec:good-centers}. 
In order to discuss the applications of the \GC algorithm, let us note some of its interesting properties.
Note that the algorithm is essentially a single iteration algorithm. 
The outer loop of size $2^t$ consists of  independent iterations and can be executed independently. 
The rest of the algorithm clearly follows a single line of control and does not have dependencies.
This allows us to design (i) constant pass streaming algorithms (using {\em reservoir sampling}) and (ii) parallel algorithms.
The second useful property is that it finds a good list for any fixed set of $t \leq k$ clusters (whp).
This allows us to exploit the algorithm in certain contexts where once good centers for a few ``bad" clusters have been chosen, choosing good centers for the remaining clusters becomes easy.
We discuss the applications of our algorithm in the subsequent subsections.

An interesting point to note about the \GC algorithm is that the $k$-center set $C$ that it takes as input is only a constant factor approximate solution for the $k$-means problem (i.e., unconstrained version) and not any constrained version. 
Note that we will use the algorithm for designing PTAS for various constrained versions but constant factor solutions for those are not required. 
So in some sense, the \GC algorithm can be seen as an effective way of converting a constant factor approximate solution for the $k$-means problem to PTAS for various constrained versions.
Let us now discuss the applications of our algorithm.

\subsection{Streaming algorithms}
We saw in the previous discussion how an algorithm for the list-$k$-means problem can be converted to a PTAS for a constrained $k$-means problem given that there is a {\em partition algorithm} that finds a feasible clustering with the smallest $k$-means cost.
Examining the \GC algorithm closely, we realise that it can be implemented in $2$-passes using small amount of space. This opens the door for designing streaming PTAS for the constrained versions of the $k$-means problem. If one can design a streaming version of the partition algorithm for some constrained $k$-means problem, then combining it with the streaming version of the \GC algorithm will give us a streaming PTAS for the problem. So, let us first discuss how a streaming version of the \GC algorithm can be designed.

The first bottleneck in designing a streaming version of \GC is that we need a constant factor approximate solution $C$ for the $k$-means problem (i.e., the unconstrained $k$-means problem). Fortunately, there exists a $1$-pass, logspace streaming algorithm that gives a constant factor approximate solution to the $k$-means problem~\cite{brav11}.
Given $C$, we need to show how to implement step~(3) of the algorithm in a streaming manner (the $2^t$ repetitions can be performed independently, this appears as a multiplicative factor in the space used). The probability of sampling a point $p$ is proportional to $\Phi(C, p)$, with the constant of proportionality being $\Phi(C, X)$. 
The sampling can be performed using the ideas of ``reservoir sampling'' (see e.g.~\cite{vitter85}).\footnote{\underline{\it Reservoir sampling}: Given a stream of $n$ data items with associated weights $w_1, ..., w_n$, reservoir sampling stores a single item while making a pass over the data. The $i^{th}$ data item replaces the stored item with probability $\frac{w_i}{\sum_{j=1}^{i}w_j}$. Simple telescoping product shows that the stored item has the same distribution as an item sampled from the distribution $\{\bar{w}_1, ..., \bar{w}_n\}$, where $\bar{w}_i = \frac{w_i}{\sum_{j=1}^{n}w_j}$.}
Since we need to sample $\eta t \leq \poly(\frac{k}{\veps})$ points in step~(3), reservoir sampling takes $O \left( \poly(\frac{k}{\veps}) \cdot \log n \right)$ space. 
Given a sample $M$, steps (5)-(6) can be implemented in $O(|M|^{k \tau})$ space, where $\tau = O(\frac{1}{\veps})$. This can be summarised formally as the following useful lemma that we will prove in Section~\ref{sec:streaming} (we assume that storing a point accounts for one unit of space).

\begin{lemma}
The algorithm \GC can be implemented using $2$-passes over the input data while maintaining space of $O(f(k, \veps) \cdot \log{n})$, where $f(k, \eps) = \left( \frac{k}{\veps} \right)^{O(\frac{k}{\veps})}.$
\end{lemma}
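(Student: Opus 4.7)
The plan is to split the work across exactly two passes over the stream, arranging that everything except $C$ itself is done in a single second pass by running all of the outer iterations in parallel.

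\textbf{Pass~1 (obtain $C$).} We invoke the 1-pass streaming constant-factor $k$-means algorithm of Braverman et al.~\cite{brav11} on the stream to obtain an $\alpha$-approximate (bi-criteria) center set $C$ with $|C| = O(k)$, using $\poly(k)\cdot \log n$ space. At the end of Pass~1 we keep $C$ in memory; since $|C| = O(k)$, this uses $O(k)$ units of space.

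\textbf{Pass~2 ($D^2$-sampling via reservoirs).} The $2^t$ iterations of the outer loop of \GC are mutually independent, so rather than executing them one at a time (which would cost $2^t$ passes), we run them in parallel during a single second pass. For each iteration we need $\eta t$ independent $D^2$-samples from $X$ with respect to $C$, giving a total of $N \leq 2^t \cdot \eta t$ samples overall. We maintain $N$ independent weighted reservoirs, where reservoir $j$ stores at most one candidate point and a running total $W_j$ of weights seen so far. When a stream element $x$ arrives, we compute $w(x) = \min_{c \in C} \|x-c\|^2$ (which requires only the already-stored $C$), update each $W_j \leftarrow W_j + w(x)$, and with probability $w(x)/W_j$ replace reservoir $j$'s stored point with $x$. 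By the standard telescoping-product argument for weighted reservoir sampling (see~\cite{vitter85}), the stored point in each reservoir is distributed exactly according to the $D^2$-distribution with respect to $C$, and independent coin tosses across reservoirs yield independent samples. Each reservoir costs one stored point plus $O(\log n)$ bits for the running total and random bits, so Pass~2 uses $O(N \log n)$ space.

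\textbf{Post-processing (no further passes).} After Pass~2, we have $2^t$ multi-sets $M$, to each of which we append $\tfrac{128t}{\veps}$ copies of every center in $C$; appending is free since $C$ is already stored. For each $M$ we enumerate all tuples of $t$ disjoint subsets $S_1,\ldots,S_t$ of size $\tau = 128/\veps$, and for each such tuple we compute the centroids $\mu(S_i)$ and append $(\mu(S_1),\ldots,\mu(S_t))$ to $\L$. The number of such tuples per outer iteration is at most $|M|^{\tau t}$.

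\textbf{Accounting.} With $\eta t = \poly(k/\veps)$, $|M| = \poly(k/\veps)$, and $\tau t \leq 128 k/\veps$, we get $N \leq 2^t \cdot \poly(k/\veps) \leq (k/\veps)^{O(k/\veps)}$ and $|\L| \leq 2^t \cdot |M|^{\tau t} \leq (k/\veps)^{O(k/\veps)}$. Setting $f(k,\veps) = (k/\veps)^{O(k/\veps)}$, every stored object (the reservoirs, the multi-sets $M$, and the list $\L$) fits within $O(f(k,\veps)\cdot \log n)$ space, proving the claim.

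\textbf{Main obstacle.} The delicate point is that the $D^2$-distribution depends on the total weight $\Phi(C,X)$, which is only known after the full pass. This is precisely what weighted reservoir sampling circumvents via the telescoping identity, so the obstacle is really just bookkeeping: verifying that running all $2^t$ iterations and all $\eta t$ samples per iteration in parallel preserves independence and does not blow up the space beyond $(k/\veps)^{O(k/\veps)} \log n$.
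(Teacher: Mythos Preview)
Your proposal is correct and follows essentially the same approach as the paper: a first pass invoking Braverman et al.\ to obtain $C$, a second pass performing all $2^t$ iterations of $D^2$-sampling in parallel via weighted reservoir sampling, and offline enumeration of the subsets to build $\L$. Your write-up is somewhat more explicit about the reservoir mechanics and the space accounting, but the underlying decomposition and the key ideas are identical to the paper's proof.
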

Let us now see how to design a streaming PTAS for a constrained $k$-means problem using the above lemma.
Let $\mathcal{P}^{\C}$ denote the partition algorithm for this constrained problem and suppose there is a streaming version $\mathcal{SP}^{\C}$ of this partition algorithm. 
We will use the $2$-pass streaming version of the \GC algorithm to output the list $\L$. We will then use $\mathcal{SP}^{\C}$ on each element of $\L$ (independently) and  pick the best solution. Since $|\L|$ is small, so is the space requirement. 
From the previous discussion, we know that (whp) we are guaranteed to obtain a $(1+\veps)$-approximate solution. Hence we get a constant pass streaming PTAS. 
So, as long as there is a streaming partition algorithm for a constrained $k$-means problem, there is also a streaming PTAS. 
Now the question is whether there are constrained $k$-means problems for which such streaming partition algorithms can be designed?
Interestingly, we can design such streaming partition algorithms for four out of the six constrained $k$-means problems in Table~\ref{table:1}.
Such streaming PTAS exist~\cite{FrahlingSohler05,FeldmanMS07} for the $k$-means problem (i.e., the classical, unconstrained version) based on the notion of {\em coreset}. 
The following is our  main result on streaming algorithms, details appear in Section~\ref{sec:streaming}. Here, $\Delta$ is the {\em aspect ratio}, i.e., $\Delta = \frac{\max_{p \in X, c \in C}||p-c||}{\min_{p \in X\setminus C, c \in C}||p-c||}$.

\begin{theorem}
There is a $(1 + \veps)$-approximate, $4$-pass, streaming algorithm for the following constrained $k$-means clustering problems that uses $O(f(k, \veps)\cdot (\log{\Delta} + \log n))$-space and $O(d \cdot f(k, \veps))$ time per item, where $f(k, \veps) = (\frac{k}{\veps})^{O(\frac{k}{\veps})}$:
\begin{enumerate}
\item $k$-means clustering\footnote{The classical $k$-means problem can also be seen as a constrained $k$-means problem where there are no constraints.}
\item $r$-gather $k$-means clustering
\item $r$-capacity $k$-means clustering
\item Fault tolerant $k$-means clustering
\item Semi-supervised $k$-means clustering
\end{enumerate}
Further, the space requirement can be improved to $O(f(k, \veps)\cdot \log n)$ using a 5-pass streaming algorithm. 
\end{theorem}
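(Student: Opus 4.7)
The plan is to chain three streaming building blocks into the four passes claimed. In the first pass, run the one-pass logspace streaming constant-factor algorithm of Braverman et al.~\cite{brav11} to produce a bicriteria solution $C$ with $|C| = O(k)$ and $\Phi(C, X) \le \alpha \cdot OPT$; this is the sole source of the $\log \Delta$ term in the space. In passes 2--3, feed $C$ into the two-pass streaming realization of \GC guaranteed by the preceding lemma, producing the list $\L$ of $k$-center sets with $|\L| = (k/\veps)^{O(k/\veps)}$. By Theorem~\ref{thm:2}, after absorbing a constant number of independent repetitions into $f(k,\veps)$, with high probability $\L$ contains some $\mathcal{C}$ satisfying $\psi(\mathcal{C}, \X) \le (1+\veps)\cdot OPT$ where $\X$ is the (unknown) optimal constrained clustering. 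In the fourth pass, for every candidate $\mathcal{C} \in \L$ in parallel, run a one-pass streaming partition algorithm $\mathcal{SP}^{\C}$ that evaluates the cost of the best feasible clustering achievable from $\mathcal{C}$ under the relevant constraint, and emit the candidate of minimum feasible cost.

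The substantive content, and the main obstacle, is designing such a one-pass streaming partition algorithm for each of the five constraints. For plain $k$-means (1) the partition is Voronoi and its cost $\sum_{x \in X} \min_{c \in \mathcal{C}} \|x - c\|^2$ is computed incrementally; for fault-tolerant $k$-means (4) one sums, for each $x$, the squared distances to its $l$ nearest centers in $\mathcal{C}$; and for semi-supervised $k$-means (6), since each point's target-clustering label arrives with it, the cost $\alpha\Delta(\X) + (1-\alpha)\,Dist(\X',\X)$ decomposes additively over the stream, with the optimum permutation between our clustering and the target clustering determined by a $k \times k$ assignment problem solved after the pass. The hard cases are the capacitated variants (2, 3): the optimal partition is the solution of a min-cost transportation instance on the $k$ fixed centers with lower ($r$-gather) or upper ($r$-capacity) bounds $r$ and weights equal to squared point-to-center distances. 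Since an explicit per-point assignment cannot be stored in logspace, the key observation is that for any fixed candidate $\mathcal{C}$ the optimum of this transportation LP is determined entirely by aggregated statistics of the stream at the granularity of the $k$ centers; maintaining, per candidate, the empirical histogram of points indexed by their ordering of squared distances to the centers in $\mathcal{C}$ is a sufficient statistic of $\poly(k)$ size from which the min-cost flow is reconstructed after the pass. Multiplying this $\poly(k)$ sketch by $|\L|$ candidates yields the claimed overall space bound $O(f(k,\veps)(\log \Delta + \log n))$, and each incoming point updates every candidate's sketch in $O(d)$ arithmetic operations, giving $O(d \cdot f(k,\veps))$ time per item.

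Finally, for the five-pass refinement that removes the $\log \Delta$ dependence, the plan is to insert one extra upfront pass that builds a streaming coreset of size $\poly(k/\veps) \cdot \log n$ via the merge-and-reduce framework of Feldman--Monemizadeh--Sohler~\cite{FeldmanMS07}; this coreset construction is aspect-ratio oblivious. A constant-factor solution $C$ is then computed in memory on the coreset (in place of the streaming call to~\cite{brav11}), and the remaining four passes proceed exactly as above on the original stream so that the \GC guarantee of Theorem~\ref{thm:2} and the correctness of the streaming partition algorithms continue to hold for the true input. This eliminates the $\log \Delta$ factor in the space bound at the cost of one additional pass, giving the five-pass, $O(f(k,\veps) \cdot \log n)$-space variant of the theorem.
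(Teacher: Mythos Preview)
Your high-level chaining of passes matches the paper's framework, but there are two genuine gaps.

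First, your streaming partition algorithm for $r$-gather/$r$-capacity does not work. The ``empirical histogram of points indexed by their ordering of squared distances to the centers'' is \emph{not} a sufficient statistic for the min-cost transportation instance, even up to a $(1+\veps)$ factor. Two points can share the same ordering of the $k$ centers yet have arbitrarily different reassignment costs: if both have $c_1$ nearest and $c_2$ second, but one has $\|x-c_2\|^2 = 2$ and the other $\|x-c_2\|^2 = 2^{100}$, a capacity violation at $c_1$ must evict the former, and no ordering-only histogram distinguishes them. The paper instead discretizes the \emph{distance values themselves} into $(1+\veps)$-multiplicative buckets and builds a compressed ``hyperbucket'' bipartite graph representing $(X,C)$ (Theorem~\ref{thm:compr}); the number of non-empty hyperbuckets is bounded by $O(k\cdot 6^k\log\Delta + k^k\log^k(1/\veps))$, and the min-cost flow is solved on that compressed graph. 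This hyperbucket construction---not the Braverman et al.\ algorithm, which the paper records as using $O(k\log n)$ space---is where the $\log\Delta$ term actually originates.

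Second, because you misattribute the $\log\Delta$ dependence, your five-pass refinement does not remove it: swapping Braverman et al.\ for a coreset-derived constant-factor solution leaves the partition step untouched, and that is where $\log\Delta$ enters. The paper's fix is entirely different: it spends one extra pass to learn $d^\star=\max_x d_x$, guesses (over the at most $k^2$ pairwise center distances) the largest scale $u$ relevant to the optimal assignment, and contracts all point-to-center distances below $u/n^2$ to zero; this forces the effective aspect ratio to be polynomial in $n$, so $\log\Delta$ becomes $O(\log n)$ inside the hyperbucket bound (Theorem~\ref{thm:compr1}). As a smaller matter, your pass accounting is off: the two passes of Lemma~\ref{lem:str} already \emph{include} the Braverman pass, so your passes~1--3 double-count; the paper spends two passes to produce $\L$ and then two more on the partition procedures $\bar{\mathcal P}^{\mathbb C}$ (cost evaluation) and $\mathcal P^{\mathbb C}$ (outputting the clustering from the stored flow on the compressed graph).
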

Note that two constrained versions of constrained $k$-means problems from Table~\ref{table:1} are missing from the theorem above. These are the {\em chromatic $k$-means clustering} and the {\em $l$-diversity clustering} problem. In Section~\ref{sec:streaming}, we will argue that deterministic logspace streaming algorithms for these problems are not possible.

\paragraph*{Comparison with Coreset based streaming algorithms}
Streaming {\em coreset} constructions provide another approach to designing streaming algorithm for the $k$-means problem. 
An $(\veps, k)$ coreset of a dataset $X \subset \R^d$ is a weighted set $S \subset \R^d$ along with a weight function $w:S \rightarrow \R^{+}$ such that for any $k$-center-set $C$, we have:
$$
\lvert \sum_{s \in S} \min_{c \in C}w(s) \cdot \norm{s - c}^2 - \sum_{x \in X} \min_{c \in C}\norm{x - c}^2\rvert \leq \veps \cdot \sum_{x \in X} \min_{c \in C}\norm{x - c}^2.
$$
So, it is sufficient to find good $k$-center-set for a coreset $S$ (instead of the dataset $X$). 
There exists one-pass streaming coreset construction~\cite{FeldmanMS07} that uses $poly(k, \frac{1}{\veps}, \log{n})$ space and outputs a coreset of size $poly(k, \frac{1}{\veps}, \log{n})$.
Using this, one can design a single-pass streaming algorithm for the $k$-means problem by first running the streaming algorithm to output a coreset and then finding a good $k$ center set for the small coreset. If the output is supposed to be a clustering, then we will need to make another pass over the data. 
Note that the same idea of working on coreset does not trivially carry over to the constrained versions of $k$-means as there are additional constraints. 
However, there is a specific geometric coreset construction which works for constrained versions of $k$-means. This is one of the first coreset constructions for $k$-means by Har-Peled and Mazumdar~\cite{Har-PeledM04} where the points in the coreset are such that the sum total of the distance of the data points to the nearest coreset point is small. The weight of a coreset point is simply the number of data points for which the coreset point is the closest. So, a coreset point {\em represents} a subset of data points.
Schmidt \etal~\cite{sss20} used this construction for a contrained version called Fair $k$-means.
This coreset construction can be performed in a single pass over the data. 
The coreset size is $O(k \veps^{-d} \log{n})$ and it can be computed in as much space using ideas developed later (e.g., ~\cite{fgsss13}). 
Even though this gives a one-pass algorithm for producing a good center set (two passes for producing clustering), the space requirement is exponentially large in the dimension.
Fortunately, in a more recent development by Makarychev \etal~\cite{mmr18} showed that  the $k$-means cost of {\em any} clustering is preserved up to a factor of $(1 + \veps)$ under a projection onto a random $O \left( \frac{\log{(k/\veps)}}{\veps^2}\right)$-dimensional subspace. 
This result when combined with the geometric coreset construction of Har-Peled and Mazumdar~\cite{Har-PeledM04} gives a one-pass, $O \left( \left( \frac{k}{\veps}\right)^{\frac{1}{\veps^2}} \cdot \log{n}\right)$-space algorithm for producing a good $k$-center-set for {\em any} constrained version of the $k$-means problem.
Even though the space bound has a slightly worse dependency on $1/\veps$ than our list-$k$-means based idea, the dependency on $k$ and number of passes is much better.
Indeed, we overlooked this connection with coreset of Har-Peled and Mazumdar and dimension reduction of Makarychev \etal in the previous version when we were designing our list-$k$-means based streaming algorithms and were made to realise this at a later stage of this work.
At this point, all we can say is that designing streaming algorithm based on list-$k$-means is another way of approaching constrained $k$-means problem.

\subsection{Algorithm under stability/separation}
The worst-case complexity of the $k$-means problem is well understood. 
As discussed earlier, the problem is $\mathsf{NP}$-hard and $\mathsf{APX}$-hard.
Hence, various {\em beyond worst-case} type results have been explored in the context of the $k$-means problem and one such direction is clustering under some ``clusterability" condition. 
That is, design algorithms for datasets that satisfy some mathematical condition that captures the fact that the data is ``clusterable" or in other words the data has some meaningful clusters.
Clusterability is captured in various ways using notions such as ``separability" and ``stability".
Separability means that the target clusters are separated in some geometrical sense and stability means that the target clustering does not change under small perturbations of the input points. 
Separability and stability are closely related in various contexts where one implies the other.
A lot of work has been done the area of algorithm design for the $k$-means problem under various clusterability conditions.

The early notions of separation conditions were based on the cost of the optimal $k$-means solution. 
These were defined in the works of Ostrovsky \etal~\cite{orss} and Kumar \etal~\cite{kss}.
The main idea here is to study the behaviour of the optimal $k$-means cost $\Gamma_{X}(k)$ as a function of $k$. 
Clearly, $\Gamma_X(.)$ is a decreasing function for any dataset $X$ since the optimal $k$-means cost cannot increase as $k$ increases.
If the value of $\Gamma_X(i)$ is significantly smaller than $\Gamma_X(i-1)$, then it makes sense to cluster into $i$ clusters than $(i-1)$ clusters.
This idea can be used to estimate the ``right" value of $k$, the number of clusters, in many practical scenarios where the number of clusters cannot be a-priori determined.
This separation condition is commonly referred to by the name {\em ORSS property} (based on the authors of the paper~\cite{orss}) and the irreducibility property~\cite{kss}.
This is formally defined below.

\begin{definition}[$(1+\gamma)$-irreducibility]
Let $\gamma > 0$.
A $k$-means instance $(X, k)$ is said to be $(1+\gamma)$-irreducible if $OPT_{k-1} \geq (1+\gamma) \cdot OPT_k$, where $OPT_i$ denotes the optimal $i$-means cost for the dataset $X$.
\end{definition}
Subsequently, a number of such cost-based separation notions were defined and algorithms under such notions were given. 
This includes {\em weak deletion stability} by Awasthi \etal~\cite{abs10}, {\em approximation stability} by Balcan \etal~\cite{bbg}, and $\beta$-distributed property by Awasthi \etal~\cite{abs10} defined next.

\begin{definition}[$\beta$-distributed]
A $k$-means instance $(X, k)$ is called $\beta$-distributed iff the following holds for any optimal clustering $\{X^{\star}_1, ..., X^{\star}_k\}$:
\[
\forall i, \forall x \notin X^{\star}_i, ||x - \mu(X^{\star}_i)||^2 \geq \beta \cdot \frac{\OPT}{|X_i^{\star}|}.
\]
\end{definition}
We will discuss these stability properties and their relationship in detail in Section~\ref{sec:fpt-as}.
It can be argued that the $\beta$-distributed property of Awasthi \etal~\cite{abs10} is the weakest separation property among the ones mentioned above.
Hence, any result for datasets satisfying the $\beta$-distributed condition will have consequences for datasets satisfying stronger conditions.
So the question is: are there good algorithms for datasets under this condition? 

Awasthi \etal~\cite{abs10} gave a {PTAS} for the $k$-means/median problems on datasets that satisfy the $\beta$-distributed assumption. 
The running time has polynomial dependence on the input parameters $n, k, d$ and exponential dependence on $\frac{1}{\beta}$ and $\frac{1}{\veps}$ ($\veps$ is the accuracy parameter). 
Even though they showed that the super-polynomial dependence on $\frac{1}{\beta}$ and $\frac{1}{\veps}$ cannot be improved,  improving the dependence on other input parameters was left as an open problem. 
In this work, we address this open problem by giving a faster {PTAS} for the $k$-means problem under the $\beta$-distributed notion. 
The running time of the algorithm for the $k$-means problem by Awasthi \etal~\cite{abs10} is $O(dn^3) (k \log{n})^{\poly(\frac{1}{\beta}, \frac{1}{\veps})}$.
We improve the running time to $O \left(dn^3  \left( \frac{k}{\veps} \right)^{O(\frac{1}{\beta \veps^2})} \right)$.
Note that due to our improvement in running time, our algorithm is also a Fixed Parameter Tractable Approximation Scheme (FPT-AS) for the problem with parameters $k$ and $\beta$. Moreover, the running time does not have an exponential dependence on $k$ that is typically the case for such FPT approximation schemes for general datasets.
We formally state our result as the following theorem. 
We shall discuss the proof of this theorem in Section~\ref{sec:fpt-as}.

\begin{theorem}\label{thm:fpt-as-main}
Let $\veps, \beta > 0$, $k$ be a positive integer, and let $X \subset \R^d$ be a $\beta$-distributed dataset. 
There is an algorithm that takes as input $(X, k, \veps, \beta)$ and outputs a $k$-center set $C$ such that $\Phi(C, X) \leq (1 + \veps) \cdot \OPT$ and the algorithm runs in time $O \left(dn^3  \left( \frac{k}{\veps} \right)^{O(\frac{1}{\beta \veps^2})} \right)$.
\end{theorem}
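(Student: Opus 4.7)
The plan is to combine algorithm \GC with a structural consequence of the $\beta$-distributed hypothesis that reduces the task of finding $(1+\veps)$-approximate centers for all $k$ clusters to finding good centers for only a constant (in $k$) number of ``bad'' clusters. I would first compute a polynomial-time $\alpha$-approximate solution $C$ for the (unconstrained) $k$-means instance via a known constant-factor algorithm such as the local search of Kanungo \etal~\cite{KanungoMNPSW02}, giving $|C|=k$ and $\Phi(C,X)\leq \alpha\cdot\OPT$ in time $O(dn^3)$. For each optimal cluster $X_i^*$, let $d_i := \min_{c\in C}\|c-\mu(X_i^*)\|$, and call $X_i^*$ \emph{bad} if $|X_i^*|\cdot d_i^2$ exceeds a carefully chosen threshold of order $\beta\veps\cdot\OPT$, and \emph{good} otherwise. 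Let $B\subseteq[k]$ index the bad clusters.

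The key structural claim, following the analysis of Awasthi \etal~\cite{abs10}, is that under the $\beta$-distributed condition $|B|\leq t$ with $t=O\!\left(\tfrac{1}{\beta\veps}\right)$, independently of $k$. The heuristic argument: by the Huygens identity $\Phi(\{c\},X_i^*)=\Delta(X_i^*)+|X_i^*|\cdot\|c-\mu(X_i^*)\|^2$ and the relaxed triangle inequality $\|c-\mu(X_i^*)\|^2\leq 2\|x-c\|^2+2\|x-\mu(X_i^*)\|^2$ applied pointwise, one shows $\sum_i |X_i^*|d_i^2\leq 2(\alpha+1)\OPT$. The $\beta$-distributed condition forces the optimal centroids $\mu(X_i^*)$ to be mutually well-separated at scale $\sqrt{\beta\cdot\OPT/|X_i^*|}$, so each $c\in C$ can serve as a close representative for at most one $X_i^*$; a charging argument against this $\beta$-safe-ball geometry then produces the promised bound on $|B|$. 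Rigorously carrying out this charging is the main technical obstacle, and is where the $\beta$-distributed hypothesis enters in an essential way. A convenient by-product is that the threshold defining ``good'' guarantees $\sum_{i\notin B}(\Phi(\{c_i\},X_i^*)-\Delta(X_i^*))\leq \tfrac{\veps}{4}\OPT$, where $c_i := \arg\min_{c\in C}\|c-\mu(X_i^*)\|$.

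Given $t$, the algorithm runs \GC$(X,C,\veps/2,t)$ to produce a list $\L$ and then enumerates every pair $(\mathcal{C},C')$ with $\mathcal{C}\in\L$ and $C'\subseteq C$ of size $k-t$, outputting the $k$-center set $\mathcal{C}\cup C'$ of minimum cost evaluated by Voronoi partitioning. For correctness, apply Theorem~\ref{thm:2} to the (unknown) set $\{X_i^*\}_{i\in B}$: with probability at least $3/4$, $\L$ contains some $\mathcal{C}^\star$ with
\[
\psi\bigl(\mathcal{C}^\star,\{X_i^*\}_{i\in B}\bigr)\leq \bigl(1+\tfrac{\veps}{4}\bigr)\sum_{i\in B}\Delta(X_i^*)+\tfrac{\veps}{4}\cdot\OPT.
\]
Taking $C'=\{c_i : i\notin B\}$ (which is one of the enumerated subsets) and combining with the good-side bound yields total cost at most $(1+\tfrac{\veps}{4})\sum_{i\in B}\Delta(X_i^*)+\sum_{i\notin B}\Delta(X_i^*)+\tfrac{\veps}{2}\OPT\leq(1+\veps)\OPT$ after a constant rescaling of $\veps$. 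For the running time, the enumeration has $|\L|\cdot\binom{k}{t}$ candidates, each evaluated in $O(dnk)$ time; since $|\L|=(k/\veps)^{O(t/\veps)}=(k/\veps)^{O(1/(\beta\veps^2))}$ and $\binom{k}{t}\leq k^t=(k/\veps)^{O(1/(\beta\veps))}$ is absorbed into this, the total running time is $O\bigl(dn^3\cdot(k/\veps)^{O(1/(\beta\veps^2))}\bigr)$, as claimed.
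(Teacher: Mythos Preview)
Your proposal has a genuine gap in the ``good-side'' bound. You claim as a ``convenient by-product'' of the threshold that
\[
\sum_{i\notin B}\bigl(\Phi(\{c_i\},X_i^*)-\Delta(X_i^*)\bigr)=\sum_{i\notin B}|X_i^*|\,d_i^2\ \le\ \tfrac{\veps}{4}\,\OPT,
\]
but with a threshold of order $\beta\veps\cdot\OPT$ this sum is only bounded by $(k-|B|)\cdot O(\beta\veps)\cdot\OPT$, which is not $O(\veps)\cdot\OPT$ unless $k\beta=O(1)$. No choice of threshold makes both claims hold simultaneously: lowering the threshold to $\Theta(\veps\,\OPT/k)$ fixes the good-side sum but blows $|B|$ up to $\Theta(k/\veps)$, which destroys the list-size bound. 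A concrete counterexample: take $k$ well-separated clusters, each consisting of two points at unit distance, so $\Delta(X_i^*)=1/2$ and $\OPT=k/2$; placing each approximate center at one endpoint of its pair gives a $2$-approximation with $|X_i^*|\,d_i^2=1/2$ for every $i$. For large enough inter-cluster spacing this instance is $1$-distributed, yet for $k$ moderately large every cluster is ``good'' under your threshold while $\sum_i|X_i^*|\,d_i^2=\OPT$, not $\veps\,\OPT/4$. The charging/safe-ball argument you sketch can show the $c_i$'s are distinct for good clusters, but it does not rescue the additive error bound.

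The paper's proof is structurally different precisely here. It does not attempt to reuse the approximate centers $C$ for the non-expensive clusters. Instead it partitions by the intrinsic cost $\Delta(X_i^*)$ (``expensive'' means $\Delta(X_i^*)>\Theta(\beta\veps)\cdot\OPT$, giving at most $O(1/(\beta\veps))$ expensive clusters by averaging), runs \GC\ to produce candidate center sets for the expensive clusters, and then, for each candidate $C_{exp}$ in the list, invokes the algorithm of Awasthi \etal\ (Figure~2 of~\cite{abs10}) as a black box to find $(1+\veps)$-good centers for the cheap clusters. That subroutine is where the $\beta$-distributed hypothesis is used in an essential way (cheap clusters are tightly concentrated relative to their $\beta$-separation, so they can be recovered by their procedure), and it is also the source of the $O(dn^3)$ factor in the running time, not the computation of the initial constant-factor solution. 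Your plan to replace this subroutine by simply enumerating $(k-t)$-subsets of $C$ and Voronoi-evaluating does not yield a $(1+\veps)$-approximation.
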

Our running time improvements over the algorithm of Awasthi \etal~\cite{abs10} comes from using a faster algorithm to find good centers for a few optimal clusters called ``expensive clusters" in the terminology used by Awasthi \etal in their analysis.
They had pointed out that if there were a faster algorithm for finding good centers for a constant number of clusters that they call ``expensive clusters", then the overall running time of their algorithm could be significantly improved.
This is precisely what our \GC algorithm allows us to do. 
The \GC algorithm creates a list such that at least one of the elements of the list is a set of good centers for the expensive clusters.
So, one can execute the algorithm of Awathi \etal repeatedly for every element of the list and then pick the best solution.
The details of this are given in Section~\ref{sec:fpt-as}.

\subsection{Parallel algorithms}
We give a massively parallel {PTAS} for the classical and constrained $k$-means problems. 
This actually just comes from a close inspection of the algorithm \GC (note that for a PTAS we will use $t=k$). 
One quickly realises that most of the steps in the algorithm can be performed independently and hence the algorithm can easily be converted to a massively parallel {PTAS} for the $k$-means problem. 
The main results is given in the theorem below. 
The details of the proof are discussed in Section~\ref{sec:parallel}.

\begin{theorem}\label{thm:parallel-main}
Let $\veps > 0$, $(X, k, d)$ be a $k$-means instance, and let $C$ denote a constant $\alpha$-approximate solution of the $k$-means instance. 
Then there is a parallel algorithm in the shared memory CREW model that takes as input the $k$-means instance, $C$, and $\veps$ and outputs a $(1+\veps)$-approximate solution in parallel time $O \left( \left\lceil \frac{nd 2^{\tilde{O}(k/\veps)}}{N}\right\rceil + \frac{k}{\veps} \log{\frac{k}{\veps}}+\log{(nkd)} \right)$ with $N$ processors.
There is similar parallel algorithm in the CRCW model with running time $O \left( \left\lceil \frac{nd 2^{\tilde{O}(k/\veps)}}{N}\right\rceil + \frac{1}{\veps} +\log{(nkd)} \right)$.
For any constrained version of the $k$-means problem with partition algorithm $\mathcal{P}^{\C}$, there is a parallel algorithm with running time $O \left( \left\lceil \frac{nd 2^{\tilde{O}(k/\veps)}}{N}\right\rceil \cdot  t(n,k,d)+\frac{k}{\veps} \log{\frac{k}{\veps}} + \log{(nkd)} \right)$ in the CREW model with $N$ processors. Here, $t(.)$ denotes the running time of the partition algorithm $\mathcal{P}^{\C}$.
\end{theorem}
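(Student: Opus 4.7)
The plan is to parallelize the \GC algorithm with $t=k$ (using $t=k$ clusters for the unconstrained case; the constrained case uses the partition algorithm $\mathcal{P}^{\C}$ to score each element of $\L$). By Theorem~\ref{thm:2}, with probability at least $3/4$ the list $\L$ returned by \GC contains a $(1+\veps)$-approximate $k$-center set, so constructing $\L$ in parallel and then selecting its best element gives the desired approximation. Hence the task splits into (i) implementing \GC in parallel and (ii) selecting the best $k$-center set from $\L$.

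For (i), the $2^k$ outer iterations of \GC are fully independent and can be scheduled over the $N$ processors. Within each iteration, $D^2$-sampling is the only non-trivial primitive: compute $\min_{c\in C}\|x-c\|^2$ for every $x\in X$ using total work $O(nkd)$ and depth $O(\log(kd))$; then compute a prefix-sum array over these weights by standard parallel prefix in depth $O(\log n)$; finally draw each of the $\eta t=\poly(k/\veps)$ samples via binary search on the prefix array in depth $O(\log n)$. The enumeration of all disjoint $\tau$-sized subsets $S_1,\dots,S_t$ of $M$ is data-parallel across the $|\L|=2^{\tilde O(k/\veps)}$ tuples, and each centroid $\mu(S_i)$ can be computed by parallel reduction in depth $O(\log(\tau d))$. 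So the construction of $\L$ itself fits inside the $O(\lceil nd \cdot 2^{\tilde O(k/\veps)}/N\rceil + \log(nkd))$ budget.

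For (ii), evaluating $\Phi(\mathcal{C},X)$ for a single $\mathcal{C}\in\L$ requires $O(nkd)$ work and depth $O(\log(nkd))$ (a min over centers at each point followed by a sum over points). Amortizing the total work $|\L|\cdot O(nd) = O(nd\cdot 2^{\tilde O(k/\veps)})$ across $N$ processors yields the first term of the claimed bound, while the reduction contributes the additive $\log(nkd)$. The final minimum-cost selection over $\L$ is the only place where the memory model matters: in CREW a binary-tree reduction of depth $O(\log|\L|)=O((k/\veps)\log(k/\veps))$ is needed, whereas in CRCW an $O(1)$-depth $n$-way minimum protocol suffices, and the residual $O(1/\veps)$ term covers the depth of centroid aggregation over $\tau=O(1/\veps)$ points. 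For a constrained variant, replace each cost evaluation by a call to $\mathcal{P}^{\C}$; this multiplies the per-candidate work by $t(n,k,d)$ and hence contributes the stated factor.

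The main obstacle I expect is the careful depth accounting for the $D^2$-sampling step, which naively appears sequential: a single sample needs the full cumulative distance array, and drawing $\poly(k/\veps)$ samples one at a time would blow up the depth. The fix is to build the prefix-sum array once using parallel prefix and then perform the $\poly(k/\veps)$ binary searches independently in parallel, so that the sampling depth collapses into $O(\log n)$. A secondary subtlety is matching the CRCW bound: it relies on the standard constant-depth concurrent-write minimum over $|\L|$ values, which shaves the $O((k/\veps)\log(k/\veps))$ CREW reduction down to the $O(1/\veps)$ term coming from centroid aggregation. The remaining steps are routine parallel reductions and do not dominate the bound.
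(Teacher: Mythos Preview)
Your overall approach matches the paper's: parallelize \GC with $t=k$, handle $D^2$-sampling by computing all point-to-$C$ distances and a prefix-sum array, enumerate the $|\L|$ candidate tuples in a data-parallel fashion, score each candidate, and reduce to find the best one. The CREW analysis and the constrained-version analysis are essentially identical to the paper's.

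The CRCW step is where you diverge. You invoke a constant-depth concurrent-write minimum over $|\L|$ values and then attribute the residual $O(1/\veps)$ to ``depth of centroid aggregation over $\tau=O(1/\veps)$ points.'' That attribution is inconsistent with your own earlier claim that centroid aggregation is a parallel reduction of depth $O(\log(\tau d))$, which is absorbed into $\log(nkd)$; the $O(1/\veps)$ term does not come from there. The paper does something different and does not use a constant-depth minimum at all: it exploits the fact that an \emph{exact} minimum over $\L$ is unnecessary. Since $\Lambda=\Phi(C,X)\leq \alpha\cdot\OPT$, the interval $[\OPT,\Lambda]$ is partitioned into $O(\frac{\log\alpha}{\veps})=O(1/\veps)$ geometric buckets $S_i=((1-\veps)^{i+1}\Lambda,(1-\veps)^i\Lambda]$; each processor concurrently writes a flag into the bucket containing its candidate's cost, and a sequential scan over the $O(1/\veps)$ buckets identifies the least populated one, which already yields a $(1+\veps)$-approximate center set. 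That scan is the true source of the $O(1/\veps)$ term. Your constant-depth CRCW minimum would also work (and would in fact drop the $O(1/\veps)$ term altogether), but your stated justification for the bound is wrong.
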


\section{Preliminaries}
The $k$-means problem is defined as: given a point set $X \subseteq \R^d$ find a set $C \subset \R^d$ of $k$ points (called centers) such that the following cost function is minimised:
\[
\Phi(C, X) \equiv \sum_{x \in X} \min_{c \in C} ||x - c||^2.
\]
We will use the above cost function repeatedly in our discussion. 
Hence for simplicity, when $C = \{c\}$ is a singleton set, then we use $\Phi(c, X)$ instead of $\Phi(\{c\}, X)$.
The $1$-means problem for any dataset $X \subseteq \R^d$ has the following closed form solution: the point that minimizes the sum of squared Euclidean distances for a dataset $X \subset \R^d$ is the geometric mean (or centroid) $\mu(X)  \equiv \frac{\sum_{x \in X}}{|X|}$.
This follows from the following well-known fact.

\begin{fact}\label{fact:1}
For any $X \subset \R^d$ and $c \in \R^d$, we have $\sum_{x \in X} ||x-c||^2  = \Phi(\mu(X), X) + |X| \cdot ||\mu(X) - c||^2$.
\end{fact}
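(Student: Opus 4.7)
The plan is a direct algebraic expansion using the standard ``add and subtract $\mu(X)$'' trick. For each $x \in X$, I will write $x - c = (x - \mu(X)) + (\mu(X) - c)$ and apply the identity $||a+b||^2 = ||a||^2 + 2\langle a, b\rangle + ||b||^2$ with $a = x - \mu(X)$ and $b = \mu(X) - c$. Summing the resulting expansion over $x \in X$ produces three terms that I will handle individually.

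The first term $\sum_{x \in X} ||x - \mu(X)||^2$ equals $\Phi(\mu(X), X)$ by the definition of $\Phi$ given at the start of the preliminaries section. The third term $\sum_{x \in X} ||\mu(X) - c||^2$ is independent of $x$ and hence equals $|X| \cdot ||\mu(X) - c||^2$, which is exactly the second summand on the right-hand side of the claimed identity. So the entire content of the proof reduces to showing that the cross term $2 \langle \sum_{x \in X}(x - \mu(X)),\ \mu(X) - c \rangle$ vanishes.

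This in turn is immediate from the defining property of the centroid: $\sum_{x \in X} (x - \mu(X)) = \sum_{x \in X} x - |X| \cdot \mu(X) = 0$, where the second equality uses $\mu(X) = \frac{1}{|X|}\sum_{x \in X} x$. Thus the inner product with any vector, and in particular with $\mu(X) - c$, is zero, and combining the three terms yields the claimed identity. There is no real obstacle here; the entire argument is a two-line calculation, and the only thing to be careful about is keeping the bookkeeping of the expansion straight---no probabilistic, geometric, or combinatorial insight beyond the mean-zero property of the centroid is required.
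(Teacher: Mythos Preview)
Your proof is correct; this is the standard centroid decomposition argument. The paper itself does not prove Fact~\ref{fact:1} at all---it simply states it as a ``well-known fact'' in the preliminaries---so there is nothing to compare against, and your two-line expansion via $x - c = (x - \mu(X)) + (\mu(X) - c)$ with the vanishing cross term is exactly the expected justification.
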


Our algorithms are based on simple sampling ideas. 
The following sampling result from Inaba \etal~\cite{inaba} will be used in our analysis.
The lemma says that the centroid of a small set of uniformly sampled points from the dataset $X$ is a good center with respect to the $1$-means cost for dataset $X$.

\begin{lemma}[\cite{inaba}]\label{lemma:inaba}
Let $S$ be a set of points obtained by independently sampling $M$ points with replacement uniformly at random from a point set $X \subset \R^d$. Then for any $\delta > 0$,
\[
\pr \left[\Phi(\mu(X), X) \leq \left( 1 + \frac{1}{\delta M} \right) \cdot \Delta(X) \right] \geq (1 - \delta).
\]
\end{lemma}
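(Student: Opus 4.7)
The plan is to use Fact~\ref{fact:1} to rewrite $\Phi(\mu(S), X)$ in terms of the displacement $\|\mu(S) - \mu(X)\|^2$, bound this displacement in expectation via a short second-moment calculation, and then apply Markov's inequality to convert the expectation bound into the claimed high-probability statement. (I read the displayed inequality as $\Phi(\mu(S), X) \leq (1 + 1/(\delta M)) \cdot \Delta(X)$, matching the informal sentence preceding the statement; the $\mu(X)$ in the formula looks like a typo, since $\Phi(\mu(X), X) = \Delta(X)$ would make the bound vacuous.)

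First, applying Fact~\ref{fact:1} with $c = \mu(S)$ gives
\[
\Phi(\mu(S), X) \;=\; \Delta(X) + |X| \cdot \|\mu(X) - \mu(S)\|^2,
\]
so it suffices to control $\|\mu(X) - \mu(S)\|^2$. Write $S = \{s_1, \ldots, s_M\}$ with the $s_i$ i.i.d.\ uniform on $X$. Each $s_i$ has mean $\mu(X)$, and
\[
\E\bigl[\|s_i - \mu(X)\|^2\bigr] \;=\; \frac{1}{|X|} \sum_{x \in X} \|x - \mu(X)\|^2 \;=\; \frac{\Delta(X)}{|X|}.
\]
Since $\mu(S) - \mu(X) = \frac{1}{M}\sum_{i=1}^{M}(s_i - \mu(X))$ is a sum of independent, mean-zero vectors, the cross terms vanish in expectation and
\[
\E\bigl[\|\mu(S) - \mu(X)\|^2\bigr] \;=\; \frac{1}{M^2} \sum_{i=1}^{M} \E\bigl[\|s_i - \mu(X)\|^2\bigr] \;=\; \frac{\Delta(X)}{M \cdot |X|}.
\]
Combining the two displays, $\E\bigl[\Phi(\mu(S), X) - \Delta(X)\bigr] = \Delta(X)/M$.

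Finally, since $\Phi(\mu(S), X) - \Delta(X) = |X|\cdot\|\mu(S)-\mu(X)\|^2 \geq 0$, Markov's inequality gives
\[
\pr\!\left[\, \Phi(\mu(S), X) - \Delta(X) \;>\; \frac{\Delta(X)}{\delta M} \,\right] \;\leq\; \delta,
\]
which rearranges to the claimed bound. There is no genuine obstacle; the only subtle point is recognising that a uniform draw from $X$ has trace-covariance exactly $\Delta(X)/|X|$, so that averaging $M$ such samples cuts the expected squared displacement of the centroid estimator by a factor of $M$. This is precisely what makes the centroid of a constant-size uniform sample serve as a $(1 + O(1/M))$-approximate 1-means center in expectation, with Markov promoting this to a high-probability statement at the cost of an extra factor of $1/\delta$.
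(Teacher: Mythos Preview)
Your proof is correct and is exactly the standard argument for Inaba's lemma. Note, however, that the paper does not give its own proof of this statement: it is simply quoted from~\cite{inaba} and used as a black box, so there is nothing in the paper to compare your argument against. Your observation about the typo ($\mu(X)$ should read $\mu(S)$ in the displayed bound) is also correct.
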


The main sampling technique that we will use in all our algorithms is called $D^2$-sampling that is also known as {\em importance sampling}.

\begin{definition}[$D^2$-sampling]
Given a set of points $X \subset \R^d$ and another non-empty set of points $C \subset \R^d$, $D^2$-sampling from $X$ w.r.t. $C$ samples a point $x \in X$ with probability $\frac{\Phi(C, \{x\})}{\Phi(C, X)}$. When $C$ is empty, then $D^2$-sampling from $X$ w.r.t. $C$ is just uniform sampling from $X$.
\end{definition}

Our basic set of tools for analysis is very small and simple. 
This means that our analysis can be easily generalised for distance measures other than the Euclidean distance.
One of the properties we use in our analysis is an approximate version of the triangle inequality. 
This is stated as the following simple fact for the Euclidean distance.

\begin{fact}[Approximate triangle inequality]\label{fact:2}
For any $x, y, z \in \R^d$, we have $||x - z||^2 \leq 2 \cdot ||x - y||^2 + 2 \cdot ||y - z||^2$.
\end{fact}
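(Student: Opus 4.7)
The plan is to reduce the claim to the elementary algebraic inequality $2ab \leq a^2 + b^2$ via a single expansion of a squared norm. Introducing $u := x - y$ and $v := y - z$, so that $x - z = u + v$, I would expand $||u+v||^2 = ||u||^2 + 2\langle u, v\rangle + ||v||^2$ by bilinearity of the standard inner product. The only term requiring work is the cross term $2\langle u, v\rangle$: Cauchy--Schwarz gives $\langle u, v\rangle \leq ||u|| \cdot ||v||$, and then AM-GM (equivalently, the nonnegativity of $(||u|| - ||v||)^2$) gives $2\,||u|| \cdot ||v|| \leq ||u||^2 + ||v||^2$. Substituting back collapses the right-hand side to $2\,||u||^2 + 2\,||v||^2$, which is exactly the stated bound.

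There is no real obstacle here; an even quicker alternative would be to invoke the parallelogram law $||u+v||^2 + ||u-v||^2 = 2\,||u||^2 + 2\,||v||^2$ and simply drop the nonnegative term $||u-v||^2$. Either route is entirely self-contained and uses no property of $\R^d$ beyond its inner-product structure, which aligns with the paper's subsequent remark that the analysis can be ``easily generalised for distance measures other than the Euclidean distance'' (for a general metric the same statement holds with constant $2$ replaced by the relevant multiplicative factor coming from an approximate triangle inequality).
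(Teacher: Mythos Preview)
Your argument is correct; both the Cauchy--Schwarz plus AM--GM route and the parallelogram-law shortcut are valid and standard. The paper itself does not supply a proof of this fact at all (it is simply stated as a ``Fact'' in the preliminaries), so there is nothing to compare against, and your write-up would serve perfectly well as a justification.
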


As stated earlier, the optimal $1$-means cost of any dataset $X \subset \R^d$ is denoted by $\Delta(X) = \Phi(\mu(X), X)$.
The {\em Voronoi partitioning} of any dataset $X \subset \R^d$ with respect to center set $\{c_1, ..., c_k\}$ is a partition of $X$ into $X_1, ..., X_k$ such that $\forall i, X_i = \{x \in X | \arg\min_{c \in \{c_1, ..., c_k\}}||x - c|| = c_i\}$. 
Fact~\ref{fact:1} tells us that the Voronoi partitioning $X_1, ..., X_k$ of any dataset $X \subset \R^d$ with respect to any optimal $k$-means solution $\{c_1, ..., c_k\}$ satisfies $\mu(X_i) = c_i$.
So, an alternate way of specifying the output is any optimal Voronoi partitioning of the dataset.
Let $\X^{\star} = \{\Xst_1, ..., \Xst_k\}$ denote an optimal $k$-means clustering of a given dataset $X \subset \R^d$.
We use $\OPT$ denote the optimal cost for the $k$-means problem on the dataset $X$. 
Using previous definitions, we have $\OPT = \Delta(\X^{\star}) \equiv \sum_{i=1}^{k} \Delta(\Xst_i)$.
The notion of approximate solution for the $k$-means problem is well known. 
We may also use the notion of {\em bi-criteria approximation} where a center set $C$ is said to be an $(\alpha, \delta)$-approximate solution if $\Phi(C, X) \leq \alpha \cdot \OPT$ and $|C| \leq \delta k$.

In this work, we are not only interested in the optimal solution for the classical $k$-means problem but also for various constrained versions of the $k$-means problem. 
In the constrained versions of the $k$-means problem, an optimal clustering $\X = \{X_1, ..., X_k\}$ should satisfy certain constraints (such as $\forall i, |X_i| \geq r$) in addition to minimising the  $k$-means cost $\sum_{i=1}^{k}\Delta(X_i)$.
So in general, the constrained $k$-means problem is specified by a set of points $X \subset \R^d$, a positive integer $k$, and a set $\mathbb{C}$ of {\em feasible} clusterings of $X$.
We are given an algorithm $\mathcal{P}^{\mathbb{C}}$ which when given a set of $k$ centers $C = \{c_1, ..., c_k\}$, outputs a feasible clustering $\X = \{X_1, ..., X_k\}$ (i.e., a clustering in $\mathbb{C}$) with the least value of the cost function $\sum_{i=1}^{k} \sum_{x \in X_i} ||x - c_i||^2$.
This algorithm is called the {\em partition algorithm} with respect to the specific version of the constrained $k$-means problem. 
Ding and Xu~\cite{dx15} give such partitioning algorithms for various constrained versions of the $k$-means problem. 
We will use $OPT$ to denote the optimal constrained $k$-means cost. 
So, if $\X = \{X_1, ..., X_k\}$ are the optimal constrained $k$-means clusters, then $OPT = \Delta(\X) = \sum_{i=1}^{k} \Delta(X_i)$.
Note that for any constrained $k$-means problem on a dataset $X \subset \R^d$, $OPT$ is lower bounded by $\OPT$ (the optimal unconstrained $k$-means cost). 
As discussed earlier, we approach the constrained versions of the $k$-means problems through the {\em list $k$-means} problem defined below:
\begin{quote}
{\bf List-$k$-means}: Let $X \subset \R^d$ be the dataset and let $\X = \{X_1, ..., X_k\}$ be an arbitrary clustering of dataset $X$. Given $X$, positive integer $k$, and error parameter $\veps > 0$, find a {\em list} of $k$-center sets such that (whp) at least one of the sets gives $(1+\veps)$-approximation with respect to the cost function: 
\[
\psi(\{c_1, ..., c_k\}, \X) \stackrel{def.}{=} \min_{\textrm{permutation } \pi} \left[ \sum_{i=1}^{k} \Phi(c_{\pi(i)}, X_i) \right].
\]
\end{quote}

\section{Algorithm for the list-$k$-means problem}\label{sec:good-centers}
We discuss the algorithm \GC and its analysis this section.
Let $X \subset \R^d$ be the given dataset and let $\X = \{X_1, ..., X_k\}$ be the unknown clustering and our goal is to find (approximately) good centers for these clusters. 
As discussed earlier, since $\X$ is not given we cannot hope to output a single such $k$-center set. 
We are allowed to output a list of such center sets.
The center set that minimises the cost function $\psi(C, \X)$ is $C = \{\mu(X_1), ..., \mu(X_k)\}$ and we denote this minimum cost by $OPT = \sum_{i=1}^{k} \Phi(\mu(X_i), X_i)$.
Let $X_{j_1}, ..., X_{j_t}$ be some fixed set of $t$ clusters out of clusters $X_1, ..., X_k$.
The rest of the discussion will be with respect to these $t$ clusters.
We restate the algorithm \GC below for ease of exposition.
The algorithm takes as input dataset $X$, an $\alpha$-approximate solution $C$, error parameter $\veps$, and $t$ and outputs a list $\mathcal{L}$ of $t$-center sets.

\begin{framed}
\GC($X, C, \veps, t$)\\
\hspace*{0.6in} {\bf Inputs}: Dataset $X$, $(\alpha, \beta)$-approximate $C$, accuracy $\veps$, and number of centers $t$\\
\hspace*{0.6in} {\bf Output}: A list $\L$, each element in $\L$ being a $t$-center set\\
\hspace*{0.6in} {\bf Constants}: $\eta = \frac{2^{16} \alpha t}{\veps^4}; \tau = \frac{128}{\veps}$\\
\hspace*{0.2in} (1) \ \ \ $\L \leftarrow \emptyset$\\
\hspace*{0.2in} (2) \ \ \ Repeat $2^t$ times:\\
\hspace*{0.2in} (3)\hspace*{0.3in}  \ \ \ Sample a multi-set $M$ of $\eta t$ points from $X$ using $D^2$-sampling w.r.t. center set $C$\\
\hspace*{0.2in} (4)\hspace*{0.3in}  \ \ \ $M \leftarrow M \cup$ \{$\frac{128 t}{\veps}$ copies of each element in $C$\}\\
\hspace*{0.2in} (5)\hspace*{0.3in} \ \ \ For all disjoint subsets $S_1, ..., S_t$ of $M$ such that $\forall i, |S_i| = \tau$:\\
\hspace*{0.2in} (6)\hspace*{0.9in} $\L \leftarrow \L \cup (\mu(S_1), ..., \mu(S_t))$\\
\hspace*{0.2in} (7) \ \ \ return($\L$)
\end{framed}

We will show that the \GC algorithm behaves well (w.h.p.) for {\bf any} fixed set of $t$ clusters $X_{j_1}, ..., X_{j_t}$.
What this means is the following: Let $X_{j_1}, ..., X_{j_t}$ denote any fixed set of $t$ optimal clusters. 
The list $\L$ produced by the \GC algorithm, with high probability, will contain a $t$-center set $\mathcal{C}$ such that 
$$\psi(\mathcal{C}, \{X_{j_1}, ..., X_{j_t}\}) \leq \left(1+ \frac{\veps}{2} \right) \cdot \sum_{i=1}^{t} \Delta(X_{j_i}) + \frac{\veps}{2} \cdot OPT.$$ 
Note that $OPT$ denotes the optimal cost with respect to the clustering $\X = \{X_{1}, ..., X_{k}\}$. That is $OPT = \Delta(\X) = \sum_{i=1}^{k} \Delta(X_i)$.
We formally state our result as the next theorem. 
Note that this is the restatement of Theorem~\ref{thm:2} in the Introduction.

\begin{theorem}
Let $0 < \veps \leq 1/2$ and $t$ be any positive integer. 
Let $X_{j_1}, ..., X_{j_t}$ denote an arbitrary set of $t$ clusters.
Let $\L$ denote the list returned by the algorithm \GC($X, C, \veps, t$). 
Then with probability at least $3/4$, $\L$ contains a center set $\mathcal{C}$ such that
\[
\psi \left(\mathcal{C}, \{X_{j_1}, ..., X_{j_t}\} \right) \leq \left(1+ \frac{\veps}{2} \right) \cdot \sum_{j=1}^{t} \Delta(X_{j_i}) + \frac{\veps}{2} \cdot OPT,
\]
where $OPT = \sum_{i=1}^{k} \Delta(X_i)$.
\end{theorem}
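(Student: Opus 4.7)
The plan is to fix the $t$ clusters $X_{j_1}, \ldots, X_{j_t}$ and analyze a single iteration of the outer loop of \GC, classifying each $X_{j_i}$ as either \emph{expensive} (when $\Phi(C, X_{j_i}) \geq \gamma \cdot \Delta(X_{j_i})$ for a suitable threshold $\gamma$ depending on $\veps, \alpha, t$) or \emph{cheap}. There are $2^t$ such cheap/expensive labelings, matching the $2^t$ outer repetitions: for at least one iteration the random $D^2$-samples and the subsequent subset selections will align with the correct labeling. For that iteration, I will argue that choosing $S_i$ from the $D^2$-samples for expensive indices and $S_i$ to be $\tau$ copies of a well-chosen $c \in C$ for cheap indices yields a $t$-center set in $\L$ with the claimed approximation guarantee; the enumeration of all disjoint size-$\tau$ subsets of $M$ in step~(5) ensures that this particular choice is included in $\L$.

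For an expensive $X_{j_i}$, the probability that a single $D^2$-sample lands in $X_{j_i}$ is $\Phi(C, X_{j_i})/\Phi(C, X) \geq \gamma\,\Delta(X_{j_i})/(\alpha \cdot OPT)$; with the sample size $\eta t = \Theta(\alpha t^2/\veps^4)$, a Chernoff bound guarantees $\Omega(\tau)$ samples from $X_{j_i}$ with high probability. The $D^2$-weights inside an expensive cluster are moreover comparable to uniform up to a distortion controlled by the expensive threshold together with the $\alpha$-approximation of $C$, so a uniformly random $\tau$-subset of those samples behaves like a $\tau$-size i.i.d.\ uniform sample from $X_{j_i}$ up to $O(\veps)$ distortion. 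Lemma~\ref{lemma:inaba} then yields $\Phi(\mu(S_i), X_{j_i}) \leq (1 + \veps/4)\,\Delta(X_{j_i})$ with probability at least $1 - 1/(8t)$.

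For a cheap $X_{j_i}$, I pick $S_i$ as $\tau$ copies of a specific $c^* \in C$ (available in $M$ thanks to step~(4)), so that $\mu(S_i) = c^*$. The technical task is to exhibit a $c^*$ satisfying $\Phi(c^*, X_{j_i}) \leq \Delta(X_{j_i}) + O(\veps/t) \cdot OPT$: by Fact~\ref{fact:1} this reduces to bounding $|X_{j_i}| \cdot \|c^* - \mu(X_{j_i})\|^2$, and the naive choice of $c^*$ closest to $\mu(X_{j_i})$ combined with Fact~\ref{fact:2} loses a spurious $\Delta(X_{j_i})$ term. The resolution is a substitution argument: consider the modified cluster $\tilde X_{j_i}$ obtained by mapping each $x \in X_{j_i}$ to its nearest point in $C$; then $\mu(\tilde X_{j_i})$ is close to $\mu(X_{j_i})$ and equals a weighted average of points in $C$, at least one of which can be chosen as $c^*$. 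Summing over cheap clusters exploits $\sum_{\text{cheap}} \Phi(C, X_{j_i}) \leq \alpha \cdot OPT$ to absorb all of the resulting additive errors into the $(\veps/2) \cdot OPT$ slack.

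The main obstacle is the cheap case, specifically circumventing the factor-of-two loss from the approximate triangle inequality while keeping the summed additive error bounded by $(\veps/2) \cdot OPT$; the substitution trick together with a careful tuning of the threshold $\gamma$ is what makes it go through. A union bound over the $t$ clusters gives per-iteration success probability $\geq 1/2$ for the correctly-labeled iteration, and the $2^t$ independent repetitions then provide the remaining probability amplification, yielding the claimed overall success probability $\geq 3/4$.
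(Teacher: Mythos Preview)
Your high-level structure matches the paper, but three concrete steps fail. First, your threshold $\Phi(C, X_{j_i}) \gtrless \gamma\,\Delta(X_{j_i})$ is the wrong comparison: with it, an ``expensive'' cluster can still have $\Phi(C, X_{j_i})/\Phi(C, X)$ arbitrarily small (take $\Delta(X_{j_i}) \ll OPT$), so your Chernoff bound gives no useful lower bound on the number of $D^2$-samples landing in $X_{j_i}$. The paper's split is on $\Phi(C, X_{j})/\Phi(C, X) \gtrless \Theta(\veps/(\alpha t))$, which directly controls the sampling probability. Second, even with the right split, the claim that $D^2$-weights inside an expensive cluster are ``comparable to uniform'' is simply false: individual points of $X_{j_i}$ can sit arbitrarily close to $C$ and carry negligible $D^2$-weight, no matter what aggregate cost ratio holds. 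The paper handles this by a further near/far split inside $X_{j}$ (radius $R$ as in~(\ref{eqn:defineR})), collapsing near points onto their closest center in $C$ to form an auxiliary $X_{j}'$, proving that a good center for $X_j'$ is good for $X_j$, and then using a coupling argument (Lemma~\ref{lem:osample} through Lemma~\ref{lem:final}) to show that $D^2$-samples together with copies of $C$ contain a near-uniform $\tau$-sample from $X_{j}'$; only then does Lemma~\ref{lemma:inaba} apply.

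Third, in the cheap case, taking $S_i$ to be $\tau$ copies of a \emph{single} $c^\star \in C$ does not work: small $\Phi(C, X_{j_i})$ only says each $x \in X_{j_i}$ is near \emph{some} $c \in C$, not that they share one. Your $\mu(\tilde X_{j_i})$ is a weighted average of points in $C$, but no single vertex of that average need be close to $\mu(X_{j_i})$ (take $C$ to be two far-apart points and $\tilde X_{j_i}$ split evenly between them). The paper instead applies Inaba's lemma to $\tilde X_{j_i}$ itself, obtaining the good center as the centroid of a size-$\tau$ \emph{multi}-subset of $C$; this is exactly why step~(4) of the algorithm inserts $\Theta(t/\veps)$ copies of each $c \in C$ into $M$. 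Finally, the $2^t$ repetitions are not about ``matching labelings'' (the cheap/expensive labeling is fixed by the data, not chosen per iteration); they are pure probability amplification. The paper partitions the $\eta t$ samples of one iteration into $t$ independent blocks of size $\eta$, one per cluster, and shows each block succeeds for its cluster with probability $\geq 3/4$ (Lemma~\ref{lem:list-mainlem}); independence gives per-iteration success $\geq (3/4)^t$, and $2^t$ repetitions boost this to $\geq 3/4$.
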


WLOG, we will assume that $j_i = i$, that is the $t$ clusters $X_{j_1}, ..., X_{j_t}$ are the first $t$ clusters $X_1, ..., X_t$. 
Since, the input center set $C$ is an $(\alpha, \beta)$-approximate solution to the classical $k$-means problem on dataset $X$, we have
\begin{equation}\label{eqn:cost}
\Phi(C, X) \leq \alpha \cdot \OPT \quad \textrm{ and } \quad |C| \leq \beta k
\end{equation}
Note that the outer iteration (repeat $2^t$ times in line (2)) is to amplify the probability that the list $\L$ containing a good $t$-center set. We will show that the probability of finding a good $t$-center set in one iteration is at least $(3/4)^t$ and the theorem follows from simple probability calculation. So in the remaining discussion we will only discuss one iteration of the algorithm.
Consider the multi-set $M$ after line (3) of the algorithm. 
We will show that with probability at least $(3/4)^t$, there are disjoint (multi) subsets $T_1, ..., T_t$ each of size $\tau$ such that for every $j = 1, ..., t$, 
\begin{equation}\label{eqn:reqd}
\Phi(\mu(T_j), X_j) \leq \left(1 + \frac{\veps}{2} \right) \cdot \Delta(X_j) + \frac{\veps}{2t} \cdot OPT.
\end{equation}
Since we try out all possible subsets in step (5), we will get the desired result.
More precisely, we will argue in the following manner:
consider the multi-set  $C' = \{\frac{16t}{\veps} \textrm{ copies of each element in $C$}\}$. 
We can interpret $C'$ as a union of multi-sets $C_1', C_2', ..., C_t'$, where $C_j' = \{\frac{16}{\veps} \textrm{ copies of each element in $C$}\}$.
Also, since $M$ consists of $\eta t$ independently sampled points, we can interpret $M$ as a union of multi-sets $M_1', M_2', ..., M_t'$ where $M_1'$ is the first $\eta$ points sampled, $M_2'$ is the second $\eta$ points and so on.
For all $j = 1, ..., t$, let $M_j = C_j' \cup (M_j' \cap X_j)$.\footnote{$M_j' \cap X_j$ in this case, denotes those points in the multi-set $M_j'$ that belongs to $X_j$.}
We will show that for every $j \in \{1, ..., t\}$, with probability at least $(3/4)$, $M_j$ contains a subset $T_j$ of size $\tau$ that satisfies eqn. (\ref{eqn:reqd}). 
Note that $T_j$'s being disjoint follows from the definition of $M_j$.
It will be sufficient to prove the following lemma.

\begin{lemma}\label{lem:list-mainlem}
Consider the sets $M_1, ..., M_t$ as defined above. For any $j \in \{1, ..., t\}$, 
\[
\pr \left[\exists T_j \subseteq M_j \textrm{ s.t. } |T_j| = \tau \textrm{ and } \left(\Phi(\mu(T_j), X_j) \leq \left(1 + \frac{\veps}{2} \right) \cdot \Delta(X_j) + \frac{\veps}{2t} OPT \right)\right] \geq \frac{3}{4}.
\]
\end{lemma}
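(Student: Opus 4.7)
The plan is to exhibit, for each $j \in \{1,\ldots,t\}$, an explicit multiset $T_j \subseteq M_j$ of size $\tau$ whose centroid satisfies the claimed bound with probability at least $3/4$, via a case analysis on the ``cheapness'' of cluster $X_j$ relative to $C$. Let $c_j^* \in C$ be the point of $C$ nearest to $\mu_j := \mu(X_j)$, let $c(x)$ denote the $C$-nearest center to $x \in X$, and let $p_j := \Phi(C, X_j) / \Phi(C, X)$ be the hit probability of $X_j$ under a single $D^2$-sample. I will use two ingredients throughout. First, summing the chain $\|c_j^* - \mu_j\|^2 \leq \|c(x) - \mu_j\|^2 \leq 2\|c(x) - x\|^2 + 2\|x - \mu_j\|^2$ (the second inequality being Fact~\ref{fact:2}) over $x \in X_j$ yields
\[
|X_j| \cdot \|c_j^* - \mu_j\|^2 \;\leq\; 2\,\Phi(C, X_j) + 2\,\Delta(X_j).
\]
Second, the hypothesis on $C$ gives $\Phi(C, X) \leq \alpha \cdot \OPT \leq \alpha \cdot OPT$, so any bound of the form ``$\Phi(C, X_j)$ is a small fraction of $\Phi(C,X)$'' transfers to a bound in terms of $OPT$.

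Case A (cheap cluster): if $p_j \leq \veps^2 / (2^{16}\alpha t)$ (the threshold below which no Chernoff-style guarantee on $|M_j' \cap X_j|$ is available), then I would build $T_j$ around the $16/\veps$ available copies of $c_j^*$ in $C_j'$, padded with further elements of $C_j'$ chosen so that $\mu(T_j)$ stays close to $c_j^*$ (the factor-of-$8$ slack between $\tau$ and $16/\veps$ leaves enough room). Applying Fact~\ref{fact:1} gives $\Phi(\mu(T_j), X_j) = \Delta(X_j) + |X_j|\cdot\|\mu(T_j) - \mu_j\|^2$; the displayed inequality together with the cheapness threshold and $\Phi(C,X) \leq \alpha \cdot OPT$ force the residual $|X_j|\|\mu(T_j) - \mu_j\|^2$ into the additive $\tfrac{\veps}{2t}\,OPT$ slack.

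Case B (substantial cluster): otherwise $\eta\,p_j = \Omega(1/\veps^2)$, so a Chernoff bound guarantees $|M_j' \cap X_j| = \Omega(1/\veps^2)$ with probability $\geq 7/8$. The technical crux is now a \emph{bias correction}: conditioned on landing in $X_j$, a $D^2$-sample at location $x$ is drawn with probability proportional to $\|x - c(x)\|^2$ rather than uniformly, so Lemma~\ref{lemma:inaba} is not directly applicable. The remedy is to pad the $D^2$-samples in $T_j$ with a calibrated number of copies of $c_j^*$ drawn from $C_j'$ so that the centroid of the combined multiset becomes an unbiased estimator of $\mu_j$; once this is achieved, Lemma~\ref{lemma:inaba} plus a Markov step on the centroid variance gives $\Phi(\mu(T_j), X_j) \leq (1 + \veps/2)\,\Delta(X_j)$ with probability $\geq 7/8$. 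A union bound over ``enough samples hit $X_j$'' and ``the centroid is close to $\mu_j$'' then delivers the claimed $3/4$.

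The main obstacle is that $c(x)$ need not be constant on $X_j$: different points of $X_j$ may have different $C$-nearest centers, so the bias one must correct for varies across $X_j$ and the ``pad with copies of $c_j^*$'' prescription is not immediately the correct calibration. I would handle this by Voronoi-partitioning $X_j$ according to $C$ and observing that, for any $x \in X_j$ whose $C$-nearest center is not $c_j^*$, the quantity $\|x - c_j^*\|^2 - \|x - c(x)\|^2$ is controlled by $O(\|x - c(x)\|^2 + \|x - \mu_j\|^2)$ via Fact~\ref{fact:2}, so the aggregate heterogeneity is bounded by a constant times $\Phi(C, X_j) + \Delta(X_j)$ and can be absorbed either by the $\tfrac{\veps}{2t}\,OPT$ slack (Case A) or by the $\Omega(1/\veps^2)$ averaging (Case B).
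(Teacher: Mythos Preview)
Your Case~A has a genuine gap. If $\mu(T_j) \approx c_j^*$, then Fact~\ref{fact:1} gives
\[
\Phi(\mu(T_j), X_j) \;\approx\; \Delta(X_j) + |X_j|\cdot\|c_j^* - \mu_j\|^2,
\]
and your displayed inequality bounds the residual by $2\,\Phi(C, X_j) + 2\,\Delta(X_j)$. The $2\,\Phi(C, X_j)$ term does fit into the $\tfrac{\veps}{2t}\,OPT$ slack under the cheapness threshold, but the $2\,\Delta(X_j)$ term does not: nothing in Case~A forces $\Delta(X_j)$ to be small, and it can be $\Theta(OPT)$. Concretely, take $X_j = \{p, q\}$ with $p, q \in C$; then $\Phi(C, X_j) = 0$ so we are in your Case~A, yet either choice of $c_j^* \in \{p, q\}$ gives $\Phi(c_j^*, X_j) = \|p-q\|^2 = 2\,\Delta(X_j)$, not $(1+\veps/2)\,\Delta(X_j)$. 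The paper's Case~I avoids this by working with the \emph{full} collapsed multiset $X_j' = \{c(x) : x \in X_j\}$ rather than a single center: it shows $\mu(X_j')$ is close to $\mu_j$ (so a good center for $X_j'$ is good for $X_j$), and then invokes Inaba's lemma on $X_j'$ to find a size-$\tau$ multisubset of $X_j'$ --- hence of $C_j'$, since $C_j'$ carries enough copies of every $c \in C$ --- whose mean is a $(1+\veps/8)$-center for $X_j'$.

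Your Case~B has a related problem. Writing $\nu_j$ for the mean of the $D^2$-distribution restricted to $X_j$, your ``pad with copies of $c_j^*$'' recipe asks for $a,b$ with $a\,\nu_j + b\,c_j^* = (a+b)\,\mu_j$, i.e.\ $b(c_j^* - \mu_j) = a(\mu_j - \nu_j)$. In dimension $\geq 2$ the vectors $c_j^* - \mu_j$ and $\mu_j - \nu_j$ need not be parallel, so no scalar calibration works; a single padding center corrects bias in only one direction. Your final paragraph bounds the \emph{magnitude} of the heterogeneity by $O(\Phi(C, X_j) + \Delta(X_j))$, but the $\Delta(X_j)$ piece is again the obstruction --- it is a bias term, so ``$\Omega(1/\veps^2)$ averaging'' does not shrink it. The paper's Case~II instead introduces a radius $R$, splits $X_j$ into $X_j^{\mathrm{near}}$ and $X_j^{\mathrm{far}}$, collapses only the near points to their $c(x)$, and then uses a coupling argument (auxiliary variables $Y_u, Z_u$) to extract genuinely near-uniform samples of the resulting $X_j'$ from the $D^2$-samples together with copies of \emph{all} centers in $C$; Inaba's lemma then applies directly.
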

We prove the above lemma in the remaining discussion. 
We do a case analysis that is based on whether $\frac{\Phi(C, X_j)}{\Phi(C, X)}$ is large or small for a particular $j \in \{1, ..., t\}$.

\noindent
- \underline{\it Case-I} $\left(\Phi(C, X_j) \leq \frac{\veps}{6 \alpha t} \cdot \Phi(C, X) \right)$:
Here we will show that there is a subset $T_j \subseteq C_j' \subseteq M_j$ that satisfies eqn. (\ref{eqn:reqd}).

\noindent
- \underline{\it Case-II} $\left(\Phi(C, X_j) > \frac{\veps}{6 \alpha t} \cdot \Phi(C, X) \right)$:
Here we will show that $M_j$ contains a subset $T_j$ such that $\Phi(\mu(T_j), X_j) \leq \left(1+\frac{\veps}{2} \right) \cdot \Delta(X_j)$ and hence $T_j$ also satisfies eqn. (\ref{eqn:reqd}).

We discuss these two cases next.
The analysis is similar to the analysis of the $D^2$-sampling based algorithm for $k$-means by Bhattacharya \etal~\cite{bjk}. 
Since there are a few crucial differences, and for the sake of clarity we continue with the detailed proof in Appendix~\ref{app:list-mainlem}.

\section{Streaming algorithm for constrained $k$-means}\label{sec:streaming}
In this section, we extend the algorithms for the list-$k$-means problem to the streaming setting. In the streaming model, we are allowed to make constant number of passes over the data. However, the algorithm is allowed to maintain small amount of space. We consider the insertion only stream and measure space in terms of the number of data points being stored. The following result follows easily from prior work.

\begin{lemma}
\label{lem:str}
The algorithm \GC can be implemented using $2$-passes over the input data while maintaining space of $O(f(k, \veps) \cdot \log{n})$, where $f(k, \eps) = \left( \frac{k}{\veps} \right)^{O(\frac{k}{\veps})}.$
\end{lemma}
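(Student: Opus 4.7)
The plan is to implement the two passes as follows. In the first pass, I will run the streaming constant-factor approximation algorithm of Braverman et al.~\cite{brav11} to obtain an $(\alpha, \beta)$-approximate center set $C$; this algorithm is one-pass and uses space polylogarithmic in $n$ and polynomial in $k$, which is well within the claimed budget, and at the end of the pass $C$ itself is explicitly stored using $O(k)$ points. I then perform the second pass to realize steps (3)--(4) of \GC: for each of the $2^t$ outer repetitions (which are independent) I draw a multi-set $M$ of $\eta t = \poly(k/\veps)$ points via $D^2$-sampling w.r.t.\ $C$.

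To carry out $D^2$-sampling in a single pass with only $C$ in memory, I would maintain $\eta t$ independent reservoir samplers in parallel. Each sampler maintains (i) a single candidate point and (ii) a running scalar $W_i = \sum_{j \le i} \Phi(C, \{x_j\})$ equal to the total $D^2$-weight seen so far. When the $i$-th stream point $x_i$ arrives, each sampler replaces its stored candidate by $x_i$ with probability $\Phi(C,\{x_i\})/W_i$. The telescoping argument of~\cite{vitter85} shows that the final stored point is distributed exactly as a $D^2$-sample from $X$ w.r.t.\ $C$, so the $\eta t$ samplers collectively produce a valid multi-set $M$. Step (4) is done locally after the pass by appending $\frac{128t}{\veps}$ copies of each element of $C$ to $M$; steps (5)--(6) are pure post-processing on the stored sample and require no further pass over the stream.

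For the space bound, storing $C$ costs $O(k)$ points, each reservoir maintains one point plus an $O(\log n)$-bit running weight, and there are $\eta t$ reservoirs per outer repetition and $2^t$ independent outer repetitions; setting $t = k$ this totals $2^k \cdot \eta k = (k/\veps)^{O(k/\veps)}$ stored points together with the $O(\log n)$ counters, giving $O(f(k,\veps) \cdot \log n)$ space. The enumeration of all disjoint $\tau$-sized subsets $S_1,\dots,S_t$ in step (5) and the computation of their centroids is done in-memory from the stored sample and uses no additional stream access; the list $\L$ itself need not be stored in full — each element can be emitted (or piped to a downstream partition algorithm) as it is produced, so $\L$'s nominal size does not enter the space bound.

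I do not foresee a substantive obstacle: the main subtlety is simply that $D^2$-sampling requires knowing the normalizing constant $\Phi(C,X)$, which is why the sampling cannot be done in the same pass as computing $C$ — hence the need for two passes — and which is precisely what reservoir sampling circumvents by keeping the weighted running sum $W_i$ rather than the global total. The independence of the $2^t$ outer rounds ensures that running them in parallel only inflates the space by a multiplicative factor already absorbed into $f(k,\veps)$, and the $\poly(k/\veps) \cdot \log n$ cost of encoding weights and indices is likewise subsumed.
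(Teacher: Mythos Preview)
Your proposal is correct and follows essentially the same approach as the paper: a first pass using Braverman et al.~\cite{brav11} to obtain the constant-factor center set $C$, followed by a second pass implementing $D^2$-sampling via reservoir sampling, with the $2^t$ independent outer repetitions run in parallel and absorbed into the $f(k,\veps)$ factor. The only cosmetic difference is that the paper explicitly accounts for storing the list $\L$ (bounding steps~(5)--(6) by $O(|M|^{k\tau})$ space), whereas you argue $\L$ can be streamed out; since $|\L| \le f(k,\veps)$ anyway, both accountings land within the stated bound.
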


\begin{proof}
We show how to implement the steps of this procedure in the streaming setting. The algorithm of Braverman et al.~\cite{brav11} gives a single pass constant factor algorithm for the $k$-means while maintaining space of $O(k \cdot \log n)$. We use this procedure to get the initial set $C$ of centers in the \GC procedure. 

Given $C$, we need to show how to implement step~(3) of the procedure in a streaming manner (the $2^t$ repetitions can be performed in parallel, this appears as a multiplicative factor in the space used). The probability of sampling a point $p$ is proportional to $\Phi(C, p)$, with the constant of proportionality being $\Phi(C, X)$. The sampling can be performed using the ideas of ``reservoir sampling'' (see e.g.~\cite{vitter85} and~\cite{bijk18} for a more detailed discussion on the space usage in the streaming setting).\footnote{\underline{\it Reservoir sampling}: Given a stream of $n$ data items with associated weights $w_1, ..., w_n$, reservoir sampling stores a single item while making a pass over the data. The $i^{th}$ data item replaces the stored item with probability $\frac{w_i}{\sum_{j=1}^{i}w_j}$. Simple telescoping product shows that the stored item has the same distribution as an item sampled from the distribution $\{\bar{w}_1, ..., \bar{w}_n\}$, where $\bar{w}_i = \frac{w_i}{\sum_{j=1}^{n}w_j}$.}
Since we need to sample $\eta t \leq \poly(\frac{k}{\veps})$ points in step~(3), reservoir sampling takes $O \left( \poly(\frac{k}{\veps}) \cdot \log n \right)$ space. Given a sample $M$, steps (5)-(6) can be implemented in $O(|M|^{k \tau})$ space, where $\tau = O(\frac{1}{\veps})$.  \qed
\end{proof}

Now we use the above result to give constant pass streaming algorithms for the constrained $k$-means problems. Recall that an instance of the constrained $k$-means problem is specified by a set of valid clusterings (into $k$ disjoint parts) of the input set of points. Our algorithms use the following subroutine (this subroutine is also needed for all known algorithms for constrained $k$-means, see e.g.~\cite{dx15,bjk}): let $C=\{c_1, \ldots, c_k\}$ be a set of 
$k$ points (or ``centers'') and $X$ be the input set of points. Then there is a procedure (called ``partition algorithm'') $\mathcal{P}^{\mathbb{C}}(X, C)$ that outputs a feasible partition of $X$ into $X_1, \ldots, X_k$ such that $ \sum_{i=1}^k \sum_{p \in X_i} ||p-c_i||^2$ is minimized. 
In the streaming setting, it will be useful to break the partition algorithm  into two parts.
Let $\bar{\mathcal{P}}^{\C}(X, C)$ denote a procedure that only outputs the {\em cost} of the optimal clustering. 
Note that this may be simpler than producing the optimal clustering.
Let $D$ denote the data structure created by $\bar{\mathcal{P}}^{\C}(X, C)$ during its execution. 
Then $D$ can be used (instead of $C$) by the partitioning procedure $\mathcal{P}^{\C}$. Consider the following corollary in this framework.

\begin{corollary}
\label{cor:str}
Suppose the procedure $\mathcal{\bar{P}}^{\mathbb{C}}(X, C)$ can be implemented in a $\bar{c}$-pass streaming manner using $\bar{S}$ space and $\mathcal{P}^{\mathbb{C}}(X, D)$ can be implemented in a $c$-pass streaming manner using $S$ space. Then there is a $(2+\bar{c} + c)$-pass $(1+\veps)$-approximate streaming algorithm for the corresponding $k$-means problem. The algorithm uses 
 $f(k, \veps) \cdot \left( \log{n} + \bar{S} + S + |D| \right)$ space, where $f(k, \veps) = \left( \frac{k}{\veps} \right)^{O(\frac{k}{\veps})}.$
\end{corollary}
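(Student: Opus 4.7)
The plan is to combine the streaming version of \GC from Lemma~\ref{lem:str} with the two given streaming subroutines in three sequential phases over the stream.

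First, I would use the two passes of Lemma~\ref{lem:str} (invoked with $t=k$) to produce the list $\L$ of $k$-center sets in space $f(k,\veps)\cdot \log n$. By Theorem~\ref{thm:2}, applied to the (unknown) optimal constrained clustering $\X = \{X_1,\ldots,X_k\}$, with probability at least $3/4$ some element $\mathcal{C}^\star\in\L$ satisfies $\psi(\mathcal{C}^\star,\X)\le (1+\veps)\cdot OPT$. Since $\mathcal{P}^\C$ by definition outputs the feasible clustering of least assignment cost for whatever center set it is handed, running it on $\mathcal{C}^\star$ yields a feasible clustering of cost at most $\psi(\mathcal{C}^\star,\X)\le(1+\veps)\cdot OPT$; hence it is enough to identify the minimum-cost candidate in $\L$ and then partition with respect to it.

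Second, I would perform the next $\bar{c}$ passes in parallel across all $|\L|=f(k,\veps)$ candidates: for each $C\in\L$, run $\bar{\mathcal{P}}^{\C}(X,C)$ simultaneously, each instance using $\bar{S}$ working space and maintaining its auxiliary data structure $D_C$ of size $|D|$. At the end of this phase each candidate has an associated cost and data structure, for a total space cost of $f(k,\veps)\cdot(\bar{S}+|D|)$. I would then select the candidate $C^\star$ that achieves the smallest cost, together with its structure $D^\star$, discarding the rest. Finally, in the last $c$ passes I would execute $\mathcal{P}^\C(X,D^\star)$ to actually output the feasible clustering; this consumes an additional $S$ space. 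Adding the $2$ passes of the first phase, the overall pass count is $2+\bar{c}+c$, matching the claim.

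For the space bound, the dominant costs are the storage of $\L$ itself ($f(k,\veps)\cdot\log n$, since each center retains an $O(\log n)$ description via the reservoir-sampling stage), the parallel execution of the $\bar{\mathcal{P}}^\C$ instances ($f(k,\veps)\cdot(\bar{S}+|D|)$), and the final partition step ($S$); summing and loosening to $f(k,\veps)\cdot(\log n+\bar{S}+S+|D|)$ gives the stated bound. The only real subtlety is that the cost-evaluation phase must be run in parallel for all elements of $\L$—waiting for the selection would force an extra pass—whereas the partition phase can be performed sequentially on the single winner, so $S$ does not multiply by $f(k,\veps)$. The approximation guarantee is inherited directly from Theorem~\ref{thm:2} combined with the standard partition-algorithm reduction, and amplification of the $3/4$ success probability (if desired) can be handled by running $O(1)$ independent copies of the first phase in parallel without affecting the asymptotic bounds. \qed
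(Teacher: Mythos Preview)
Your proposal is correct and follows essentially the same three-phase approach as the paper's proof: generate $\L$ via Lemma~\ref{lem:str}, evaluate all candidates in parallel with $\bar{\mathcal{P}}^{\mathbb{C}}$, then run $\mathcal{P}^{\mathbb{C}}$ once on the winner. Your write-up is in fact more detailed than the paper's terse version---in particular, your observation that the $\bar{\mathcal{P}}^{\mathbb{C}}$ instances must be run in parallel (so $\bar S$ and $|D|$ get multiplied by $f(k,\veps)$) while the final $\mathcal{P}^{\mathbb{C}}$ is run only once (so $S$ need not, and the stated bound is simply a loosening) is exactly the right accounting; the paper leaves this implicit.
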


\noindent
{\bf Remark: } We note that this corollary holds even when the procedure $\mathcal{\bar{P}}^{\mathbb{C}}(X, C)$ outputs $(1+\veps)$ approximate cost and $\mathcal{P}^{\mathbb{C}}(X, D)$ outputs a $(1+\veps)$-approximate clustering of the input points.

\begin{proof}
We first use the algorithm in Lemma~\ref{lem:str} to generate a list $\L$ of $k$-center sets. 
For each such set $C$ in the list $\L$, we run the algorithm 
$\mathcal{\bar{P}}^{\mathbb{C}}(X, C)$ to get the approximate cost of the corresponding optimal clustering. 
Let $C'$ be the $k$-center set that gives the minimum cost and $D'$ denote the data structure created while executing $\mathcal{\bar{P}}^{\mathbb{C}}(X, C')$.
Finally, we use $\mathcal{P}^{\C}(X, D')$ to output the clustering with the least cost.
This is guaranteed to yields a PTAS (see discussions in Introduction and also see~\cite{bjk}). \qed
\end{proof}

\subsection{Partition Algorithms}

We now consider several variants of constrained $k$-means problem and give one pass streaming algorithms for the corresponding partition problems.

\paragraph{\bf {Classical $k$-means}:} Here $\mathcal{\bar{P}}^{\mathbb{C}}(X, C)$ makes one pass over the input $X$. When it sees a point $p$, it assigns it to the closest center $C$, and maintains the cost of this assignment. It outputs the total assignment cost. This runs in a single pass. $\mathcal{P}^{\mathbb{C}}$ is the same as $\mathcal{\bar{P}}^{\mathbb{C}}$, except that it outputs the corresponding clustering and hence runs in a single pass. 
From Corollary~\ref{cor:str}, we get that there is a $4$-pass, log-space, $(1+\veps)$-approximation streaming algorithm for the $k$-means problem.

\vspace{0.2in}

For more non-trivial constrained $k$-means problem, the procedure $\mathcal{P}^{\mathbb{C}}(X, C)$ is often a flow formulation on the following bi-partite graph $G$: on one side (say the left side), we have the set of points $X$ and on the right side, we have $C$. For every point $p \in X, c \in C,$ there is an edge $(p,c)$ in the graph with cost $||p-c||^2$. Clearly, the streaming algorithm cannot maintain this bi-partite graph. Instead, it maintains a compressed version of this graph whose size is a polynomial in $k, \veps, \log n.$ This motivates the following definition.

\begin{definition}
Let $G = (V,E)$ be an edge-weighted bipartite graph with $V = L \cup R$ being the partition of $V$ into the two sides. Further, we associate a number $n_v$ with each vertex $v \in L.$ We say that $G$ {\em represents} the pair $(X, C)$, where $X$ is a set of $n$ points and $C$ is a set of $k$ centers, if the following conditions are satisfied:
\begin{itemize}
    \item The set $R = C$. Each point $p \in X$ is mapped to a unique vertex $v$ in $L$ -- call this vertex $\phi(p)$. Further $n_v$ is equal to $|\phi^{-1}(v)|$.
    \item For each point $p \in X$ and center $c \in X$, the weight of the edge $(\phi(p), c)$ in $G$ is within $(1 \pm \veps)$ of $||p-c||^2$. 
\end{itemize}
\end{definition}

\begin{theorem}
\label{thm:compr}
Given a pair $(X,C)$ of $n$ points and $k$ centers respectively, there is a single pass streaming algorithm which builds a bipartite graph $G$ representing this pair. The space used by this algorithm (which includes the size of $G$) is $O \left((k + \log n) \cdot \left( k \cdot 6^k \cdot  \log \Delta + k^k \cdot \log^k (\frac{1}{\veps})  \right) \right),$ where $\Delta$ is the aspect ratio defined as $\Delta = \frac{\max_{p \in X, c \in C}||p-c||}{\min_{p \in X\setminus C, c \in C}||p-c||}$. 
\end{theorem}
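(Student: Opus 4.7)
My proof plan is to build $G$ by a single-pass algorithm that computes, for each arriving point $p$, a short signature $\sigma(p)$ and maps $p$ to a corresponding vertex $\phi(p) \in L$, updating the multiplicity $n_{\phi(p)}$. The signature jointly encodes approximations of all $k$ distances $\{\|p-c\|\}_{c \in C}$ to within a multiplicative $(1 \pm \veps)$ factor, so the edge weights of $G$ are determined by the signature alone. The centers $C$ are held in memory throughout (they fit in $O(kd)$ space, which is absorbed in the final bound), and the data structure is a dictionary keyed by signatures whose entries store a count $n_v$ and the quantized edge weights to each center.

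The key observation is a dichotomy between ``near'' and ``far'' centers with respect to $p$. Let $c^*(p) \in C$ minimize $\|p - c\|$ and set $d(p) := \|p - c^*(p)\|$. For a center $c$ with $\|c - c^*(p)\| \geq d(p)/\veps$ (``far''), an application of the triangle inequality gives $\|p - c\| = (1 \pm O(\veps)) \|c - c^*(p)\|$, so the distance to $c$ is essentially determined by the pair $(c^*(p), c)$ together with the coarse scale of $d(p)$. Thus for the contribution from far centers the signature only needs to record the identity of $c^*(p)$ and the bucket of $d(p)$ in a $\log_{1+\veps}$ grid between $r_{\min}$ and $r_{\max}$, contributing $O(k \cdot 6^k \log \Delta)$ distinct coarse buckets (the $6^k$ factor absorbs the constantly many refinements needed per center to resolve boundary cases at the near/far threshold). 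For centers $c$ with $\|c - c^*(p)\| < d(p)/\veps$ (``near''), the triangle inequality is too weak: we need the location of $p$ within the ball $B(c^*(p), d(p)/\veps)$ to an absolute resolution $\Theta(\veps \cdot d(p))$. We encode this by discretizing the relative distance of $p$ to each near center using a fine $\veps$-net of granularity $O(\log(1/\veps))$ per center; with at most $k$ near centers this contributes $O(k^k \log^k(1/\veps))$ fine signatures. Summing, $|L| \leq O\bigl( k \cdot 6^k \log \Delta + k^k \log^k(1/\veps) \bigr)$.

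In the streaming pass, when point $p$ arrives the algorithm scans $C$ in $O(k)$ time to find $c^*(p)$ and $d(p)$, determines the coarse signature, identifies the at most $k$ near centers, computes the fine signature, and either increments $n_v$ for the matching vertex $v$ or inserts a new vertex whose edge weights to all $k$ centers are set to the approximations dictated by the signature (far centers via $\|c - c^*(p)\|$, near centers via the $\veps$-net bucket). Each vertex costs $O(k)$ to store the $k$ edge-weight slots (since weights are determined by signature parameters of constant bit-length per center) plus $O(\log n)$ for the count $n_v$; multiplying by the bound on $|L|$ yields the claimed space bound.

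The main obstacle will be verifying that the combined encoding gives distance approximations within $(1 \pm \veps)$ simultaneously for every center. At the boundary $\|c - c^*(p)\| \approx d(p)/\veps$ we must ensure both the triangle-inequality bound for far centers and the $\veps$-net for near centers yield the same $(1 \pm \veps)$ guarantee; this forces a careful choice of the constants in the cutoff and in the net granularity. A secondary subtlety is that the signature must be robust to ties in choosing $c^*(p)$ and to points whose $d(p)$ straddles a ring boundary; both can be handled by fixing a deterministic tie-breaking rule and by letting the coarse discretization use a slightly finer ratio $1 + \Theta(\veps)$ so that the triangle-inequality slack remains absorbed in the $(1+\veps)$ budget.
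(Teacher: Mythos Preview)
The high-level plan (assign each arriving point a short signature encoding $(1\pm\veps)$-approximations of all $k$ distances, and keep one left vertex per realized signature) is exactly right and matches the paper. The gap is in the counting. Your signature has two components---the coarse part $(c^*(p),\text{bucket of }d(p))$ and the fine part (one bucket per near center)---and a point's signature is the \emph{pair} of these, so the number of realized signatures is bounded by a \emph{product} of the two counts, not their sum. As written, the sketch yields $|L|=O\bigl(k\log\Delta\bigr)\cdot O\bigl(\log^{k}(1/\veps)\bigr)$, which does not give the additive bound in the theorem. The $6^k$ factor you attribute to ``boundary cases'' is also not produced by the near/far split: nothing in that dichotomy creates a constant-per-center branching.

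The paper obtains the sum via a different dichotomy, applied to the full $k$-tuple of distance buckets (a ``hyperbucket'') rather than per center relative to $c^*(p)$. A single bucket $b(c,i)$ is called \emph{interesting} if the scale $(1+\veps)^i$ lies within a factor $1/\veps$ of some inter-center distance $\|c-c'\|$; a hyperbucket is interesting if all $k$ coordinates are. Interesting hyperbuckets number at most $O(k^k\log^k(1/\veps))$. If a hyperbucket has even one non-interesting coordinate $b(c,i)$, then for every other center $c'$ one of two cases holds: either $(1+\veps)^{i+1}\le\veps\|c-c'\|$, which forces $\|p-c'\|\in(1\pm\veps)\|c-c'\|$, or $\|c-c'\|\le\veps(1+\veps)^i$, which forces $\|p-c'\|\in(1\pm\veps)\|p-c\|$. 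Either way the $c'$-coordinate is pinned to $O(1)$ choices, giving $O(6^k)$ hyperbuckets per choice of the non-interesting coordinate $b(c,i)$, and there are only $O(k\log\Delta)$ such choices. The second sub-case---$\|p-c\|$ much \emph{larger} than $\|c-c'\|$---is precisely what your near/far split (anchored at the nearest center, and only in the ``$d(p)$ much smaller than $\|c-c^*(p)\|$'' direction) does not capture, and it is the ingredient that turns the product into a sum.
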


\begin{proof}
For each center $c$, we define a set of buckets $B_c$ as follows: the bucket $b(c,i)$ corresponds to the values $[(1+\veps)^i, (1+\veps)^{i+1})$. Let $d_\min$ and $d_\max$ denote the minimum and the maximum between a pair of points in $X\setminus C$ and $C$ respectively (so $\Delta = d_\max/d_\min).$ We define the buckets $b(c,i)$ for $i = \log_{1+\veps} d_\min, \ldots, \log_{1+\veps} d_\max. $ As above, $B_c$ denotes the collection of buckets corresponding to $c$. Clearly, $|B_c| = O(\frac{\log \Delta}{\eps}).$ 

Now we consider the set $B_{c_1} \times B_{c_2} \ldots \times B_{c_k}$ -- an element of this set is called a {\em hyperbucket.} In other words, a hyperbucket is a $k-$tuple $(b(c_1, i_1), \ldots, b(c_k, i_k))$. Each point $p \in X$ can be mapped to a hyperbucket in the natural manner -- define $\phi(p)$ to be the hyperbucket $(b(c_1, i_1), \ldots, b(c_k, i_k))$, where $i_j$ is such that $||p-c_j|| \in [(1+\veps)^{i_j}, (1+\veps)^{i_{j}+1})$ for $j=1, \ldots, k.$

Call a hyper-bucket ${\bf b}$ to be empty if $\phi^{-1}({\bf b})$ is empty. We now count the number of non-empty hyper-buckets. For a center $c$ and index $i$, call the bucket $b(c,i)$ {\bf interesting} if there is another center $c'$ such that 
$[(1+\veps)^i, (1+\veps)^{i+1}) \cap [\veps \cdot ||c-c'||, ||c-c'||/\veps]$
is non-empty. Call a hyperbucket $(b(c_1, i_1), \ldots, b(c_k, i_k))$ interesting if {\em all} the buckets  $b(c_j,i_j)$ in it are interesting. We first count the number of interesting hyperbuckets:
\begin{claim}\label{cl:hyper} 
The number of interesting hyperbuckets is $O \left(k^k \cdot  \log^k (\frac{1}{\veps}) \right)$.
\end{claim}
\begin{proof}
For a fixed center $c$, the number of interesting buckets $b(c,i)$ is $O(k \log (\frac{1}{\veps})).$ Since all the buckets in a hyperbucket needs to be interesting, the result follows.  \qed
\end{proof}
We now count the number of non-interesting hyperbuckets. 
\begin{claim}
\label{cl:hyper1}
Let $b(c,i)$ be a non-interesting bucket for some center $c$ and index $i$. Then the number of non-empty non-interesting hyperbuckets containing $b(c,i)$ is $O(6^k)$. 
\end{claim}
\begin{proof}
Consider such a hyperbucket ${\bf b}$ containing $b(c,i)$. Let $p$ be a point such that $\phi(p)$ is ${\bf b}$. Let $c'$ be a center other than $c$. Two cases arise:
\begin{itemize}
    \item $(1+\veps)^{i+1} \leq \veps ||c-c'||:$ In this case,
    $$ ||p-c'|| \in ||c-c'|| \pm ||c-p|| \in (1 \pm \veps) \cdot ||c-c'||.$$
    Therefore, there are at most 3 choices for the index $i'$ such that
    $b(c',i')$ is one of the coordinates of ${\bf p}$. 
   \item  $||c-c'|| \leq \veps (1 + \veps)^i$ : Here, 
   $$ ||p-c'|| \in ||c-c'|| \pm ||c-p|| \in (1 \pm \veps) \cdot ||c-p||. $$
   Again, there are at most 3 choices for $i'$ as above. 
\end{itemize}
From the above argument, it is clear that the number of non-empty non-interesting hyperbuckets ${\bf b}$ containing $b(c,i)$ is $O(6^k)$. This proves the claim. \qed
\end{proof}

We can  now count the number of non-empty non-interesting hyperbuckets. 
Consider such a bucket ${\bf b}$. There must be a coordinate $b(c,i)$ in it which is non-interesting -- there are $O(k \log \Delta)$ choices for $b(c,i)$. For each such choice, the above claim shows that there are $O(6^k)$ choices for the remaining coordinates of ${\bf b}$. Thus, we see that the total number of non-empty hyperbuckets is $O(k \cdot 6^k \log \Delta +k^k \cdot  \log^k (\frac{1}{\veps})). $

We can describe the streaming algorithm. The bipartite graph will have all the non-empty hyperbuckets on the left side and the $k$ centers $C$ on the right side. The length of an edge between a hyperbucket $(b(c_1, i_1), \ldots, b(c_k, i_k))$ and a center $c_j$ will be the square of  $(1+\veps)^{i_j}.$ Initially, the left side
of the bipartite graph will be empty (because all hyperbuckets are empty). Whenever a new point $p$ is seen, the algorithm computes $\phi(p)$. If this hyperbucket is not present in the left side of the bipartite graph, we add a new vertex corresponding to it. The algorithm can also maintain the cardinality of $\phi^{-1}({\bf b})$ for every non-empty bucket ${\bf b}$ (this will be stored in the variable $n_v$, where $v$ is a vertex in the left side of the graph). The theorem now follows from the fact that the number of edges is equal to  $k$ times the number of non-empty hyperbuckets, and the space required to maintain $n_v$ values is $\log n$ times the number of non-empty hyperbuckets. This proves the Theorem~\ref{thm:compr}. \qed
\end{proof}

Note that the space required depends on $\log \Delta$, whereas we would really like it to depend on $\log n$ instead. 
We discuss this next.

\paragraph{\bf Removing the dependence on $\Delta$:} Let $D$ denote the set of all pair-wise distances between the centers (so $|D| \leq k^2$). The following is the key observation:

\begin{lemma}
\label{lem:dist}
Let $p$ be a point in $X$, and $c$ be the closest center to it. Let $d$ denote $||p-c||$. For any center $c' \in X$, $||p-c'||$ lies in the range $[u/4, 4u]$ for
some $u \in D \cup \{d\}$.
\end{lemma}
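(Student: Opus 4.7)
}

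The plan is to do a short case analysis on the relative sizes of $d = \|p-c\|$ and the inter-center distance $u' := \|c-c'\|$ (which lies in $D$), and in each case exhibit a witness $u \in D \cup \{d\}$ for which $\|p-c'\| \in [u/4, 4u]$. The only two facts I will use are the triangle inequality and the hypothesis that $c$ is the closest center to $p$, which gives the global lower bound $\|p-c'\| \geq d$.

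First I would dispose of the case $u' \le d$. Here I take $u = d$. The triangle inequality yields $\|p-c'\| \le d + u' \le 2d$, and $\|p-c'\| \ge d$ by minimality of $c$. Hence $\|p-c'\| \in [d, 2d] \subseteq [u/4, 4u]$, and $u = d \in D \cup \{d\}$ as required.

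Next I would handle the complementary case $u' > d$, where I take $u = u' \in D$. The upper bound follows from $\|p-c'\| \le d + u' < 2u' \le 4u$. For the lower bound, I would split once more: if $u' \ge 2d$, then $\|p-c'\| \ge u' - d \ge u'/2 \ge u/4$; if $d < u' < 2d$, then $\|p-c'\| \ge d > u'/2 \ge u/4$. Either way $\|p-c'\| \ge u/4$, completing the proof.

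I do not expect any real obstacle here: the only mild subtlety is that a naive triangle-inequality lower bound $\|p-c'\| \ge u' - d$ can vanish when $u'$ is just slightly larger than $d$, which is precisely why one must also use the ``$c$ is closest'' bound $\|p-c'\| \ge d$ in the sub-case $d < u' < 2d$. Once that split is made, the constants $1/4$ and $4$ are comfortably loose, so no careful arithmetic is required.
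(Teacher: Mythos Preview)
Your proof is correct and follows essentially the same approach as the paper: a short case analysis using only the triangle inequality and the minimality of $c$. The paper splits on whether $\|p-c'\| \ge 4d$ (taking $u=d$ otherwise, and $u=\|c-c'\|$ when $\|p-c'\|$ is large), whereas you split on whether $\|c-c'\| \le d$; both decompositions lead to the same elementary estimates, so the difference is cosmetic.
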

\begin{proof}
Assume $||p-c'|| \geq 4d$, otherwise the lemma is already true. Then 
$$||c-c'|| \geq ||p-c'|| - ||p-c|| \geq 3||p-c||. $$
Therefore, 
$$ ||p-c'|| \leq ||c-c'|| + ||p-c|| \leq \frac{4 ||c-c'||}{3}, $$
and 
$$||p-c'|| \geq ||c-c'|| - ||p-c|| \geq \frac{2 ||c-c'||}{3}. $$
\qed
\end{proof}

We fix a solution $\X$ to the partition problem (of course, the algorithm does not know it, but it will help in the algorithm description). 
For a point $x$, let $d_x$ denote the distance from $x$ to the closest center, 
Let $d^\star$ be the maximum over all points $x$ of $d_x$. We do know $d^\star$, but can find it by performing a pass over the data.

Now suppose we guess the maximum $u \in D$ such that any point $p \in X$ 
which is not assigned to a center in $\X$ within distance $4d_x$ is assigned to a center $c'$ such that 
$||p-c'||$ lies in the range $[u/4,4u]$ (the lemma above ensures that such a value $u$ exists). \footnote{In the algorithm implementation, we will run this for all possible values of $u \in D$. This will increase the space complexity by a factor of $k^2$. } If $d^\star$ happens to be larger than $u$, we update $u$ to $d^\star$. We know that the cost of the solution $\X$ is at least $\Omega(u^2).$

Now, we contract all distances in the metric which are smaller than $u/n^2$ -- again we cannot do this directly in the streaming manner, but whenever a point $p$ arrives, we will view all distances to centers which are less than $u/n^2$ to be $0$.
Since the optimal cost is $\Omega(u^2)$, this distance contraction affects the optimal value by at most a factor of $(1+1/n)$. So now, all non-zero distances between a point $p$ and a center $c \in X$ such that $p$ can be potentially assigned to $c$ lie in the range $[u/n^2, 4u]$. Thus $\Delta$ becomes polynomially bounded. 
However one issue remains -- each point $p$ can only be assigned to a center (other than it's nearest center) which is at most $4u$ distance away. We need to incorporate this fact in the graph structure as well. For each point $p$, let $C(p)$ be the set of centers to which it can be assigned (these are the centers which are at most $4u$ distance away, or the center closest to $p$). Note that there are $2^k$ choices for $C(p)$. 

We  modify the construction of $G$ used in Theorem~\ref{thm:compr} as follows. Recall that the left side of $G$ had one vertex for every hyperbucket ${\bf b}$. Now we will have one vertex for every pair $({\bf b}, C'),$ where $C'$ is a subset of $C$. If $\phi(p)$ is the hyperbucket ${\bf b}$, then we assign $p$ to the pair $({\bf b}, C(p)).$ It is easy to check that with this modification, the arguments in the proof of Theorem~\ref{thm:compr} hold. 
The result of this construction in Theorem~\ref{thm:compr} is that the $\log{\Delta}$ can now be replaced with $\log{n}$ but at the cost of multiplying the overall space requirement by a factor of $k^2 \cdot 2^{k}$ and adding one more pass. This is formally stated as the following Theorem.

\begin{theorem}\label{thm:compr1}
Given a pair $(X,C)$ of $n$ points and $k$ centers respectively, there is a single pass streaming algorithm which builds a bipartite graph $G$ representing this pair. 
The space used by this algorithm (which includes the size of $G$) is $O \left(k^2 \cdot 2^k \cdot (k + \log n + \log \Delta) \cdot \left( k \cdot 6^k \cdot  \log{n} + k^k \cdot \log^k (\frac{1}{\veps})  \right) \right)$ Further, the dependence on $\log \Delta$ can be removed by adding one more pass to the algorithm. 
\end{theorem}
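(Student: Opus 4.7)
The plan is to extend Theorem~\ref{thm:compr} via the distance-contraction/guessing scheme outlined after Lemma~\ref{lem:dist}. First I would make one preliminary pass over the stream to compute $d^\star$, the maximum over $p \in X$ of the distance from $p$ to its nearest center in $C$; together with the pass used by the modified Theorem~\ref{thm:compr} construction, this accounts for the extra pass mentioned in the statement.

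Next, I would run the streaming construction of Theorem~\ref{thm:compr} in parallel for every candidate scale $u \in D \cup \{d^\star\}$, where $D$ is the set of pairwise distances among centers in $C$. By Lemma~\ref{lem:dist}, for any fixed optimal partition $\X$ there is at least one such $u$ with the property that every point $p$ not assigned to its nearest center is within $[u/4, 4u]$ of its assigned center (and every other relevant distance lies in the same range). Since $|D \cup \{d^\star\}| \le k^2 + 1$, enumerating over $u$ multiplies the space by $O(k^2)$, matching the stated bound.

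For each candidate $u$, I would contract every point–center distance below $u/n^2$ to $0$ in the bucketing, so the effective aspect ratio becomes $O(n^2)$ and the $\log \Delta$ factor in Theorem~\ref{thm:compr} is replaced by $O(\log n)$. For the correct guess of $u$, the cost of $\X$ is $\Omega(u^2)$, so the total squared-distance error introduced by contraction is $O(n \cdot (u/n^2)^2) = O(u^2/n^3)$, easily absorbed into a $(1+\veps)$ factor. To enforce that each point $p$ may only be charged to its eligible set $C(p) = \{c \in C : \|p-c\| \le 4u\} \cup \{\text{nearest center to } p\}$, I would enlarge the left side of $G$ to have one vertex for every pair $({\bf b}, C')$ with ${\bf b}$ a hyperbucket and $C' \subseteq C$, and map $p$ to $(\phi(p), C(p))$. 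The partition formulation then simply disallows edges from $({\bf b}, C')$ to $c \notin C'$. Since there are $2^k$ choices of $C'$, this contributes the $2^k$ factor; the hyperbucket-counting arguments of Claims~\ref{cl:hyper} and~\ref{cl:hyper1} go through unchanged within each fixed $C'$.

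The main obstacle will be pinning down the precise correctness statement. Specifically, I need to argue that for the correct $u$, the modified graph $G$ still \emph{represents} $(X, C)$ in the sense required by downstream partition algorithms: any $(1+\veps)$-optimal feasible clustering computed on $G$ lifts to a feasible clustering of $X$ whose cost is within $(1+\veps)$ of $\mathrm{OPT}$. This requires checking two things, neither of which is algorithmically difficult but both of which require care: (i) restricting point $p$ to $C(p)$ is consistent with some near-optimal feasible partition (so no better partition is excluded), and (ii) the combined error from bucket rounding, small-distance contraction, and the $u$-truncation sums to at most $(1+\veps)\cdot \mathrm{OPT}$. Once this structural claim is in place, the rest is a syntactic modification of the proof of Theorem~\ref{thm:compr}.
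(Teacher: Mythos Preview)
Your proposal is correct and follows essentially the same approach as the paper: enumerate over the $O(k^2)$ candidate scales $u$ (the paper enumerates $u \in D$ and then replaces $u$ by $d^\star$ when $d^\star > u$, which is equivalent to your enumeration over $D \cup \{d^\star\}$), contract distances below $u/n^2$ to bound the effective aspect ratio by $\mathrm{poly}(n)$, and split each hyperbucket vertex by the eligible-center set $C(p) \subseteq C$ to pick up the $2^k$ factor. Your identification of the correctness check (that restricting $p$ to $C(p)$ does not exclude a near-optimal feasible partition, and that the rounding/contraction errors sum to $(1+\veps)\cdot\mathrm{OPT}$) as the only nontrivial remaining step is exactly right and matches what the paper leaves implicit.
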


We will now use the above theorem to construct streaming algorithms for a variety of constrained $k$-means problems in the subsequent subsections. 
Note that all we need to do is to discuss the streaming versions of the partition algorithms $\mathcal{\bar{P}}^{\C}$ and $\mathcal{P}^{\C}$.

\subsection{$r$-gather/capacity $k$-means clustering}

Given a set of $k$ centers $C$ and the input data $X$, the partition algorithm $\mathcal{P}^{\mathbb{C}}(X, C)$  for the $r$-gather $k$-means clustering problem needs to partition $X$ into $X_1, \ldots, X_k $ (where $X_i$ corresponds to the set of points assigned to the center $c_i$ in $C$) such  that (i) $|X_i| \geq r$ for each $i$, and (ii) $\sum_i \Phi(c_i, X_i)$ is minimized. 

Given the parameters $r$ and $k$, and the sets $X$ and $C$, the above partition problem can be easily solved using a flow formulation. Indeed, we first build a bipartite graph $G$ as follows. On the left side of this graph, we have the point set $X$ and on the right side $C$. The weight of an edge is the square of the distance between the corresponding points. Finally, we set up a flow formulation by adding a source vertex $s$ and a sink $t$. The edge from $s$ to a vertex $p$ on the left side has capacity 1 (and 0 cost). Similarly, an edge from a vertex $c \in C$ on the right side to $t$ has lower bound $r$ and 0 cost. The edges of the bipartite graph have unbounded capacity. Clearly. a min-cost flow of value $|X|$ gives the optimal solution to the partition problem. The above flow formulation was given by Ding and Xu~\cite{dx15}.

Now we show how to get a $(1+\veps)$-approximation in the streaming setting. Instead of constructing the graph above, we construct the bipartite graph as given by Theorem~\ref{thm:compr1}. We can again formulate the flow formulation in an analogous manner -- the only change is that the edge from $s$ to a hyperbucket $v$ has capacity $n_v$. 
It is clear the optimal solution to this flow-formulation computes the cost of a  $(1+\veps)$-approximate solution for the partition problem. This is basically the algorithm $\mathcal{\bar{P}}^{\C}$.
However, we also need to design the algorithm $\mathcal{P}^{C}$ that outputs a clustering.

Once we have found the optimal flow in this bipartite graph, we make one more pass over the input $X$ to figure out the actual partition. We can do this as follows: we can think of each vertex $v$ on the left-side as having $n_v$ ``copies'' and each such copy is assigned to a center in $C$ by the min-cost flow solution. Now we make one more pass over the data. When we see a point $p$ for which $\phi(p)$ is $v$, we pick any one of the copies at $v$ and identify it with $p$. We assign $p$ to the center assigned by the flow. Now we can remove this copy by decreasing $n(v)$ and reducing flow by 1 unit on appropriate edges. It is easy to check that this will output the desired solution. 
The $r$-capacity $k$-means clustering problem is similar, except that in the flow formulation edges from the right nodes to $t$ has a capacity of $r$ rather than lower bound of $r$.
Combining everything, we get the following lemma.

\begin{lemma}
\label{lemma:gatherstr}
Consider the $r$-gather/capacity $k$-means clustering problem.
There is a $1$-pass streaming algorithm $\mathcal{\bar{P}}^{\C}(X, C)$ that gives $(1+\veps)$-approximation to the cost of the optimal feasible clustering with respect to center set $C$ and a concise data structure $D$. 
Furthermore, there is a $1$-pass streaming algorithm $\mathcal{P}^{\C}(X, D)$ that outputs the $(1+\veps)$-approximate feasible optimal clustering with respect to center set $C$.
The space requirement for both the above algorithms and size of data structure $D$ is  $O(f(k, \veps) \cdot (\log{n}+ \log \Delta)),$ where $f(k, \veps)$ is $k^{O(k)} \cdot \log^k (1/\veps).$ Further, the space requirement can be improved to $O(f(k, \veps)\cdot \log{n})$ using a 5-pass streaming algorithm. 
\end{lemma}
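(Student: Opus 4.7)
The plan is to apply the Ding--Xu min-cost flow formulation for $r$-gather/$r$-capacity clustering on the \emph{compressed} bipartite graph produced by Theorem~\ref{thm:compr1} rather than on the full $X\times C$ bipartite graph, so that the entire partitioning logic fits inside the claimed space after one streaming pass.

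First, to realize $\mathcal{\bar{P}}^{\C}(X,C)$, I would invoke the single-pass construction of Theorem~\ref{thm:compr1} to obtain the representing graph $G$ together with its counts $n_v$ on the left vertices; this graph is the concise data structure $D$, and its size already matches the claimed bound $O(f(k,\veps)\cdot(\log n+\log\Delta))$. Then, entirely in memory, I would augment $G$ with a source $s$ and a sink $t$: add an edge $(s,v)$ of capacity $n_v$ and zero cost for each left vertex $v$, keep each bipartite edge $(v,c)$ with its stored (approximate) squared-distance cost and unbounded capacity, and append $(c,t)$ edges for each center $c$ with lower bound $r$ (for $r$-gather) or capacity $r$ (for $r$-capacity) and zero cost. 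The value of a min-cost integral $s$--$t$ flow of value $|X|$ is returned as the approximate partition cost; no extra pass is needed because the flow instance lives inside $D$. To realize $\mathcal{P}^{\C}(X,D)$, I would view each left vertex $v$ as hosting $n_v$ indistinguishable copies that the optimal flow has already routed to centers. In one additional pass over $X$, whenever $p$ with $\phi(p)=v$ arrives I pair $p$ with any still-unassigned copy at $v$, emit the center to which that copy was routed, and decrement both $n_v$ and the flow on the selected $(v,c)$ edge. The invariant $n_v=|\phi^{-1}(v)|$ maintained during the build phase guarantees that an unassigned copy is always available.

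Correctness rests on the two defining properties of $G$. Because each weight $w(v,c)$ is within $(1\pm\veps)\cdot\|p-c\|^2$ for every $p$ with $\phi(p)=v$, the cost of any integral flow differs from the cost of the induced clustering by only a multiplicative $(1+\veps)$. Because $r$-gather/$r$-capacity feasibility depends solely on per-center multiplicities, and these multiplicities equal the net in-flow into the corresponding right vertex, the lower-bound/capacity constraints enforce feasibility exactly. Hence a min-cost flow of value $|X|$ corresponds to a $(1+\veps)$-approximately optimal feasible clustering. Space and pass counts follow directly from Theorem~\ref{thm:compr1}; for the $5$-pass variant I would use its extra preparatory pass, which replaces $\log\Delta$ by $\log n$ in $|D|$, and then combine with the $2$-pass list-$k$-means procedure via Corollary~\ref{cor:str} to obtain a $5$-pass overall algorithm in space $O(f(k,\veps)\log n)$.

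The main obstacle is ensuring that the flow computed on the \emph{compressed} graph always disaggregates into a feasible assignment of the \emph{original} points, which could a priori fail if the $(1\pm\veps)$ edge-weight perturbation interacted with the hard cardinality constraints. It does not, because the constraints are purely combinatorial (they involve only per-center counts, not distances), so the flow network enforces them exactly; and disaggregation is always possible because the $n_v$ counters are maintained faithfully during the build pass, so every arriving point finds a matching unassigned copy at its hyperbucket.
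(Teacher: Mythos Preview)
Your proposal is correct and follows essentially the same approach as the paper: build the compressed bipartite graph of Theorem~\ref{thm:compr1} in one pass as the data structure $D$, solve the Ding--Xu min-cost flow formulation on this compressed graph (with capacities $n_v$ on the source edges and the lower bound/capacity $r$ on the sink edges), and then disaggregate the flow back to individual points in a second pass by peeling off copies at each hyperbucket. Your additional justification for why the cardinality constraints survive compression exactly, and why disaggregation always succeeds, matches the paper's reasoning but is stated more explicitly.
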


Combining the above the lemma with Corollary~\ref{cor:str}, we get the following theorem for the $r$-gather/capacity $k$-means clustering problem.

\begin{theorem}
There is a $(1 + \veps)$-approximate, $4$-pass, streaming algorithm for the $r$-gather/capacity $k$-means clustering problem that uses $O(f(k, \veps)\cdot (\log{n}+\log \Delta))$-space, where $f(k, \veps) = (\frac{k}{\veps})^{O(\frac{k}{\veps})}$. Further the space requirement can be improved to $O(f(k, \veps)\cdot \log n)$-space using a 5-pass streaming algorithm. 
\end{theorem}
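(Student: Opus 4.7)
The plan is a direct composition of the list-$k$-means streamer of Lemma~\ref{lem:str} with the $r$-gather/capacity partition subroutines of Lemma~\ref{lemma:gatherstr}, following the template of Corollary~\ref{cor:str}. First I would run the streaming \GC procedure in $2$ passes to produce a list $\L$ of $k$-center sets of size $|\L| \le f(k,\veps)$, with the guarantee that with probability at least $3/4$ some $C^\star \in \L$ satisfies $\psi(C^\star, \X^\star) \le (1+\veps)\cdot OPT$, where $\X^\star$ is any fixed optimal $r$-gather (respectively $r$-capacity) clustering and $OPT$ is its cost.

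Next, in a third pass, I would run $\bar{\mathcal{P}}^{\C}(X, C)$ in parallel for every $C \in \L$; each invocation streams $X$ once, builds the compressed bipartite graph of Theorem~\ref{thm:compr1}, and solves the $(1+\veps)$-approximate min-cost flow on it. Let $C'$ be the center set in $\L$ whose reported cost is smallest, and let $D'$ be the compressed graph retained during its execution. In a final fourth pass, invoke $\mathcal{P}^{\C}(X, D')$ to materialise the actual clustering by streaming $X$ once more and matching each incoming point $p$ to one of the $n_v$ copies of its hyperbucket $v = \phi(p)$ in accordance with the optimal flow at $v$, decrementing $n_v$ as we go. Summing gives $2 + 1 + 1 = 4$ passes, and the $|\L|$ parallel copies of per-instance space $O(f(k,\veps)(\log n + \log\Delta))$ still fit within $(k/\veps)^{O(k/\veps)}\cdot(\log n + \log\Delta)$ after redefining the constant in $f$.

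For correctness, the clustering output on $C'$ has cost at most $(1+\veps)$ times the optimal feasible cost with centers $C'$, which is in turn bounded by $\psi(C^\star, \X^\star) \le (1+\veps)\, OPT$ since $C^\star \in \L$ and the algorithm selects the cheapest reported cost. Thus the final cost is at most $(1+\veps)^2\, OPT$; rescaling $\veps$ by a constant restores the desired $(1+\veps)$-approximation. For the $5$-pass refinement, I would replace the construction of Theorem~\ref{thm:compr1} by its $\log\Delta$-free variant: a preliminary pass estimates $d^\star$ and the contraction threshold $u$, after which all non-trivial point-to-center distances lie in a range of polynomial ratio in $n$. Inserting this preliminary pass yields $1 + 2 + 1 + 1 = 5$ passes and reduces the space to $O(f(k,\veps)\log n)$.

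The main obstacle, already isolated in the preceding lemmas, is justifying that the compressed graph $G$ faithfully represents the partition problem despite collapsing many points into a single hyperbucket vertex. Edge weights in $G$ distort true squared distances by a factor in $(1\pm\veps)$, so any integer min-cost flow on $G$ with left supply $n_v$ pulls back to a feasible partition of $X$ of cost within $(1\pm\veps)$ of the flow value; conversely, the optimal partition induces an integer flow of comparable cost. The $r$-gather lower bound (respectively the $r$-capacity upper bound) on edges entering the sink transfers verbatim from the Ding--Xu~\cite{dx15} flow formulation to its compressed counterpart, so feasibility is preserved. The final-pass point-to-copy matching is then purely deterministic streaming bookkeeping and does not affect the cost.
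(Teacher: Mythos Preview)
Your proposal is correct and follows essentially the same approach as the paper: the paper's proof is literally the one-line observation that combining Lemma~\ref{lemma:gatherstr} with Corollary~\ref{cor:str} yields the theorem, and you have simply unpacked that combination explicitly. One minor slip: the optimal feasible cost with centers $C'$ is not directly bounded by $\psi(C^\star,\X^\star)$; rather, the \emph{reported} cost of $C'$ is at most the reported cost of $C^\star$, which is within $(1+\veps)$ of $\psi(C^\star,\X^\star)$, so the final bound is $(1+\veps)^3\,OPT$ before rescaling---but this is absorbed by your rescaling step anyway.
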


\subsection{Fault tolerant $k$-means clustering}
The fault tolerant $k$-means problem is defined as follows: Given dataset $X$ and integers $k$ and $l \leq k$, find a clustering $\X = \{X_1, ..., X_k\}$ such that the squared distance of the points to the $l$ nearest centers from the set $\{\mu(X_1), ..., \mu(X_k)\}$ is minimised. We solve this problem through a reduction to a specialised version of the {\em Chromatic $k$-means problem}. 
In the Chromatic $k$-means problem each data point has an associated colour and the goal is to find a clustering $\X = \{X_1, ..., X_k\}$ with minimum cost $\Delta(\X)$ such that none of the clusters have more than one point of the same colour. Clearly, the problem is well defined if every colour class has at most $k$ points.
Given an instance of the fault tolerant $k$-means problem $(X, k, l)$, we construct an instance of the chromatic $k$-means problem by replicating every point $l$ times and giving each copy the same colour. 
It is simple to show that a $(1+\veps)$-approximate solution to the constructed instance will give $(1+\veps)$-approximate solution to the instance of the fault tolerant $k$-means problem (see \cite{dx15} for details of this reduction).
Note that we said the reduction is to a ``specialised" version of the chromatic $k$-means problem. 
In the batch setting, the order of the points in the dataset $X$ does not matter. 
However, in the streaming setting it does matter if the points of the same colour are ``bunched" together. 
That is, you will see all points of the same colour before seeing points of another colour in the stream. 
This is because, one does not need to spend space to keep track of colours.
Let us call this version of the chromatic $k$-means problem as the {\em sequential chromatic $k$-means problem} just to differentiate this subtle issue.
So, our reduction is actually to the sequential chromatic $k$-means problem.
We will now just construct streaming partition algorithm for the sequential chromatic $k$-means problem.

In the batch setting the partition algorithm for the chromatic $k$-means problem works by constructing and solving the flow networks (given in the previous subsection) separately for individual colours.
In the streaming setting, we can first make a pass over the data to create a compressed graph for all the data points and then at the end consider the flow graphs for each of the colours one after the other.
We can then solve the flow networks for each of colours one after another to get the approximate solution. 
It is important to note that this is possible with limited memory because points of the same colour appear in the stream together. 
This gives the following lemma for streaming partitioning for the fault tolerant $k$-means problem.

\begin{lemma}
\label{lemma:fault}
Consider the fault tolerant $k$-means clustering problem.
There is a $1$-pass streaming algorithm $\mathcal{\bar{P}}^{\C}(X, C)$ that gives $(1+\veps)$-approximation to the cost of the optimal feasible clustering with respect to center set $C$ and a concise data structure $D$. 
Furthermore, there is a $1$-pass streaming algorithm $\mathcal{P}^{\C}(X, D)$ that outputs the $(1+\veps)$-approximate feasible optimal clustering with respect to center set $C$.
The space requirement for both the above algorithms and size of data structure $D$ is  $O(f(k, \veps) \cdot (\log{n}+\log \Delta)),$ where $f(k, \veps)$ is $k^{O(k)} \cdot \log^k (1/\veps).$ Further, the dependence on $\log \Delta$ can be removed using a 2-pass algorithm. 
\end{lemma}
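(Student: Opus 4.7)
The plan is to combine three ingredients already set up in the paper: (i) the reduction from fault-tolerant $k$-means to the sequential chromatic $k$-means problem described just above the lemma, (ii) the compressed bipartite graph construction of Theorem~\ref{thm:compr1} which represents the pair $(X,C)$ in small space, and (iii) the flow-based partition algorithm for chromatic $k$-means, applied one color class at a time. The bulk of the work is to show that each of these three pieces can be made streaming without blowing up the space beyond $O(f(k,\veps)\cdot(\log n+\log\Delta))$.

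First I would describe $\mathcal{\bar{P}}^{\C}(X,C)$. In a single streaming pass I feed $X$ into the construction of Theorem~\ref{thm:compr1} to build the edge-weighted bipartite graph $G$ representing $(X,C)$; but instead of maintaining a single counter $n_v$ at each left-vertex $v$, I keep a vector of counters, one per color. Crucially, because we are in the sequential model all $l$ copies of every color appear as a contiguous block in the stream, so at any moment only the counters for the currently active color are nonzero and need to be updated; whenever the color changes we ``close out'' the previous color's flow instance. For each color class we then solve the small min-cost flow instance on $G$ restricted to that color (this is exactly the flow formulation from the $r$-gather section, specialized to the chromatic constraint of at most one point per cluster per color), add its optimal value to a running total, and record the resulting assignment of left-vertices to centers in a compact data structure $D$. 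The data structure $D$ thus consists of $G$ together with, for every color, the integral flow along each edge; its size is bounded by the size of $G$ times the number of distinct colors that contribute edges at any left-vertex, which remains $O(f(k,\veps)\cdot(\log n+\log\Delta))$ by the argument of Theorem~\ref{thm:compr1}. Correctness of the $(1+\veps)$ guarantee follows from Theorem~\ref{thm:compr1} (edge lengths are within $(1\pm\veps)$ of true squared distances) combined with the exactness of the flow solution on each color's bipartite graph.

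Next, for $\mathcal{P}^{\C}(X,D)$ I make a second streaming pass to recover an actual feasible clustering. When a point $p$ of color $\chi$ arrives, I compute its hyperbucket $v=\phi(p)$ and look up in $D$ the remaining units of flow from $v$ to each center under color $\chi$; I assign $p$ to any center $c$ with positive remaining flow $f(v,c,\chi)$, decrement that flow by one, and move on. Because the flow values stored in $D$ correspond to a feasible integer flow that saturates every left-vertex's supply per color, every point gets assigned, no cluster receives two points of the same color, and the total cost is exactly the cost computed by $\mathcal{\bar{P}}^{\C}$ up to the $(1\pm\veps)$ distortion of the edge weights. This is again a single pass in $O(f(k,\veps)\cdot(\log n+\log\Delta))$ space.

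The main obstacle I anticipate is bookkeeping the per-color flows without the space blowing up with the number of colors. This is precisely where the sequential assumption is essential: since every color's block of points is contiguous, only one color's flow instance is ``live'' at any time, and once finalized we only need to store, per color, the integer flow values along the $O(k\cdot |V(G)|)$ edges of $G$, which fits in the stated budget (and can be further reduced by storing only nonzero flow entries). Finally, to remove the $\log\Delta$ dependence I would invoke the same trick used in the transition from Theorem~\ref{thm:compr} to Theorem~\ref{thm:compr1}: spend one additional pass to determine $d^{\star}$ and the distance contraction parameter $u$, so that all relevant distances become polynomially bounded and $\log\Delta$ can be replaced by $\log n$ at the cost of one extra pass, yielding the promised 2-pass improvement.
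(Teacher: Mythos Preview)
Your overall plan---reduce to sequential chromatic $k$-means, build the compressed bipartite graph of Theorem~\ref{thm:compr1}, and solve a small min-cost flow separately for each colour block---matches the paper's approach. The first pass as you describe it is fine: because colours are contiguous, you can close out each colour's flow instance using only one live set of per-hyperbucket counters.

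The gap is in your data structure $D$. You write that $D$ stores ``for every color, the integral flow along each edge'' and assert this fits in $O(f(k,\veps)\cdot(\log n+\log\Delta))$ space. But in the fault-tolerant reduction there is one colour per original data point, i.e.\ $\Theta(n)$ colours, so storing any per-colour information in $D$ costs $\Omega(n)$ space regardless of how few hyperbuckets there are. Your bound ``size of $G$ times the number of distinct colors that contribute edges at any left-vertex'' does not rescue this: summed over left-vertices, the number of (colour, vertex) incidences is at least the number of colours, which is $n$.

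The fix is simple and is what the paper does implicitly: let $D$ be just the compressed graph $G$ itself (no per-colour flow values). In the second pass you again exploit sequentiality---when a colour block arrives, buffer its $\le k$ points, recompute its hyperbucket counts, re-solve the tiny flow on $G$, emit the assignments, and discard. Thus only one colour's flow is ever materialised, in either pass. An equivalent shortcut specific to fault tolerance: since the $l$ copies of a colour are identical points, they all land in the same hyperbucket, and the optimal per-colour flow is simply ``assign to the $l$ nearest centers according to that hyperbucket's edge weights''; hence storing, for each hyperbucket, its $l$ nearest centers already suffices as $D$. Either way the size of $D$ is that of $G$, and your remaining arguments (including the extra pass to remove $\log\Delta$) go through.
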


Combining the above the lemma with Corollary~\ref{cor:str}, we get the following theorem for the fault tolerant $k$-means clustering problem.

\begin{theorem}
There is a $(1 + \veps)$-approximate, $4$-pass, streaming algorithm for the fault tolerant $k$-means clustering problem that uses $O(f(k, \veps)\cdot (\log{n}+\log \Delta))$-space, where $f(k, \veps) = (\frac{k}{\veps})^{O(\frac{k}{\veps})} $. Further, the space requirement can be improved to $O(f(k, \veps)\cdot \log{n})$ using a 5-pass streaming algorithm. 
\end{theorem}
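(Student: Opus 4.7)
The plan is to derive this theorem as a direct application of Corollary~\ref{cor:str}, with Lemma~\ref{lemma:fault} supplying the two streaming partition subroutines $\mathcal{\bar{P}}^{\C}$ and $\mathcal{P}^{\C}$ that the corollary requires. The streaming list-$k$-means algorithm from Lemma~\ref{lem:str} already produces, in $2$ passes using $O(f(k,\veps) \cdot \log n)$ space, a list $\L$ of candidate $k$-center sets that is guaranteed (whp) to contain a near-optimal one for the fault tolerant clustering's optimal feasible clustering. The remaining task is to select the best element of $\L$ and recover the corresponding clustering, and this is exactly what Corollary~\ref{cor:str} accomplishes once we plug in the right partition procedures.

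Concretely, by Lemma~\ref{lemma:fault}, both $\mathcal{\bar{P}}^{\C}(X,C)$ (which outputs a $(1+\veps)$-approximate cost together with a compressed data structure $D$) and $\mathcal{P}^{\C}(X,D)$ (which outputs an actual $(1+\veps)$-approximate feasible clustering) run in a single pass using space $O(f(k,\veps)\cdot(\log n + \log \Delta))$, where $f(k,\veps) = k^{O(k)}\cdot \log^k(1/\veps)$. Substituting $\bar{c}=c=1$ into Corollary~\ref{cor:str} gives a $(2+1+1)=4$-pass streaming algorithm. The space is dominated by the maximum over the list-generation space, the partition-algorithm workspace $\bar{S}$ and $S$, and the data structure $|D|$, each scaled by the factor $f(k,\veps)$ that arises because we must run the partitioning on every element of $\L$ (or equivalently, run $|\L|$ independent copies in parallel). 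Combining these bounds and absorbing constants into $f(k,\veps) = (k/\veps)^{O(k/\veps)}$ (which dominates $k^{O(k)} \log^k(1/\veps)$) yields the claimed $O(f(k,\veps)\cdot(\log n + \log \Delta))$ space bound. The $(1+\veps)$-approximation guarantee follows by the same reasoning as in the classical case: whp the list $\L$ contains a $k$-center set whose cost (under the fault tolerant partition algorithm) is within $(1+\veps)$ of $OPT$, and we pick the one that $\mathcal{\bar{P}}^{\C}$ reports as cheapest.

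For the sharper bound with dependence only on $\log n$, we invoke the strengthened version of Lemma~\ref{lemma:fault} that removes the $\log \Delta$ factor at the cost of one extra pass in the partition phase; this is the fault-tolerant analogue of the $\Delta$-removal argument of Lemma~\ref{lem:dist} and the construction following Theorem~\ref{thm:compr}. Substituting $\bar{c}+c = 3$ into Corollary~\ref{cor:str} then yields a $5$-pass algorithm using $O(f(k,\veps)\cdot \log n)$ space. The only mild subtlety, and the one deserving care in the write-up, is to confirm that the ``sequential chromatic'' reduction discussed just before Lemma~\ref{lemma:fault} remains compatible with the list-based framework of Corollary~\ref{cor:str}: because copies of the same colour class arrive contiguously in the stream, the per-colour flow formulations can be built and solved one after another within the claimed space, so the partition algorithm genuinely runs as a streaming procedure on the original fault tolerant instance.
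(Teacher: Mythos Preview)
Your proposal is correct and follows essentially the same approach as the paper: the paper's own proof is simply the one-line observation ``Combining the above lemma with Corollary~\ref{cor:str}, we get the following theorem,'' and you have spelled out the details of that combination (substituting $\bar{c}=c=1$ for the $4$-pass version and using the $2$-pass $\Delta$-removing variant of Lemma~\ref{lemma:fault} for the $5$-pass version).
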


\subsection{Semi-supervised $k$-means clustering}
Recall that the semi-supervised $k$-means clustering problem is defined as follows: Given dataset $X$, $\alpha \in [0, 1]$,  positive integer $k$, and a ``target" clustering $\bar{\X} = \{\bar{X}_1, ..., \bar{X_k}\}$, the goal is to find a clustering $\X = \{X_1, ..., X_k\}$ such that the following distance function is minimised: 
\[
cost(\X) = \alpha \cdot \Delta(\X) + (1 - \alpha) \cdot dist(\X, \bar{\X}),
\]
where the distance function $dist$ denotes the set difference distance. 
Let $OPT$ denote the cost of the optimal solution.

Note that the problem definition does not quite fit into the unified framework that that was given earlier in the introduction.
In the unified framework, the goal was to find a clustering $\X$ with minimum $\Delta(\X)$ that satisfies certain constraints $\C$.
In this problem the goal is to minimise the above distance function and there are no separate constraints.
Suppose there is a partition algorithm $\mathcal{P}(X, \{c_1, ..., c_k\})$ that returns a clustering $\X = \{X_1, ..., X_k\}$ with minimum value of $\alpha \cdot \sum_{i} \Phi(c_i, X_i) + (1 - \alpha) \cdot dist(\X, \bar{\X})$.
Does this guarantee that one can use the partition algorithm with the \GC algorithm to output a $(1+\veps)$-approximate solution (whp)? 
Let $\mathcal{S} = \{S_1, ..., S_k\}$ denote an optimal clustering. 
The \GC algorithm guarantees that there is a $k$-center set $\{c_1, ..., c_k\}$ in the output list $\L$ such that:
\[
\sum_{i} \Phi(c_i, S_i) \leq (1+\veps) \cdot \sum_i \Delta(S_i) = (1 + \veps) \cdot \Delta(\mathcal{S})
\]
Suppose $\mathcal{P}(X, \{c_1, ..., c_k\})$ returns $\mathcal{S}' = \{S_1', ..., S_k'\}$.
Then we claim that $cost(\mathcal{S}') \leq (1+\veps) \cdot cost(\mathcal{S}) = (1+\veps) \cdot OPT$.
This simply holds because
\begin{eqnarray*}
cost(\mathcal{S}') 
&=& \alpha \cdot \sum_{i} \Delta(S_i') + (1 - \alpha) \cdot dist(\mathcal{S}', \bar{\X}) \\
&\leq& \alpha \cdot \sum_{i} \Phi(c_i, S_i') + (1 - \alpha) \cdot dist(\mathcal{S}', \bar{\X}) \\
&\leq& \alpha \cdot \sum_{i} \Phi(c_i, S_i) + (1 - \alpha) \cdot dist(\mathcal{S}, \bar{\X})\\
&\leq& \alpha \cdot (1+\veps) \cdot  \Delta(\mathcal{S}) + (1-\alpha) \cdot dist(\mathcal{S}, \bar{\X}) \\ 
&\leq& (1+\veps) \cdot  \left( \alpha \cdot  \Delta(\mathcal{S}) + (1-\alpha) \cdot dist(\mathcal{S}, \bar{\X}) \right) \\
&=& (1+\veps) \cdot OPT.
\end{eqnarray*}

So, we just need to design a streaming partition algorithm $\mathcal{P}(X, C)$ as described above. 
The algorithm in the batch setting is given by Ding and Xu~\cite{dx15} using a minimum cost flow formulation as for the previous problems.
Since the mapping of centers $\{c_1, ..., c_k\}$ to clusters $S_1, ..., S_k$ is not known, we will try all possible $k!$ permutations.
For a fixed permutation $\pi$, the flow network is setup in the following manner: the cost of the edge $(x, c_i)$ is $\alpha \cdot ||x - c_i||^2$ in case $x \in S_{\pi(i)}$ and $\alpha \cdot ||x - c_i||^2 + (1-\alpha)$, otherwise. 
The other details of the construction is similar to that for the $r$-gather/capacity problem. For the streaming version, we will use our compressed graph idea. The space requirement will be the same as that for the $r$-gather/capacity problem, except that here there will be an extra multiplicative factor of $k!$ because of trying out all possible permutations. 
However, this gets absorbed into the $k^{O(k)}$ factor in the space requirement and we get a lemma very similar to that in the previous subsections.

\begin{lemma}
\label{lemma:semi}
Consider the semi-supervised $k$-means clustering problem.
There is a $1$-pass streaming algorithm $\mathcal{\bar{P}}^{\C}(X, C)$ that gives $(1+\veps)$-approximation to the cost of the optimal feasible clustering with respect to center set $C$ and a concise data structure $D$. 
Furthermore, there is a $1$-pass streaming algorithm $\mathcal{P}^{\C}(X, D)$ that outputs the $(1+\veps)$-approximate feasible optimal clustering with respect to center set $C$.
The space requirement for both the above algorithms and size of data structure $D$ is  $O(f(k, \veps) \cdot (\log{n}+\Delta)),$ where $f(k, \veps)$ is $k^{O(k)} \cdot \log^k (1/\veps).$ Further, the space requirement can be improved to $O(f(k, \veps)\cdot \log{n})$ using a 5-pass streaming algorithm. 
\end{lemma}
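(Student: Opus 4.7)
The plan is to mirror the proofs for the $r$-gather/capacity and fault tolerant cases, extending the compressed graph of Theorem~\ref{thm:compr1} so that it also records the target label of each point. Recall that in the batch algorithm of Ding and Xu~\cite{dx15}, for each of the $k!$ permutations $\pi$, a min-cost flow is solved on the bipartite graph between $X$ and $C$ in which the edge $(x, c_i)$ has cost $\alpha \cdot \|x - c_i\|^2$ if $x \in \bar{X}_{\pi(i)}$ and $\alpha \cdot \|x - c_i\|^2 + (1-\alpha)$ otherwise; the minimum over all $\pi$ is the desired optimum. Our streaming analogue will compress the $X$-side the same way as before but refined by target label.

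First, $\mathcal{\bar{P}}^{\C}(X, C)$: in one pass, construct a compressed bipartite graph whose left side has one vertex for every triple $(\mathbf{b}, C', j)$ where $\mathbf{b}$ is a non-empty hyperbucket (as in Theorem~\ref{thm:compr1}), $C' \subseteq C$ is the set of centers to which points in $\mathbf{b}$ may be assigned, and $j \in \{1, \dots, k\}$ is the target-cluster index. When a point $x$ arrives we compute $(\phi(x), C(x), \ell(x))$, where $\ell(x)$ is the index of the target cluster containing $x$, and increment the counter $n_v$ at the corresponding left-vertex $v$. Then, for each of the $k!$ permutations $\pi$, we solve the min-cost flow on this compressed graph with: source-to-$v$ capacity $n_v$ and cost $0$; edge from $v=(\mathbf{b},C',j)$ to $c_i \in C'$ of cost $\alpha \cdot w(\mathbf{b}, c_i) + (1-\alpha)\cdot[\pi(i)\neq j]$, where $w(\mathbf{b}, c_i)$ is the squared approximate distance recorded for the hyperbucket; and unbounded-capacity cost-$0$ edges from each $c_i$ to the sink. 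The minimum cost over all $\pi$ is returned as the $(1+\veps)$-approximate semi-supervised cost, and the data structure $D$ retained is the compressed graph together with the optimal permutation $\pi^{\star}$ and the corresponding flow assignment.

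Second, $\mathcal{P}^{\C}(X, D)$: make one additional pass and, for each incoming $x$, compute $(\phi(x), C(x), \ell(x))$, look up its vertex $v$ in $D$, pick any one unit of flow still leaving $v$ to some center $c_i$, assign $x$ to $c_i$, and decrement $n_v$ and the flow. This produces a clustering whose total cost equals the flow value computed in the first procedure. Approximation follows from the elementary inequality $\alpha(1+\veps)A + (1-\alpha)B \leq (1+\veps)(\alpha A + (1-\alpha) B)$ for non-negative $A, B$: the hyperbucket discretization introduces a $(1+\veps)$ error only on the geometric term, while the set-difference term (which depends only on labels) is computed exactly. The space bound matches the previous lemmas up to a factor of $k$ from the label refinement and a factor of $k!$ from trying all permutations, both of which are absorbed into the $k^{O(k)}$ part of $f(k,\veps)$; the $\log\Delta$ dependence can be removed by one additional pass exactly as in Theorem~\ref{thm:compr1}, giving the $5$-pass, $O(f(k,\veps)\cdot \log n)$ variant.

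The main obstacle is handling the hybrid discrete-plus-geometric objective: the set-difference penalty is paid \emph{per point} and depends on its label, so naively bucketizing by hyperbucket and valid center set alone (as in the $r$-gather case) would conflate points with different labels and destroy exactness of the discrete term. Refining each hyperbucket by label $j$ resolves this cleanly, because within a refined vertex all points have the same label and hence the same set-difference contribution to any edge cost, so the flow cost on the compressed graph equals (up to the geometric $(1+\veps)$ factor) the cost of the induced clustering of the original points.
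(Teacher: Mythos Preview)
Your proposal is correct and follows essentially the same approach as the paper: both use the compressed bipartite graph of Theorem~\ref{thm:compr1}, try all $k!$ permutations, and set up the min-cost flow with edge costs $\alpha\cdot\|x-c_i\|^2 + (1-\alpha)\cdot[\pi(i)\neq \ell(x)]$, absorbing the extra permutation factor into $k^{O(k)}$. The one point you make explicit that the paper glosses over is the need to refine each hyperbucket vertex by the target-cluster label $\ell(x)$; this is indeed required so that the discrete set-difference term is computed exactly on the compressed graph, and your observation that the $(1+\veps)$ discretization error touches only the geometric term is a clean way to certify the approximation guarantee.
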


Finally, the discussion above gives the following formal result for the semi-supervised $k$-means problem.

\begin{theorem}
There is a $(1 + \veps)$-approximate, $4$-pass, streaming algorithm for the semi-supervised $k$-means clustering problem that uses $O(f(k, \veps)\cdot (\log{n}+ \log \Delta))$-space, where $f(k, \veps) = (\frac{k}{\veps})^{O(\frac{k}{\veps})} $.
Further, the space requirement can be improved to $O(f(k, \veps)\cdot \log{n})$ using a 5-pass streaming algorithm. 
\end{theorem}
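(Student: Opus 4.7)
The plan is to obtain this theorem as a direct instantiation of the framework already assembled: combine the streaming list-$k$-means procedure from Lemma~\ref{lem:str} with the two streaming partition subroutines for semi-supervised $k$-means given by Lemma~\ref{lemma:semi}, and then invoke Corollary~\ref{cor:str} (in the $(1+\veps)$-approximate ``Remark'' form). Concretely, I would (i) run the $2$-pass \GC algorithm to produce the list $\L$ of candidate $k$-center sets, (ii) for each $C \in \L$, run the single-pass cost-only partition algorithm $\mathcal{\bar{P}}^{\C}(X,C)$ in parallel, keeping their compressed data structures, (iii) identify the center set $C'$ whose estimate is minimum (along with its data structure $D'$), and (iv) run $\mathcal{P}^{\C}(X, D')$ in a single additional pass to actually emit the clustering. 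This totals $2+1+1 = 4$ passes.

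For correctness I would appeal to the inequality chain already derived just before Lemma~\ref{lemma:semi}: if $\mathcal{S} = \{S_1,\dots,S_k\}$ is an optimal semi-supervised clustering, Theorem~\ref{thm:2} guarantees (whp) the presence in $\L$ of centers $\{c_1,\dots,c_k\}$ with $\sum_i \Phi(c_i, S_i) \le (1+\veps)\,\Delta(\mathcal{S})$, and feeding these to $\mathcal{P}$ then gives $\mathrm{cost}(\mathcal{S}') \le (1+\veps)\cdot OPT$. Since $\mathcal{\bar{P}}^{\C}$ returns only a $(1+\veps)$-approximate cost, a small rescaling of $\veps$ yields the claimed $(1+\veps)$-factor overall. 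The main point to verify is that taking the argmin over the approximate costs commutes with approximation: losing another $(1+\veps)$ factor here is harmless after rescaling.

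The space bound follows by summing contributions: Lemma~\ref{lem:str} contributes $O(f(k,\veps)\log n)$, and each of the $|\L|$ concurrent runs of $\mathcal{\bar{P}}^{\C}$ and the final $\mathcal{P}^{\C}$ run each take $O(f(k,\veps)(\log n + \log\Delta))$ by Lemma~\ref{lemma:semi}. The extra $k!$ factor from trying all permutations in the semi-supervised flow formulation is absorbed into the $k^{O(k)}$ part of $f(k,\veps)$. Since $|\L| \le f(k,\veps)$, the total is still $O(f(k,\veps)(\log n + \log\Delta))$ with $f(k,\veps)=(k/\veps)^{O(k/\veps)}$.

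Finally, to upgrade to the $5$-pass, $O(f(k,\veps)\log n)$ statement, I would insert one extra preliminary pass to estimate the aspect-ratio-like quantity needed by the ``removing the dependence on $\Delta$'' construction in Theorem~\ref{thm:compr1} (namely, the maximum distance $d^\star$ from each point to its nearest center in $C'$ and the guess $u \in D$); this lets the compressed-graph construction replace $\log\Delta$ by $\log n$. I do not anticipate any serious obstacle; the only nontrivial conceptual issue is the non-standard cost function of the semi-supervised variant, and that issue has already been resolved in the discussion immediately preceding Lemma~\ref{lemma:semi}, so the proof is essentially bookkeeping on top of the existing lemmas.
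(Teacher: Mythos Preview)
Your proposal is correct and is exactly the approach the paper takes: the theorem is obtained by plugging Lemma~\ref{lemma:semi} into Corollary~\ref{cor:str} (with the correctness justification for the non-standard cost handled by the inequality chain preceding Lemma~\ref{lemma:semi}), and the $5$-pass improvement comes from the extra pass in Theorem~\ref{thm:compr1}. The only minor imprecision is that you conflate the two functions named $f(k,\veps)$---the one in Lemma~\ref{lemma:semi} is $k^{O(k)}\log^k(1/\veps)$ while the one in the theorem is $(k/\veps)^{O(k/\veps)}$---but since the former is dominated by the latter and squaring $(k/\veps)^{O(k/\veps)}$ is absorbed into the same asymptotic, your space bound is correct.
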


\subsection{Impossibility results}
In this subsection, we will argue that a deterministic logspace streaming algorithm for the chromatic $k$-means clustering problem and $l$-diversity clustering problem is not possible. 
The argument was conveyed to us by Khanna and Assadi~\cite{personalcom}, we give the details for sake of compeleteness.
We will give an argument for the chromatic $k$-means clustering problem that can be extended to the $l$-diversity clustering problem.
Recall, that in the chromatic $k$-means problem every point in the dataset $X$ has an associated colour and the goal is to find a clustering $\X = \{X_1, ..., X_k\}$ of the dataset $X$ with smallest cost $\Delta(\X)$ such that no cluster has more than one point of the same colour.
For the sake of contradiction, assume that there is a logspace streaming algorithm $\mathcal{A}$ for this problem.
We will obtain a contradiction using the following cleverly designed dataset: 
All the points in the dataset $X$ are co-located (i.e., have the same coordinates). 
There are $m = \frac{n}{2}$ colours and there are precisely two points with the same colour (assume that $n$ is even). The number of clusters $k = 2$. 
We will argue that it is not possible for any streaming algorithm to partition the data points into two clusters making sure that every two points with same colour are in different clusters unless it uses $\Omega(n)$ space (hence contradicting the existence of $\mathcal{A}$).

Assuming the existence of $\mathcal{A}$, the following simple four party communication protocol for producing a feasible clustering with logarithmic communication cost should also exist. 
Each of the four parties $P_1, P_2, P_3$ and $P_4$ receive a disjoint partition $D_1, D_2, D_3, D_4$ of the dataset $X$. 
First $P_1$ decides the clustering of every point in $D_1$ and then communicates message $\mathcal{M}_1$ to $P_2$. 
After this, $P_2$ decides the clustering of points in $D_2$ and sends message $\mathcal{M}_2$ to $P_3$ who then decides the clustering of points in $D_3$ and sends a message $\mathcal{M}_3$ to $P_4$ who finally decides the clustering of points in $D_4$. 
Let the $m$ colours be denoted by the numbers $\{1, ..., m\}$. We will make use the following combinatorial lemma from coding theory in our counting argument.

\begin{lemma}
There exists a set $S$ of $\frac{m}{2}$-sized subsets of $\{1, ..., m\}$ such that (i) $\forall Y \neq Z \in S$, $|(Y\cup Z) \setminus(Y \cap Z)| \geq \frac{m}{6}$, and (ii) $|S| = 2^{\Omega(m)}$. 
\end{lemma}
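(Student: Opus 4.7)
The plan is to argue by a standard greedy/volume counting argument, after observing that $|(Y\cup Z)\setminus (Y\cap Z)|$ is just the Hamming distance (symmetric difference) $|Y\triangle Z|$ of the indicator vectors of $Y$ and $Z$. So the claim is equivalent to the existence of an exponentially large constant--weight binary code of length $m$, constant weight $m/2$, and relative Hamming distance at least $1/6$. This is a classical Gilbert--Varshamov style statement for constant--weight codes.

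First I would fix an arbitrary $Y\subseteq\{1,\dots,m\}$ with $|Y|=m/2$ and bound the number of ``forbidden'' subsets
\[
B(Y) \;:=\; \bigl\{\,Z\subseteq\{1,\dots,m\}:|Z|=m/2,\ |Y\triangle Z|<m/6\,\bigr\}.
\]
Since $|Y|=|Z|=m/2$, the identity $|Y\triangle Z|=m-2|Y\cap Z|$ shows $|Y\triangle Z|<m/6$ iff $|Y\cap Z|>5m/12$. Writing $a=|Y\cap Z|$, the number of such $Z$ equals $\sum_{a>5m/12}\binom{m/2}{a}\binom{m/2}{m/2-a}$. Think of $Z$ as a uniformly chosen $m/2$-subset; then $|Y\cap Z|$ is hypergeometric with mean $m/4$, so a standard Chernoff/Hoeffding bound for hypergeometric tails gives
\[
\Pr\bigl[\,|Y\cap Z|\ge 5m/12\,\bigr] \;\le\; 2^{-c_0 m}
\]
for some absolute constant $c_0>0$. (One can also derive this from the entropy bound $\binom{m/2}{a}\binom{m/2}{m/2-a}\le 2^{mH(2a/m)}$ together with $H(5/6)<1$.) Multiplying by $\binom{m}{m/2}\le 2^m$ yields $|B(Y)|\le 2^{(1-c_0)m}$.

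The second step is a greedy construction. Let $\mathcal{F}$ denote the family of all $m/2$-subsets of $\{1,\dots,m\}$, so $|\mathcal{F}|=\binom{m}{m/2}\ge 2^m/(m+1)$. Repeatedly pick any $Y\in\mathcal{F}$, add it to $S$, and remove from $\mathcal{F}$ every element of $B(Y)$ (together with $Y$ itself). At each step we remove at most $2^{(1-c_0)m}$ subsets, so the process runs for at least
\[
\frac{|\mathcal{F}|}{2^{(1-c_0)m}} \;\ge\; \frac{2^m/(m+1)}{2^{(1-c_0)m}} \;=\; 2^{c_0 m-\log_2(m+1)} \;=\; 2^{\Omega(m)}
\]
iterations, and the resulting set $S$ satisfies both (i) and (ii) by construction. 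The only ``calculation'' that needs some care is the tail bound for the hypergeometric overlap; everything else is volume/greedy packing. I do not anticipate an obstacle, since this is the textbook Gilbert--Varshamov lower bound specialised to the Johnson-scheme (constant-weight) setting.
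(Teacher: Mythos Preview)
Your argument is correct: identifying $|(Y\cup Z)\setminus(Y\cap Z)|$ with the Hamming distance of the indicator vectors and then running a Gilbert--Varshamov volume/greedy argument on the constant-weight slice is exactly the right approach, and your hypergeometric tail bound and counting are sound.

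As for comparison with the paper: there is nothing to compare. The paper does not prove this lemma at all; it simply invokes it as ``the following combinatorial lemma from coding theory'' and moves on. So your proposal is not an alternative to the paper's proof but rather a self-contained justification of a fact the paper takes for granted. What you have written is precisely the standard coding-theoretic argument the paper is alluding to.
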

Note that since all the points are co-located, the only relevant information for each point is the colour of the point. So, the data can be seen as just a (multi) set of colours.
We can also show the following using the above lemma and the fact that each party sends message of logarithmic size. 

\begin{lemma}
There exists three $\frac{m}{2}$-sized sets $Y_1, Z_1, Z_2$ with the following properties: 
\begin{enumerate}
\item $P_1$ send the same message $M_1$ to $P_2$ on both $Y_1$ and $Z_1$,
\item There exists at least one element $i \in Y_1\setminus Z_1$ that is not present in $Z_2$, and 
\item $P_2$ conditioned on receiving message $M_1$, sends the same message $M_2$ to $P_3$ on both $Y_1$ and $Z_2$.
\end{enumerate}
\end{lemma}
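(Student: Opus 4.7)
The plan is an iterated pigeonhole argument exploiting the logarithmic-space bound, combined with the large-symmetric-difference guarantee of the preceding coding lemma. I will view each $Y \in S$ as parameterizing a hard input for the four-party protocol: for concreteness, $P_1$ receives one copy of each color in $Y$ while $P_2$ receives one copy of each color in $\{1,\ldots,m\}\setminus Y$ (the distribution of the second copies among $P_3, P_4$ is irrelevant for the first two rounds of communication). Since the streaming algorithm uses $O(\log n)$ space, every message is encoded in $O(\log n)$ bits and therefore ranges over at most $\poly(n)$ possible values.

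First I would apply pigeonhole to $P_1$'s message, which depends only on $Y$: there exist a specific $M_1$ and a subfamily $S_1 \subseteq S$ with $|S_1| \geq |S|/\poly(n) = 2^{\Omega(m)}$ on which $P_1$ always transmits $M_1$. Next, conditioned on $M_1$ being delivered, $P_2$'s outgoing message depends on $Y$ only through $P_2$'s input. A second application of pigeonhole yields an $M_2$ and a subfamily $S_2 \subseteq S_1$ with $|S_2| \geq |S_1|/\poly(n) = 2^{\Omega(m)}$ on which $P_2$ always sends $M_2$. Any choice of $Y_1, Z_2 \in S_2$ together with any $Z_1 \in S_1$ then automatically secures properties 1 and 3 of the lemma.

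It remains to enforce property 2. Fix an arbitrary $Y_1 \in S_2$; the goal is to find $(Z_1, Z_2) \in S_1 \times S_2$ with $Y_1 \setminus (Z_1 \cup Z_2) \neq \emptyset$. By the preceding coding lemma, every $Z_1 \in S_1 \setminus \{Y_1\}$ satisfies $|Y_1 \setminus Z_1| \geq m/12$, and the number of $m/2$-subsets of $\{1,\ldots,m\}$ that contain any fixed $(m/12)$-element set is at most $\binom{11m/12}{5m/12} \leq 2^{\gamma m}$ for an explicit $\gamma < 1$ (by Stirling). Consequently, the number of ``bad'' pairs $(Z_1, Z_2)$ with $Y_1 \setminus Z_1 \subseteq Z_2$ is at most $|S_1| \cdot 2^{\gamma m}$, which is strictly smaller than $|S_1| \cdot |S_2|$ whenever $|S_2| > 2^{\gamma m}$. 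A simple averaging argument then delivers a good pair, completing the construction.

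The main obstacle is quantitative: the constant hidden in the lower bound $|S| = 2^{\Omega(m)}$ of the preceding lemma must strictly exceed the entropy constant $\gamma$ coming from the $(m/12)$-subset count, even after absorbing two $\poly(n)$ pigeonhole losses. This is achievable via a standard high-distance code construction (e.g.\ Reed-Solomon concatenated with an appropriate inner code), and if the constant in the original statement is not tight enough one can relax the symmetric-difference parameter of the preceding lemma from $m/6$ to a smaller constant in order to inflate $|S|$ correspondingly.
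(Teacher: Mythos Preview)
Your iterated-pigeonhole skeleton is the right idea, but the way you parameterize $P_2$'s input breaks property~3. You let a single $Y\in S$ determine the whole instance, with $P_1$ receiving $Y$ and $P_2$ receiving $\{1,\dots,m\}\setminus Y$. Your second pigeonhole then guarantees only that $P_2(\overline{Y_1},M_1)=P_2(\overline{Z_2},M_1)=M_2$, whereas the lemma (and the three scenarios that follow it) require $P_2(Y_1,M_1)=P_2(Z_2,M_1)=M_2$: in scenarios~1 and~2, $P_2$'s input is literally $Y_1$, and in scenario~3 it is $Z_2$, both members of $S$, not complements. Since $S$ is not closed under complementation in general, and since you also need $P_1(Y_1)=M_1$ simultaneously, there is no relabelling that rescues the argument as written. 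The fix is to decouple the two pigeonholes: let $P_1$'s input $A$ and $P_2$'s input $B$ range \emph{independently} over $S$. First pigeonhole on $A$ to get $S_1\subseteq S$ with $P_1(A)=M_1$ for all $A\in S_1$; then, with $M_1$ fixed, pigeonhole on $B\in S_1$ to get $S_2\subseteq S_1$ with $P_2(B,M_1)=M_2$ for all $B\in S_2$. Now any $Y_1\in S_2$, $Z_1\in S_1$, $Z_2\in S_2$ satisfy properties~1 and~3 verbatim.

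Once the pigeonhole is set up this way, your counting argument for property~2 is unnecessary and its quantitative worry disappears. Nothing in the lemma or in the downstream scenarios forbids $Z_1=Z_2$: simply take any $Y_1\in S_2$ and any $Z_1=Z_2\in S_2\setminus\{Y_1\}$ (which exists since $|S_2|=2^{\Omega(m)}\ge 2$). Because $|Y_1|=|Z_1|=m/2$ and $Y_1\neq Z_1$, there is some $i\in Y_1\setminus Z_1$, and automatically $i\notin Z_2=Z_1$. Thus the large-distance property of $S$ is not actually needed for property~2; the role of the coding lemma is only to furnish a family of size $2^{\Omega(m)}$ that survives the two $\mathrm{poly}(n)$ pigeonhole losses.
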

Given the above lemma, consider the following three scenarios for the $4$-party protocol:
\begin{enumerate}
\item {\it Scenario 1}: $P_1$ receives $Y_1$, $P_2$ receives $Y_1$, $P_3$ receives $\{\}$, and $P_4$ receives the remaining points.
\item {\it Scenario 2}: $P_1$ receives $Z_1$, $P_2$ receives $Y_1$, $P_3$ receives $\{i\}$, and $P_4$ receives the remaining points.
\item {\it Scenario 3}: $P_1$ receives $Y_1$, $P_2$ receives $Z_2$, $P_3$ receives $\{i\}$, and $P_4$ receives the remaining points.
\end{enumerate}
We will obtain a contradiction from the following sequence of arguments:
\begin{itemize}
\item {\it Scenario 1}: WLOG assume that $P_1$ assigns $i$ from $Y_1$ to cluster $1$. So, $P_2$ must assign $i$ from $Y_1$ to cluster $2$.

\item {\it Scenario 2}: As $P_2$ gets the same message $M_1$ as in scenario $1$, it must assign $i$ from $Y_1$ to cluster $2$. This means the $P_3$ must assign $i$ to cluster $1$.

\item {\it Scenario 3}: $P_1$ assigns $i$ in $Y_1$ to cluster $1$. Now, since $P_2$ sends the same message $M_2$ to $P_3$ as in scenario $2$, $P_3$ must assign $i$ to cluster $1$ (as in scenario $2$). This is a contradiction because two points with same colour cannot be assigned the same cluster.
\end{itemize}

\section{Faster {PTAS} for $\beta$-distributed $k$-means instances}\label{sec:fpt-as}
A lot of work has been done the area of algorithm design for the $k$-means problem under various clusterability conditions. 
In the next subsection, we first have a discussion on a few cost-based clusterability notions and their interrelationship that are relevant to this work

\subsection{Stability/separation conditions}

The early notions of separation conditions ware based on the cost of the optimal $k$-means solution. 
These were defined in the works of Ostrovsky \etal~\cite{orss} and Kumar \etal~\cite{kss}.
The main idea here is to study the behaviour of the optimal $k$-means cost $\Gamma_{X}(k)$ as a function of $k$. 
Clearly, $\Gamma_X$ is a decreasing function for any dataset $X$ since the optimal $k$-means cost will decrease as $k$ increases.
If the value of $\Gamma_X(i)$ is significantly smaller than $\Gamma_X(i-1)$, then it makes sense to cluster into $i$ clusters than $(i-1)$ clusters.
This idea can be used to estimate the {\em right} value of $k$, the number of clusters, in many practical scenarios where the number of clusters cannot be a-priori determined.
This separation condition is commonly referred to by the name ORSS property (based on the authors of the paper~\cite{orss}) and the irreducibility property~\cite{kss}.
This is formally defined below.

\begin{definition}[$(1+\gamma)$-irreducibility]
Let $\gamma > 0$.
A $k$-means instance $(X, k)$ is said to be $(1+\gamma)$-irreducible if $OPT_{k-1} \geq (1+\gamma) \cdot OPT_k$, where $OPT_i$ denotes the optimal $i$-means cost for the dataset $X$.
\end{definition}

Another very similar notion is that of {\em weak deletion stability} defined by Awasthi \etal~\cite{abs10}. Recall 
that in a solution to the $k$-means problem, each point is {\em assigned} to the closest center. 

\begin{definition}[$(1+\gamma)$-weak deletion stability]
Let $\gamma > 0$, and consider an instance $(X,k)$ of the $k$-means problem. 
Let $\{c_1^{\star}, ..., c_k^{\star}\}$ denote an optimal set of  $k$ centers for this instance. 
This instance is said to be $(1+\gamma)$-weakly deletion stable
if  for any $i \neq j$, $OPT^{(i \rightarrow j)} > (1 + \gamma) \cdot \OPT$, where $\OPT$ denotes the optimal $k$-means cost and $OPT^{(i \rightarrow j)}$ denotes the cost of clustering obtained by removing $c_i^{\star}$ as a center and assigning all the
points which were assigned to it to the  center $c_j^{\star}$.
\end{definition}

The next simple lemma establishes that weak deletion stability condition is a weaker condition than the irreducibility condition. 

\begin{lemma}[Claim 3.3 in ~\cite{abs10}]
Any $(1+\gamma)$-irreducible dataset is also $(1+\gamma)$-weakly deletion stable.
\end{lemma}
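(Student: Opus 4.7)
The plan is to observe that the clustering obtained in the definition of $OPT^{(i\to j)}$ is a specific feasible $(k-1)$-means solution, so its cost is at least $OPT_{k-1}$, and then invoke irreducibility to finish.

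More concretely, let $(X,k)$ be a $(1+\gamma)$-irreducible instance and fix an optimal $k$-means solution $\{c_1^{\star}, \ldots, c_k^{\star}\}$ of cost $\OPT = OPT_k$. For arbitrary $i \neq j$, consider the center set $\{c_1^{\star}, \ldots, c_k^{\star}\} \setminus \{c_i^{\star}\}$, which has cardinality $k-1$. The quantity $OPT^{(i\to j)}$ is defined by taking this reduced center set and assigning each point that was previously assigned to $c_i^{\star}$ to $c_j^{\star}$ (while all other points stay with their original centers). This yields a valid clustering of $X$ into at most $k-1$ groups, so its cost upper-bounds $OPT_{k-1}$ from above; in symbols, $OPT^{(i\to j)} \geq OPT_{k-1}$. (In fact one would normally reassign each point to its nearest remaining center, which can only decrease the cost, but the one-sided inequality above is all we need.)

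Now apply the $(1+\gamma)$-irreducibility hypothesis, which gives $OPT_{k-1} \geq (1+\gamma)\cdot OPT_k = (1+\gamma)\cdot \OPT$. Chaining the two inequalities yields $OPT^{(i\to j)} \geq (1+\gamma)\cdot \OPT$, which is (up to weak vs.\ strict inequality) the conclusion of $(1+\gamma)$-weak deletion stability. The minor strict-vs-non-strict discrepancy between the two definitions is inessential: either one assumes irreducibility with strict inequality (as is standard), or one replaces $\gamma$ by any $\gamma' < \gamma$ in the conclusion and absorbs the slack.

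The argument is essentially a one-line reduction, so there is no real obstacle; the only thing worth being careful about is the observation that $OPT^{(i\to j)}$ arises from a concrete $(k-1)$-center clustering and therefore cannot be smaller than the optimal $(k-1)$-means cost.
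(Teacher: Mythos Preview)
Your argument is correct and is exactly the standard one-line reduction: the clustering defining $OPT^{(i\to j)}$ uses only the $k-1$ centers $\{c_1^\star,\ldots,c_k^\star\}\setminus\{c_i^\star\}$, hence $OPT^{(i\to j)}\geq OPT_{k-1}\geq(1+\gamma)\,OPT_k$. The paper itself does not give a proof of this lemma---it simply cites Claim~3.3 of~\cite{abs10}---so there is nothing further to compare; your proof is the natural (and essentially only) argument, and your remark on the strict-versus-nonstrict inequality is an appropriate caveat.
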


Let us see why the above conditions can be interpreted as separation conditions. 
Let us discuss in terms of the weak deletion stability condition since this is the weaker condition of the above two. 
The following simple and well-known fact (restatement of Fact~\ref{fact:1} in preliminaries) will be used in the discussion.

\begin{fact}
The following holds for any point set $X \subseteq \R^d$ and any point $p \in \R^d$:
\[
\sum_{x \in X}||x - p||^2 = \sum_{x \in X} ||x - \mu(X)||^2 + |X| \cdot ||p - \mu(X)||^2,
\]
where $\mu(X) = \frac{\sum_{x \in X} x}{|X|}$ is the centroid of the point set.
\end{fact}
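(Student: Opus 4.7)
The plan is to prove this identity by the classical ``add and subtract the centroid'' trick. For each $x \in X$ I would write $x - p = (x - \mu(X)) + (\mu(X) - p)$ and expand the squared Euclidean norm using the bilinearity of the inner product. This gives the pointwise identity
\[
\|x - p\|^2 = \|x - \mu(X)\|^2 + 2 \langle x - \mu(X),\ \mu(X) - p \rangle + \|\mu(X) - p\|^2.
\]
Summing this identity over all $x \in X$ produces three contributions: the first term sums to $\sum_{x \in X}\|x - \mu(X)\|^2$, which is precisely the $\Phi(\mu(X), X)$ piece on the right-hand side of the claim; the third term is constant in $x$, so it contributes $|X| \cdot \|\mu(X) - p\|^2$, which is the second piece on the right-hand side; and the middle term collects into a single cross term $2 \langle \sum_{x \in X}(x - \mu(X)),\ \mu(X) - p \rangle$.

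The one step that needs verification is that this cross term vanishes. That follows immediately from the defining property of the centroid: $\sum_{x \in X}(x - \mu(X)) = \sum_{x \in X} x - |X| \cdot \mu(X) = 0$, since $\mu(X) = \frac{1}{|X|}\sum_{x \in X} x$. Once this cancellation is observed, the identity drops out directly.

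I do not anticipate any real obstacle; the statement is really a one-line computation, and in fact it is the standard argument showing that $\mu(X)$ is the optimal $1$-means center, since the identity exhibits $|X| \cdot \|\mu(X) - p\|^2 \geq 0$ as the extra cost of replacing $\mu(X)$ by an arbitrary $p \in \R^d$. The only care required is to stay with Euclidean squared distances throughout, since the cancellation of the cross term relies specifically on the inner-product structure and would not go through for general metrics.
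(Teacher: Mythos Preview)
Your proposal is correct; this is exactly the standard ``add and subtract the centroid'' argument, and the cross term vanishes for precisely the reason you give. The paper itself does not supply a proof for this statement---it is labelled as a well-known Fact (a restatement of Fact~\ref{fact:1}) and is simply asserted---so your write-up is already more detailed than what appears there.
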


Note that the above fact implies that the optimal $1$-means solution for any dataset is the centroid of the dataset. 
Let us fix a few notations that we will use in the remaining discussion.
For any $k$-means instance $(X, k)$, we will use  $X_1^{\star}, ..., X_k^{\star}$ to denote the optimal clusters and $c_1^{\star}, ..., c_k^{\star}$ denote the optimal cluster centers.
As defined earlier, the optimal $k$-means cost is $\OPT$ and the optimal $k$-means cost of the $i^{th}$ cluster $X_i^{\star}$ is $\Delta(X^{\star}_i)$.
That is, $\OPT = \sum_{i=1}^{k} \Delta(X^{\star}_i)$.
Consider any $k$-means instance $(X, k)$ that is $(1+\gamma)$-weak deletion stable and any $i \neq j$. 
Using the above fact, we have $OPT^{i \rightarrow j} = \Delta(X^{\star}_i) + |X_i^{\star}| \cdot ||c_i^{\star} -  c_j^{\star}||^2 > (1 + \gamma) \cdot \OPT$ which further implies that 

\begin{equation}\label{eqn:fptas-1-1}
\forall i \neq j, ||c_i^{\star} - c_j^{\star}||^2 > \frac{(1 + \gamma)\cdot \OPT - \Delta(X^{\star}_i)}{|X_{i}^{\star}|}
\end{equation}
The quantity on the right of eqn. (\ref{eqn:fptas-1-1}) can be quite large indicating large separation between the optimal centers.
This, in turn, indicates large separation between optimal clusters.
Another cost-based property was defined by Balcan \etal~\cite{bbg} that they called the {\em approximation stability} condition\footnote{This is popularly also known as the {\em BBG property} due to the name of the authors.}:

\begin{definition}[$(1+\gamma, \delta)$-approximation stability]
A $k$-means instance $(X, k)$ is said to be $(1+\gamma, \delta)$ approximation stable instance with respect to an optimal clustering iff for every $k$ clustering of the dataset $X$ whose cost is at most $(1+\gamma) \cdot \OPT$, the partition disagrees with the optimal clustering on at most $\delta$ fraction of the points.
\end{definition}

The following result from ~\cite{abs10} shows that if the given dataset $X$ is $(1+\gamma, \delta)$ approximation stable and all clusters in the optimal clustering have at least $\delta \cdot |X|$ points, then the instance $(X, k)$ also satisfies $(1+\gamma)$-weak deletion stability.

\begin{lemma}[Claim 3.4 in \cite{abs10}]
Any $k$-means instance $(X, k)$ that satisfies $(1+\gamma, \delta)$-approximation stability with respect to an optimal clustering and that has all optimal clusters of size at least $\delta |X|$, also satisfies the $(1+\gamma)$-weak deletion stability.
\end{lemma}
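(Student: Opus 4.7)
I will prove the contrapositive: if the instance $(X, k)$ fails $(1+\gamma)$-weak deletion stability, then, given the size hypothesis on the optimal clusters, it also fails $(1+\gamma, \delta)$-approximation stability. So suppose there exist $i \neq j$ with $OPT^{(i\to j)} \leq (1+\gamma)\cdot \OPT$. Let $\mathcal{C}_{ij}$ be the clustering obtained from the optimal clustering $\{X_1^\star,\ldots,X_k^\star\}$ by merging $X_i^\star$ into the cluster associated with $c_j^\star$, leaving the other optimal clusters unchanged. By the definition of $OPT^{(i\to j)}$, the cost of $\mathcal{C}_{ij}$ is at most $(1+\gamma)\cdot \OPT$.

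The task is now to exhibit a $k$-partition $\mathcal{C}'$ of $X$ that (a) has cost at most $(1+\gamma)\cdot \OPT$ and (b) disagrees with the optimal clustering on more than $\delta |X|$ points; this will contradict $(1+\gamma, \delta)$-approximation stability. For (a), since $\mathcal{C}_{ij}$ is only a $(k-1)$-partition (one cluster is empty), I promote a single point $p$ from the merged cluster to a new singleton cluster. By Fact~\ref{fact:1}, removing a point from a cluster and replacing its center by the updated centroid can only decrease the $1$-means cost, while the singleton $\{p\}$ has cost $0$. Hence the cost of $\mathcal{C}'$ is at most the cost of $\mathcal{C}_{ij}$, which is at most $(1+\gamma)\cdot\OPT$. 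If the ambient definition of ``$k$-clustering'' admits empty parts, this singleton trick is unnecessary and one may simply take $\mathcal{C}' = \mathcal{C}_{ij}$.

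For (b), I count disagreements under the best permutation $\pi$ matching the parts of $\mathcal{C}'$ to those of $\{X_1^\star,\ldots,X_k^\star\}$. The merged cluster of $\mathcal{C}'$ contains essentially all of $X_i^\star \cup X_j^\star$ (minus at most the promoted point $p$). Whichever of the two indices $i$ or $j$ that $\pi$ assigns to this merged part, the points of the \emph{other} original optimal cluster lying inside it are all misclassified. By the size hypothesis, $|X_i^\star|, |X_j^\star| \geq \delta |X|$, so the merged part alone contributes at least $\delta|X| - O(1)$ disagreements; choosing $p$ from the larger side and absorbing the off-by-one either by a mild slack or by reading $\mathcal{C}_{ij}$ as a $k$-partition with an empty part, one obtains disagreement exceeding $\delta |X|$. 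This contradicts approximation stability.

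The main obstacle, and the precise reason the size assumption on optimal clusters is required, lies in step (b): if some $X_i^\star$ had fewer than $\delta|X|$ points, then merging it into another cluster could produce a cheap clustering that differs from the optimum on only $o(\delta|X|)$ points, so weak deletion stability could fail without violating approximation stability. A secondary technical subtlety is the gap between a $(k-1)$-partition and a $k$-partition, which the singleton trick resolves cleanly; this is the only place where the argument needs to be a touch careful rather than purely combinatorial.
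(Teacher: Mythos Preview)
The paper does not actually prove this lemma; it is stated with attribution to Awasthi \etal~\cite{abs10} (as Claim~3.4 there) and used as a black box in the discussion of stability notions. So there is no in-paper proof to compare against.

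That said, your contrapositive argument is the standard one and is essentially correct. The only genuine subtlety is the one you flag yourself: whether the disagreement is \emph{strictly} greater than $\delta|X|$ or merely at least $\delta|X|$. With the merged $(k-1)$-clustering (empty part allowed), the minimum disagreement over all matchings is exactly $\min(|X_i^\star|,|X_j^\star|)\ge \delta|X|$, so in the edge case $|X_i^\star|=|X_j^\star|=\delta|X|$ you get equality, not strict inequality; the singleton trick does not help here either, since promoting a point from the larger side still leaves the best matching with exactly $\delta|X|$ errors. This is a definitional boundary issue (whether approximation stability uses ``at most $\delta$'' versus ``less than $\delta$'', and whether the cluster-size hypothesis is ``$\ge \delta|X|$'' versus ``$>\delta|X|$''), and the original source resolves it by the precise inequality conventions there. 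Your write-up would be cleaner if you stated explicitly which convention you adopt and verified that it matches the statement being proved, rather than leaving it as ``mild slack''.
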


Finally, we present the weakest separation condition that has been formulated with respect to cost-based notions. 
This separation condition is known as $\beta$-distributed and was defined by Awasthi \etal~\cite{abs10}.
This definition is given below.

\begin{definition}[$\beta$-distributed]
A $k$-means instance $(X, k)$ is called $\beta$-distributed iff the following holds for any optimal clustering $\{X^{\star}_1, ..., X^{\star}_k\}$:
\[
\forall i, \forall x \notin X^{\star}_i, ||x - \mu(X^{\star}_i)||^2 \geq \beta \cdot \frac{\OPT}{|X_i^{\star}|}.
\]
\end{definition}

Awasthi \etal~\cite{abs10} showed that the above property is weaker than the weak-deletion stability property. 

\begin{lemma}[Theorem 3.5 in \cite{abs10}]
Any $(1 + \gamma)$-weakly deletion stable $k$-means instance is also $(\gamma/4)$-distributed.
\end{lemma}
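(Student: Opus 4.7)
The plan is to translate weak deletion stability into a lower bound on pairwise center separations, and then use the triangle inequality together with the Voronoi property of optimal clusterings to convert that separation bound into the desired point-to-center lower bound.

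First, I would unpack the stability condition into center separation. Fix any $i \ne j$. Since the points in $X_i^{\star}$ contribute $\Delta(X_i^{\star})$ to $\OPT$ and by Fact~\ref{fact:1} they contribute $\Delta(X_i^{\star}) + |X_i^{\star}| \cdot \|c_i^{\star} - c_j^{\star}\|^2$ to $OPT^{(i \to j)}$, we get
\[
OPT^{(i \to j)} \;=\; \OPT + |X_i^{\star}| \cdot \|c_i^{\star} - c_j^{\star}\|^2.
\]
Combining with the assumption $OPT^{(i \to j)} > (1+\gamma)\,\OPT$ yields immediately
\[
\|c_i^{\star} - c_j^{\star}\|^2 \;>\; \frac{\gamma \cdot \OPT}{|X_i^{\star}|}. \qquad (\star)
\]

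Second, I would transfer this separation to the points. Let $x \notin X_i^{\star}$, and let $j$ be such that $x \in X_j^{\star}$. Because an optimal $k$-means clustering is a Voronoi partition with respect to its centers (a standard consequence of Fact~\ref{fact:1}), we have $\|x - c_j^{\star}\| \le \|x - c_i^{\star}\|$. The triangle inequality then gives
\[
\|c_i^{\star} - c_j^{\star}\| \;\le\; \|x - c_i^{\star}\| + \|x - c_j^{\star}\| \;\le\; 2\,\|x - c_i^{\star}\|,
\]
so $\|x - c_i^{\star}\|^2 \ge \tfrac{1}{4}\|c_i^{\star} - c_j^{\star}\|^2$. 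Plugging in $(\star)$ yields
\[
\|x - \mu(X_i^{\star})\|^2 \;=\; \|x - c_i^{\star}\|^2 \;>\; \frac{\gamma}{4} \cdot \frac{\OPT}{|X_i^{\star}|},
\]
which is precisely the $(\gamma/4)$-distributed condition.

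There is no real obstacle here; the proof is a two-line chain once the reassignment cost is rewritten via Fact~\ref{fact:1}. The only point to be mildly careful about is justifying $\|x - c_j^{\star}\| \le \|x - c_i^{\star}\|$, i.e.\ that the optimal clustering $\{X_1^{\star},\dots,X_k^{\star}\}$ of the given instance coincides with the Voronoi partition induced by $\{c_1^{\star},\dots,c_k^{\star}\}$; this is immediate since any point assigned in the optimum to a non-closest center could be reassigned to its closest center without increasing the $k$-means cost, so WLOG the optimal clustering is Voronoi.
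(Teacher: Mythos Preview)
Your proof is correct and is exactly the standard argument for this result. Note, however, that the paper does not actually prove this lemma: it is stated with a citation to Awasthi \etal~\cite{abs10} (Theorem~3.5 there) and used as a black box, so there is no in-paper proof to compare against. Your argument is the same one given in~\cite{abs10}: extract the center separation $\|c_i^\star - c_j^\star\|^2 > \gamma\,\OPT/|X_i^\star|$ from Fact~\ref{fact:1}, then halve the distance via the triangle inequality and the Voronoi property of optimal clusterings.
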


The relationship between the separation notions is shown in Figure~\ref{fig:1}. 
Note that $\beta$-distributed is the weakest cost-based separation notion.
This basically means that a positive result for $\beta$-distributed instances also implies positive results with respect to other separation notions.
So any strong result with respect to the notion of $\beta$-distribution will supersede known results with respect to other notions. 
We also show that the implication from $(1+\gamma)$-weak deletion stability to 
$O(\gamma)$-distributed is strict, i.e., there exist instances which are $\gamma$-distributed, but not $(1+\Omega(\gamma))$-weak deletion stable. 
This is formally stated as the theorem below the proof of which can be found in Appendix~\ref{app:stab}.

\begin{figure}
\centering
\includegraphics[scale=0.25]{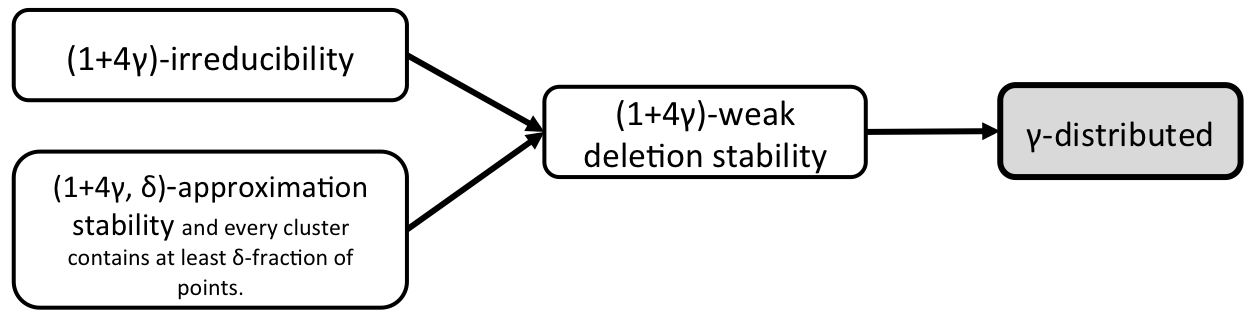}
\caption{The figure shows the relationship between various cost-based separation notions. The arrows from $A$ to $B$ denotes that if the dataset satisfies $A$, then it also satisfies $B$.}
\label{fig:1}
\end{figure}

\begin{theorem}
\label{thm:weakdel}
There are instances $(X,k)$ of the $k$-means problem which are 
$\gamma$-distributed, but are not $(1+\Omega(\gamma))$-weak deletion stable for some parameter $\gamma > 0$. 
\end{theorem}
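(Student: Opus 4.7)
The plan is to exhibit, for each sufficiently small $\gamma > 0$, a two-cluster instance on the real line that is exactly $\gamma$-distributed but whose weak-deletion-stability parameter is only $\Theta(\gamma^2)$. The conceptual reason a gap can exist is that the $\gamma$-distributed condition constrains the squared distance from an \emph{individual point} of one cluster to the centroid of another, whereas weak deletion stability is controlled by the distance between two \emph{centroids}, scaled by the size of one of the clusters. When a large, spread-out cluster has its centroid very close to a small cluster, these two quantities can differ by a factor equal to the spread-out cluster's size, and that is what we will exploit.

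Concretely, fix small $\gamma > 0$, set $m = \lfloor 1/(4\gamma) \rfloor$ and $R > 0$, and define $X_1^\star$ to consist of $m$ copies of $R$ and $m$ copies of $-R$, together with $X_2^\star = \{\epsilon\}$ for $\epsilon = R\sqrt{\gamma}$. Then $k = 2$, $c_1^\star = 0$, $c_2^\star = \epsilon$, $|X_1^\star| = 2m$, $|X_2^\star| = 1$, and $\OPT = 2mR^2$. The verification splits into two pieces. For $\gamma$-distributed, the two nontrivial inequalities are $\|\epsilon - c_1^\star\|^2 = \epsilon^2 \geq \gamma \cdot \OPT / (2m) = \gamma R^2$, which is tight by construction, and $\|\pm R - c_2^\star\|^2 \geq \gamma \cdot \OPT / 1 = 2m\gamma R^2$, which reduces to $(1 \mp \sqrt{\gamma})^2 \geq 2m\gamma$ and is comfortable for $\gamma$ small because $2m\gamma \leq 1/2$. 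For weak deletion stability I would invoke the identity $\OPT^{(i \to j)} = \OPT + |X_i^\star| \cdot \|c_i^\star - c_j^\star\|^2$, which follows from Fact~\ref{fact:1} applied to the points of $X_i^\star$ reassigned to $c_j^\star$. The binding case is $(i,j) = (2,1)$: it yields $\OPT^{(2 \to 1)} - \OPT = \epsilon^2 = \gamma R^2$, so the best WDS parameter $\gamma'$ satisfies $\gamma' = \gamma R^2 / \OPT = \gamma/(2m) = \Theta(\gamma^2)$.

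Combining these estimates, for any constant $c > 0$ and any $\gamma < c/2$, the instance is $\gamma$-distributed yet fails to be $(1 + c\gamma)$-weak deletion stable, which establishes the strict gap. The only subtle step is balancing the three parameters $m$, $R$, $\epsilon$ so that both $\gamma$-distributed inequalities remain valid while the WDS parameter degrades by a factor of $m$: the choice $m \approx 1/\gamma$ and $\epsilon \approx R\sqrt{\gamma}$ saturates the first $\gamma$-distributed inequality but keeps $\epsilon$ much smaller than the $R$-scale distance from points of $X_1^\star$ to $c_2^\star$, so the second inequality stays loose by a constant factor. I expect no further technical obstacles; the remainder is routine arithmetic.
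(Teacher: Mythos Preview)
Your construction has a genuine gap: the partition $(X_1^\star, X_2^\star)$ you write down is \emph{not} the optimal $2$-means clustering of your point set, and both the $\beta$-distributed property and weak deletion stability are defined relative to an optimal clustering.

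Concretely, with $m$ copies of $R$, $m$ copies of $-R$, and a single point at $\epsilon = R\sqrt{\gamma}$, the clustering that groups $\epsilon$ with the $m$ copies of $R$ (and leaves the $m$ copies of $-R$ alone) has cost
\[
\frac{m(R-\epsilon)^2}{m+1} \;\le\; R^2,
\]
whereas your claimed optimum has cost $2mR^2 \approx R^2/(2\gamma)$, which is far larger when $\gamma$ is small. (In one dimension the optimal $2$-means clustering is always a Voronoi split, so isolating the middle point $\epsilon$ cannot be optimal.) With the true optimal clustering the two centroids are roughly $2R$ apart and $\OPT \approx R^2$, so the instance is in fact $(1+\Omega(m))$-weak deletion stable --- the opposite of what you want.

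This is exactly why the paper's proof works harder: it places $n$ points near the vertices of a simplex in $\mathbb{R}^{n+1}$, so that the first $n$ coordinates contribute the same amount $n-2$ to \emph{every} $2$-partition, and only a tiny $(n{+}1)$-st coordinate breaks the tie in favour of the intended split $X_1, X_2$. The paper then proves this optimality explicitly (Claim~\ref{cl:strict}) before checking the $\gamma$-distributed and weak-deletion-stability properties. Your one-dimensional idea of a ``large spread-out cluster whose centroid sits near a small cluster'' is the right intuition, but any concrete instance must be engineered so that this partition is genuinely optimal; on the line this is essentially impossible because the Voronoi split will always separate the two wings of the spread-out cluster.
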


\subsection{Algorithm and analysis}

In the previous subsection we saw that of all the center-based separation notions, $\beta$-distribution stability condition is the weakest.
Hence any result for datasets satisfying the $\beta$-distribution stability condition will have consequences for datasets satisfying stronger conditions.
So, the question is {\it are there good algorithms for datasets under this condition?} 
Awasthi \etal~\cite{abs10} gave a {PTAS} for the $k$-means/median problems on datasets that satisfy the $\beta$-distributed assumption. 
The running time has polynomial dependence on the input parameters $n, k, d$ (where $n$ is the number of data points, $d$ is the dimension of the dataset, and $k$ is the number of clusters) and exponential dependence on $\frac{1}{\beta}$ and $\frac{1}{\veps}$ ($\veps$ is the accuracy parameter). 
Even though they showed that the super-polynomial dependence on $\frac{1}{\beta}$ and $\frac{1}{\veps}$ cannot be improved,  improving the dependence on other input parameters was left as an open problem. 
In this work, we address this open problem by giving a faster {PTAS} for the $k$-means problem under the $\beta$-distributed notion. 
The running time of the algorithm for the $k$-means problem by Awasthi \etal~\cite{abs10} is $O \left(dn^3 (k \log{n})^{\poly(\frac{1}{\beta}, \frac{1}{\veps})} \right)$.
We improve the running time to $O \left(dn^3  \left( \frac{k}{\veps} \right)^{O(\frac{1}{\beta \veps^2})} \right)$.
Note that due to our improvement in running time, our algorithm is also a Fixed Parameter Tractable Approximation Scheme (FPT-AS) for the problem with parameters are $k$ and $\beta$. 
Moreover, the running time does not have an exponential dependence on $k$ that is typically the case for such FPT approximation schemes for general datasets.
We formally state our result as the following theorem which is a restatement of Theorem~\ref{thm:fpt-as-main}.

\begin{theorem}
Let $\veps, \beta > 0$, $k$ be a positive integer, and let $X \subset \R^d$ be a $\beta$-distributed dataset. 
There is an algorithm that takes as input $(X, k, \veps, \beta)$ and outputs a $k$-center set $C$ such that $\Phi(C, X) \leq (1 + \veps) \cdot \OPT$ and the algorithm runs in time $O \left(dn^3  \left( \frac{k}{\veps} \right)^{O(\frac{1}{\beta \veps^2})} \right)$.
\end{theorem}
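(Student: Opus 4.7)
The plan is to follow the outline of the Awasthi et al.~\cite{abs10} PTAS for $\beta$-distributed instances and replace only the inner routine that handles the so-called ``expensive clusters'' by a single invocation of \GC. First I would recall from \cite{abs10} that their analysis identifies a collection $\mathcal{E}$ of ``expensive'' optimal clusters of cardinality at most $t=O(1/(\beta\veps))$, while the remaining ``cheap'' clusters can be processed in $O(dn^3)$ time by a completion procedure that exploits the $\beta$-distributed condition: since every non-member point of a cheap cluster $X^{\star}_i$ sits at squared distance at least $\beta\cdot\OPT/|X^{\star}_i|$ from its centroid, a ball-growing / averaging step recovers a $(1+\veps)$-approximate centroid once a good center has been fixed for every cluster in $\mathcal{E}$.

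Next I would plug \GC directly into this framework. Compute a bi-criteria constant-factor $k$-means solution $C$ in time $O(ndk)$ by the standard $D^2$-sampling algorithm of~\cite{adk09}, and run $\GC(X,C,\veps',t)$ with $t=O(1/(\beta\veps))$ and $\veps'$ a small constant multiple of $\veps$. By Theorem~\ref{thm:2}, for the unknown index set $\{j_1,\ldots,j_t\}$ labelling $\mathcal{E}$, the list $\L$ returned by \GC contains, with probability at least $3/4$, a $t$-tuple $\mathcal{C}$ satisfying
\[
\psi\bigl(\mathcal{C},\{X^{\star}_{j_1},\ldots,X^{\star}_{j_t}\}\bigr)\;\leq\;(1+\veps')\sum_{i=1}^{t}\Delta(X^{\star}_{j_i})+\veps'\cdot\OPT.
\]
For every $t$-tuple in $\L$ and every way of guessing which $t$ of the $k$ slots correspond to $\mathcal{E}$, I would run the Awasthi et al.\ completion procedure on the remaining $k-t$ slots and return the $k$-center set of smallest $k$-means cost. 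The list has size $|\L|=(k/\veps)^{O(t/\veps)}=(k/\veps)^{O(1/(\beta\veps^2))}$; the $\binom{k}{t}$ guessing factor is absorbed into the same expression, and each test costs $O(dn^3)$, giving the claimed total time. A constant number of outer repetitions boosts the success probability arbitrarily close to $1$.

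The main obstacle I expect is verifying that the completion procedure of~\cite{abs10} is robust enough to accept approximately good (rather than exactly optimal) centers for the clusters in $\mathcal{E}$. Concretely, I would retrace their argument with the exact expensive-cluster centers replaced by the candidate $\mathcal{C}$, showing that the additive $\veps'\cdot\OPT$ slack from Theorem~\ref{thm:2} together with the multiplicative $(1+\veps')$ slack still propagates through their ball-growing analysis and yields a final cost of at most $(1+\veps)\OPT$ after choosing $\veps'$ to be a small constant multiple of $\veps$. This is where the $\beta$-distributed hypothesis kicks in: the separation $\beta\cdot\OPT/|X^{\star}_i|$ at the cheap clusters dominates the $O(\veps)$-scale perturbations inherited from the guess $\mathcal{C}$, so each cheap cluster is still correctly assembled and its centroid recovered within a $(1+\veps)$ factor, and summing the cheap-cluster cost with the $(1+\veps')$ bound on the expensive clusters yields the desired $(1+\veps)\OPT$ guarantee.
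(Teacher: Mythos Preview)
Your proposal is correct and follows essentially the same approach as the paper: run \GC with $t=O(1/(\beta\veps))$ to obtain a list of candidate center sets for the expensive clusters, then for each candidate invoke the Awasthi et al.\ completion routine for the cheap clusters, and return the best outcome. The paper likewise identifies the main obstacle you flag---that the completion routine must tolerate the weaker, aggregate-cost guarantee of Theorem~\ref{thm:2} rather than per-cluster closeness---and resolves it by observing that Awasthi et al.'s analysis really only needs $\Phi(\mathcal{Q}_{init}\cup\{c^{\star}_{t+1},\ldots,c^{\star}_k\},X)\le O(1)\cdot\OPT$ to bound the number of ``bad components''; two minor remarks are that your extra $\binom{k}{t}$ slot-guessing step is unnecessary (the completion procedure just extends $\mathcal{Q}_{init}$ without needing to know which optimal clusters it covers), and the per-candidate cost of the completion routine is $O(dn^3 k^{O(1/\beta)})$ rather than $O(dn^3)$, but both are absorbed into the stated bound.
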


In the remainder of this subsection, we will discuss this algorithm and its running time.
We will use the definitions related to the $k$-means problem as given in the preliminaries.
We will assume for this discussion that the value $\OPT$ is known. 
We will use that standard doubling argument to support this assumption.
We obtain our PTAS building on the ideas of Awasthi \etal~\cite{abs10}. 
One of the main definitions from \cite{abs10} that will be useful is that of {\em expensive} and {\em cheap} clusters.

\begin{definition}[Expensive and cheap clusters]
A cluster $\Xst_i$ is said to be cheap if $$\Delta(\Xst_i) \leq \frac{(\beta \veps) \cdot OPT}{4^6}$$ and is said to be expensive otherwise.
\end{definition}

\noindent
From a simple averaging argument, we get that the number of expensive clusters cannot exceed $\frac{4^6}{\beta \veps}$.
One of the main ideas in \cite{abs10} is to consider expensive and cheap clusters separately. 
For the expensive clusters the good centers are chosen in a more brute-force manner. 
Cheap clusters need to be handled more carefully. 
Our improvement in the running time mainly comes from improvement in the algorithm for the expensive clusters.
Note that in Awasthi \etal~\cite{abs10} the discovery of good centers of expensive and cheap clusters have to be interleaved because of a technicality where the expensive clusters may not get exposed unless some cheap clusters have been removed.
In our algorithm, we find good centers for the expensive clusters {\em before} finding good centers for the cheap ones. 
Since our algorithm for the expensive clusters may not deterministically give good centers for the expensive clusters, we may have to try various set of centers.
However, we will show that the number of candidates that we will have to try is not very large. 
Hence, the running time does not become too large either.
We give the algorithm and analysis next. 
Since this is built upon the previous work of Awasthi \etal (we will use their algorithm as a subroutine and use their analysis), a complete understanding of the algorithm and analysis may require an understanding of their work.

Let $t$ denote the number of expensive clusters and WLOG assume that the first $t$ clusters, that is $\Xst_1, ..., \Xst_t$, are the expensive clusters.
We know that $t \leq \frac{4^6}{\beta \veps}$.
The goal is to find a set $C_{exp}$ of good centers for the expensive clusters. 
In Section~\ref{sec:good-centers} we saw how to obtain a center set $C_{exp}$ such that the following holds:
\begin{equation}\label{eqn:fpt-as-2-1}
\Phi(C_{exp}, \{\Xst_{1}, ..., \Xst_{t}\}) \leq \left(1+\frac{\veps}{2}\right) \cdot \sum_{j=1}^{t} \Delta(\Xst_j) + \frac{\veps}{2} \cdot \OPT.
\end{equation}
To see the above, all we need to do is to set $\Xst_j = X_j$ for all $j$ and use the fact that $\Phi(C_{exp}, \{\Xst_{1}, ..., \Xst_{t}\}) \leq \psi(C_{exp}, \{\Xst_{1}, ..., \Xst_{t}\})$.
Note that the algorithm {\tt $t$-GoodCenters} returns a list $\mathcal{L}$ of candidate $t$-center sets for the expensive clusters such that at least one of the elements on this list is a set of good $t$-center set for the expensive clusters.
For each of the candidate $t$-center sets $C$ in the list, we execute the algorithm of Awasthi \etal~\cite{abs10} for the cheap clusters.
Since one of the $t$-center sets is a good set of $t$ centers for the expensive cluster (w.h.p.), we can expect that this set of $t$ centers combined with the set of centers returned by the algorithm for the cheap clusters will be good for the entire dataset. 
We need to be careful though. 
There is a non-trivial dependency on the fact that good centers for the expensive clusters have been discovered. 
That is, the analysis of the algorithm for the cheap clusters as stated in \cite{abs10} assumes that for every expensive cluster, a center close to the actual centroid has been discovered.
Such a guarantee is not given by eqn.~(\ref{eqn:fpt-as-2-1}) which only gives a guarantee in terms of the overall cost of the expensive clusters.
This means that we have to revisit the argument for the cheap clusters and make sure that the it holds even under the weaker guarantee that eqn.~(\ref{eqn:fpt-as-2-1}) gives.

We will run the algorithm for the cheap clusters (see Figure 2 in \cite{abs10}) $|\L|$ times with $\mathcal{Q}_{init}$ set to a different element of $\L$ each time and then pick the best overall solution.
This algorithm is given below.

\begin{framed}
{\tt FasterPTAS}($X, k, \veps$, $\beta$)\\
\hspace*{0.6in} {\bf Inputs}: Dataset $X$, number of clusters $k$, accuracy $\veps$, and stability parameter $\beta$\\
\hspace*{0.6in} {\bf Output}: A set $C$ of $k$ centers\\
\hspace*{0.6in} {\bf Constants}: $t = \lceil \frac{4^6}{\beta \veps} \rceil$\\
\hspace*{0.2in} (1) \ \ \  Call \GC($X, k, \veps, t$) to obtain a list $\L$ of $t$-center sets\\
\hspace*{0.2in} (2) \ \ \ For each $C_{exp} \in \L$:\\
\hspace*{0.2in} (3) \hspace*{0.2in}\ \ \ Execute the algorithm of Awasthi \etal (Figure~2 in ~\cite{abs10}) \\
\hspace*{0.9in} with $\mathcal{Q}_{init}$ set as $C_{exp}$ and get back the result $C$\\
\hspace*{0.2in} (4) \hspace*{0.2in}\ \ \ If $C$ is not $\bot$, then return($C$) else continue\\
\hspace*{0.8in} //{\it Note that the algorithm of Awasthi \etal either returns $C$ such that} \\
\hspace*{0.8in}  //{\it $\Phi(C, X) \leq (1 + \veps) \OPT$ or returns $\bot$ indicating failure to find such a center set}
\end{framed}

\paragraph{Running time}: Since the list size for parameters $k, t, \veps$ is $(\frac{k}{\veps})^{O(\frac{t}{\veps})}$ and $t = O(\frac{1}{\beta \veps})$ the running time will be $(\frac{k}{\veps})^{O(\frac{1}{\beta \veps^2})}$ times the time for execution of the algorithm for Awasthi \etal (figure 2 in \cite{abs10}). The execution of their algorithm takes time $O(d n^3 k^{O(\frac{1}{\beta})})$. So, the running time of the above algorithm is $O \left(dn^3  \left( \frac{k}{\veps} \right)^{O(\frac{1}{\beta \veps^2})} \right)$.

\vspace{0.2in}

We will show that with high probability there exists a $t$-center set $C_{exp} \in \L$ such that eqn.~(\ref{eqn:fpt-as-2-1}) holds.
However, before we prove this, let us try to see why the existence of such a $C_{exp}$ in $\L$ is sufficient to obtain a $k$-center set $C$ (w.h.p.) such that $\Phi(C, X) \leq (1 + \veps) \cdot \OPT$.
This requires going back to the analysis of Awasthi \etal~\cite{abs10}.

\paragraph{Analysis of Awasthi \etal} 
We essentially run the algorithm of Awasthi \etal for various choices of $\mathcal{Q}_{init}$. 
The algorithm in \cite{abs10} is guaranteed (w.h.p.) to output a good center set for the entire dataset under the condition that $\mathcal{Q}_{init} = \{c_1, ..., c_t\}$ such that $\Phi(c_j, \Xst_j) \approx \Phi(\cst_j, \Xst_j)$ for $j = 1, ..., t$. 
In fact, examining more closely, the following two properties are needed in the analysis:
\begin{enumerate}
\item $\Phi(\mathcal{Q}_{init}, \cup_{j=1}^{t} \Xst_j) \leq (1+ \veps) \cdot \sum_{j=1}^{t} \Delta(\Xst_j)$, and
\item $\Phi(\mathcal{Q}_{init} \cup \{\cst_{t+1}, ..., \cst_k\}, X) \leq (constant) \cdot \OPT$.
\end{enumerate}
Readers familiar with the analysis in \cite{abs10} will realise that the second property is required to bound the number of ``{\em bad components}".
Given that the second property is satisfied by $\mathcal{Q}_{init}$, the algorithm finds centers $c_{t+1}, ..., c_{k}$ such that, $\Phi(c_j, \Xst_j) \leq (1 + \veps) \cdot \Delta(\Xst_j)$ for all $j = t+1, ..., k$.
Note that when combined with property (1), we get the desired result that $\Phi(\mathcal{Q}_{init} \cup \{c_{t+1}, ..., c_{k}\}, X) \leq (1+\veps) \cdot \OPT$.

Now let us consider the guarantee that is provided by the  \GC algorithm and see whether it aligns with the analysis in Awasthi \etal.
The guarantee that is provided is that (w.h.p.) $\L$ has at least one center set $C_{exp}$ such that 
\[
\Phi(C_{exp}, \cup_{j=1}^{t} \Xst_j) \leq \left(1+ \frac{\veps}{2} \right) \cdot \sum_{j=1}^{t} \Delta(\Xst_j) + \frac{\veps}{2} \cdot \OPT.
\]
The above inequality gives:
\begin{eqnarray*}
\Phi(C_{exp} \cup \{\cst_{t+1}, ..., \cst_k\}, X) &\leq& \Phi(C_{exp}, \cup_{j=1}^{t} \Xst_j) + \sum_{j=t+1}^{k} \Phi(\cst_j, \Xst_j)\\
&\leq& \left(1+\frac{\veps}{2} \right) \left(\sum_{j=1}^{t} \Delta(\Xst_j) \right) + \frac{\veps}{2} \cdot \OPT + \sum_{j=t+1}^{k} \Delta(\Xst_j)\\
&\leq& (1+\veps) \cdot \OPT
\end{eqnarray*}
This implies that property (2) is satisfied for at least one of the element in the list $\L$. 
This further means that the algorithm of Awasthi \etal, when initiated with this element and $(\frac{\veps}{2})$ as accuracy parameter, finds centers $\{c_{t+1}, ..., c_{k}\}$ such that $\Phi(c_j, \Xst_j) \leq (1+ \frac{\veps}{2}) \cdot \Delta(\Xst_j)$ for $j = t+1, ..., k$. This gives:
\[
\Phi(C_{exp} \cup \{c_{t+1}, ..., c_k\}, X) \leq \left(1+ \frac{\veps}{2} \right) \cdot \sum_{j=1}^{t} \Delta(\Xst_j)  + \frac{\veps}{2} \cdot \OPT + \left( 1+ \frac{\veps}{2}\right) \cdot \sum_{j=t+1}^{k} \Delta(\Xst_j) \leq (1+\veps)\cdot \OPT.
\]

This completes the analysis of the algorithm and the proof of Theorem~\ref{thm:fpt-as-main}.

\section{Parallel {PTAS} for $k$-means}\label{sec:parallel}
In this section, we give a massively parallel {PTAS} for the classical and constrained $k$-means problems. 
We will first discuss the classical $k$-means problem and generalise for the constrained $k$-means.
This actually just comes from a close inspection of the algorithm \GC (note that for a PTAS we will use $t=k$). 
One quickly realises that most of the steps in the algorithm can be performed independently and hence the algorithm can easily be converted to a massively parallel {PTAS} for the $k$-means problem. 
The main results of this section is given in the theorem below which is a restatement of Theorem~\ref{thm:parallel-main}.

\begin{theorem}
Let $\veps > 0$, $(X, k, d)$ be a $k$-means instance, and let $C$ denote a constant $\alpha$-approximate solution of the $k$-means instance. 
Then there is a parallel algorithm in the shared memory CREW model that takes as input the $k$-means instance, $C$, and $\veps$ and outputs a $(1+\veps)$-approximate solution in parallel time $O \left( \left\lceil \frac{nd 2^{\tilde{O}(k/\veps)}}{N}\right\rceil + \frac{k}{\veps} \log{\frac{k}{\veps}}+\log{(nkd)} \right)$ with $N$ processors.
There is similar parallel algorithm in the CRCW model with running time $O \left( \left\lceil \frac{nd 2^{\tilde{O}(k/\veps)}}{N}\right\rceil + \frac{1}{\veps} +\log{(nkd)} \right)$.
For any constrained version of the $k$-means problem with partition algorithm $\mathcal{P}$, there is a parallel algorithm with running time $O \left( \left\lceil \frac{nd 2^{\tilde{O}(k/\veps)}}{N}\right\rceil \cdot t(n,k,d)+\frac{k}{\veps} \log{\frac{k}{\veps}} + \log{(nkd)} \right)$ in the CREW model with $N$ processors. Here, $t(.)$ denotes the running time of the partition algorithm $\mathcal{P}$.
\end{theorem}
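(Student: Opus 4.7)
The plan is to parallelize the \GC algorithm (instantiated with $t=k$) and then pick the best candidate from its output list $\L$ by evaluating every candidate $k$-center set and taking a parallel minimum; the $(1+\veps)$-approximation guarantee then follows directly from Theorem~\ref{thm:2}. The $2^k$ outer iterations of line~(2) of \GC are independent and can be dispatched in parallel. Inside a single iteration, the only non-trivial subroutines are (a) $D^2$-sampling a multiset $M$ of $\eta k = \poly(k/\veps)$ points from $X$ with respect to $C$, (b) enumerating all ordered tuples $(S_1,\dots,S_k)$ of disjoint size-$\tau$ submultisets of $M$, and (c) computing the $k$ centroids $\mu(S_i)$; each of these is either trivially parallel or reducible to standard parallel primitives.

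Concretely, for (a) I would compute $\Phi(C,x)$ for every $x\in X$ simultaneously in $O(\log(kd))$ depth, run a parallel prefix-sum (Ladner--Fischer style) over $X$ in $O(\log n)$ depth, and then draw the $\eta k$ samples independently by binary searching the prefix-sum array in $O(\log n)$ time each. Step (b) is embarrassingly parallel: there are $|M|^{k\tau}=2^{\tilde O(k/\veps)}$ tuples, each indexable directly by a vector in $\{0,\dots,|M|-1\}^{k\tau}$, so one processor per tuple generates them in constant depth and disjointness is checked by pairwise index comparison. Step (c) is $k$ parallel vector sums of $\tau=O(1/\veps)$ points in $\R^d$, each a reduction of depth $O(\log\tau+\log d)$ in CREW and $O(1)+O(\log d)$ in CRCW. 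Once $\L$ has been assembled, for every candidate $\mathcal{C}\in\L$ I compute $\Phi(\mathcal{C},X)$ by having each point locate its nearest center in $\mathcal{C}$ (depth $O(\log k+\log d)$ in CREW, $O(1)$ in CRCW with enough processors) and then summing over $X$ in $O(\log n)$ depth; a final parallel minimum over the $2^{\tilde O(k/\veps)}$ candidates selects the output.

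The total work is dominated by cost evaluation, namely $O(nd\cdot 2^{\tilde O(k/\veps)})$, which scheduled onto $N$ processors produces the $\lceil nd\cdot 2^{\tilde O(k/\veps)}/N\rceil$ term. The remaining parallel-time overhead splits into an $O(\log(nkd))$ contribution from the reductions and prefix-sums over $n$, $k$, $d$, and a model-dependent contribution from the reductions internal to \GC on objects of size $O(k/\veps)$: in CREW each such reduction costs $O(\log(k/\veps))$ and, when composed across the $O(k/\veps)$ summand operations that build and score a candidate, yields the $\frac{k}{\veps}\log\frac{k}{\veps}$ term; in CRCW constant-time reductions collapse this to the stated $O(1/\veps)$. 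For a constrained version with partition algorithm $\mathcal{P}^{\C}$, we simply replace ``evaluate $\Phi(\mathcal{C},X)$'' by ``run $\mathcal{P}^{\C}(X,\mathcal{C})$ and record its cost'', invoking the partition algorithm once per candidate; dispatching these $2^{\tilde O(k/\veps)}$ invocations across the $N$ processors multiplies the work term by $t(n,k,d)$ and gives the stated bound. The main obstacle is not any single step but the bookkeeping to confirm the claimed depth: in particular, that $D^2$-sampling fits within $O(\log n)$ depth (which rests on parallel prefix-sums to invert the CDF), and that the enumeration in step (b) does not incur an $O(|M|^\tau)$ serial bottleneck; both reduce to textbook primitives, so beyond that the argument is a careful work/depth accounting of \GC's already-local steps.
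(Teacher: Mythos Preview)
Your overall plan---parallelize \GC with $t=k$, evaluate every candidate in the output list $\L$, and take the best---is exactly what the paper does, and your treatment of the CREW case and the constrained case is essentially right. One accounting slip: the $\frac{k}{\veps}\log\frac{k}{\veps}$ depth term in CREW does not come from ``reductions internal to \GC on objects of size $O(k/\veps)$'' as you say; it comes from the final parallel minimum over $|\L|=2^{\tilde O(k/\veps)}$ candidates, which has depth $\log|\L|=\tilde O(k/\veps)$. You already list that step, so the fix is just to point at the right line.

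The genuine gap is the CRCW bound. Saying ``constant-time reductions collapse this to $O(1/\veps)$'' is not an argument: a constant-depth CRCW minimum over $|\L|$ items would give $O(1)$, not $O(1/\veps)$, and in any case needs $|\L|^2$ processors, which you have not budgeted. The paper obtains the $O(1/\veps)$ term by a different idea that you are missing: it does \emph{not} compute the true minimum over $\L$. Instead, since $\Lambda=\Phi(C,X)$ satisfies $\OPT\le\Lambda\le\alpha\cdot\OPT$, it partitions the cost interval $(\Lambda/\alpha,\Lambda]$ into $O\!\left(\frac{\log\alpha}{\veps}\right)=O(1/\veps)$ geometric buckets of ratio $(1-\veps)$; each processor computes the cost of one candidate and, using concurrent write, marks the bucket its cost falls in. A sequential scan over the $O(1/\veps)$ bucket flags then returns any candidate in the lowest occupied bucket, which is automatically $(1+\veps)$-approximate. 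That scan is the source of the $O(1/\veps)$ term. Without this bucketing trick your CRCW analysis does not establish the claimed running time.
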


Let us look at the algorithm \GC closely to see the possibility of performing each of the steps in parallel. 
We discuss with respect to the shared memory model.

\begin{framed}
\GC($X, C, \veps, t$)\\
\hspace*{0.6in} {\bf Inputs}: Dataset $X$, $(\alpha, \beta)$-approximate $C$, accuracy $\veps$, and number of centers $t$\\
\hspace*{0.6in} {\bf Output}: A list $\L$ of $t$ elements, each element being a set of $t$ centers\\
\hspace*{0.6in} {\bf Constants}: $\eta = \frac{2^{16} \alpha t}{\veps^4}; \tau = \frac{128}{\veps}$\\
\hspace*{0.2in} (1) \ \ \ $\L \leftarrow \emptyset$\\
\hspace*{0.2in} (2) \ \ \ Repeat $2^t$ times:\\
\hspace*{0.2in} (3)\hspace*{0.3in}  \ \ \ Sample a multi-set $M$ of $\eta t$ points from $X$ using $D^2$-sampling w.r.t. center set $C$\\
\hspace*{0.2in} (4)\hspace*{0.3in}  \ \ \ $M \leftarrow M \cup$ \{$\frac{128 t}{\veps}$ copies of each element in $C$\}\\
\hspace*{0.2in} (5)\hspace*{0.3in} \ \ \ For all disjoint subsets $S_1, ..., S_t$ of $M$ such that $\forall i, |S_i| = \tau$:\\
\hspace*{0.2in} (6)\hspace*{0.9in} $\L \leftarrow \L \cup (\mu(S_1), ..., \mu(S_t))$\\
\hspace*{0.2in} (7) \ \ \ return($\L$)
\end{framed}

Let us discuss each of the steps in the above algorithm:

\begin{itemize}
\item {\it Step (1)}: This is a trivial step.

\item {\it Step (2)}: This is an iteration of size $2^t$ which is for probability amplification. The rounds are independent and can be performed in parallel. 

\item {\it Step (3)}: Since the points are $D^2$-sampled independently, it should be possible to execute this step in parallel. For this we need to compute the distribution for $D^2$-sampling w.r.t. $C$ in parallel. What we can do is first compute $\min_{c \in C} ||x - c||^2$ for every point $x \in X$ in parallel and then aggregate the smallest distances to compute the distribution. 
There are logarithmic aggregation costs involved.
The cost of aggregating across $d$ dimensions for calculating Euclidean distance is $\log{d}$, that of finding the distance of a point to the closest center in $C$ is $\log{k}$, and aggregating these distance has cost $\log{n}$.
Once the distribution has been computed the sampling can be done in parallel.
In summary, if there are $N$ processors, then the parallel running time for this step in the CREW model will be $O \left(\left\lceil \frac{ndk\eta t}{N} \right\rceil + \log{(nkd)} \right)$.

\item {\it Step (4)}: This is a simple step and can be performed in parallel. 

\item {\it Step (5-6)}: The disjoint subsets can be considered in parallel. 
Taking into consideration the aggregation costs, the parallel running time using $N$ processors is $O\left( \left\lceil \frac{t\tau d \L}{N} \right\rceil + \log{(t \tau d)} \right)$. 

\item {\it Step (7)}: This is trivial.
\end{itemize}

Let us now discuss about the parallel PTAS.
Firstly, note that for a PTAS we will use $t=k$ in the \GC algorithm.
Secondly, note that the above algorithm returns a list $\L$. 
A PTAS should return a single $k$-center set.
So, we have to address the issue of finding the $k$-center set in the list $\L$ with the least $k$-means cost in parallel time. 
This is an aggregation step and in the CREW model will have a $\log{\L}$ cost.
The parallel running time for finding the best solution from a list of size $\L$ using $N$ processors in the CREW model will be $O \left( \left\lceil \frac{nkd \L}{N}\right\rceil + \log{(nkd \L)}\right)$. 
So the overall running time of the parallel algorithm is $O \left( \left\lceil \frac{nd 2^{\tilde{O}(k/\veps)}}{N}\right\rceil +  \frac{k}{\veps} \log{\frac{k}{\veps}}+\log{(nkd)}\right) $.
Note that the parallel running time cannot be better than $O(\frac{k}{\veps})$ since $\L  = 2^{\tilde{O}(\frac{k}{\veps})}$.
So, even though we have removed the $k$-sized iteration compared to the parallel algorithm of Jaiswal \etal (Algorithm 2 in \cite{jks}), we do not obtain a parallel running time improvement over \cite{jks} that has a similar running time.
The main bottleneck is the aggregation over the list $\L$ in the CREW model.
Interestingly, in the CRCW model we can do something better.

Note that we do not really care about the $k$-center set in $\L$ with the least cost. 
What we want is a $k$-center set $\mathcal{C}$ such that $\Phi(\mathcal{C}, X) \leq (1 + \veps) \cdot \OPT$.
Next, we describe how this can be obtained in the CRCW model.
Let $\Lambda = \Phi(C, X)$. 
That is $\Lambda$ is the cost of the $\alpha$-approximate solution that is used in the \GC algorithm. Given this we know that for the constant $\alpha$, $\Lambda \leq \alpha \cdot \OPT$.
Consider the following ranges of $k$-means cost: 
\begin{equation*}
S_0 = ((1-\veps) \Lambda, \Lambda], S_1 = ((1-\veps)^2 \Lambda, (1-\veps) \Lambda], ..., S_i = ((1-\veps)^{i+1} \Lambda, (1-\veps)^i \Lambda], ...
\end{equation*}
Since $\OPT$ is the smallest cost of a solution, the number of ranges that we need to consider is $O(\frac{\log{1/\alpha}}{\log{1 - \veps}}) = O(\frac{\log{\alpha}}{\veps})$.
Suppose there are $\L$ processors, then each processor can calculate the cost of one of the $k$-center sets in the list and then using the CRCW shared memory indicate which of the ranges the cost belongs to. 
Eventually the least range that is populated will give the $(1+\veps)$-approximate solution.
So, the parallel running time with $N$ processors in the CRCW model is $O \left( \left\lceil \frac{nd 2^{\tilde{O}(k/\veps)}}{N}\right\rceil +  \frac{1}{\veps} +\log{(nkd)}\right)$ which is much better than that in the CREW model.

\paragraph{Parallel algorithms for constrained $k$-means} The parallel algorithm in the CREW model is the same for the constrained variations of the $k$-means problem. The main difference for various constrained variations is in the last step where an appropriate $k$-center set should be chosen from the list $\L$. The {\em partition algorithm} corresponding to the particular constrained variation is used to pick a $k$-center set. 
Suppose the running time of the partition algorithm is $t(n, k, d)$. 
Then the parallel running time in the CREW model is $O \left( \left\lceil \frac{nd 2^{\tilde{O}(k/\veps)}}{N}\right\rceil \cdot  t(n,k,d) +\log{(nkd \L)}\right)$.
The trick of the CRCW model does not extend to the constrained setting since we only have a constant factor approximate solution for the $k$-means problem which is not necessarily a constant factor approximate solution for the constrained $k$-means problem.

\paragraph{Parallel constant-approximation for $k$-means} It is important to note that the parallel algorithms discussed in this section only convert a constant factor approximate solution $C$ for the $k$-means problem to a $(1+\veps)$-approximate solution. 
This is because the \GC algorithm needs a center set $C$ such that $\Phi(C, X) \leq \alpha \cdot \OPT$.
For standalone parallel PTAS, we also need to design a parallel algorithm for finding such a constant factor approximate solution. 
We design such a parallel algorithm using the ideas of Guha \etal~\cite{guha03}. 
The main idea is captured in the following lemma from ~\cite{guha03}.

\begin{lemma}
Suppose for a given dataset $X$, there is a point set $X'$ such that $\Phi(X', X) \leq \beta \cdot \OPT$ and let $X''$ denotes the weighted set of points that are the same as $X'$ and weighted as per the Voronoi partitioning of $X$ with respect to $X'$. If $\Phi(C, X'') \leq \beta' \cdot \OPT(X'')$ for some point set $C$, then $\Phi(C, X) \leq (2\beta + 4\beta'(\beta + 1)) \cdot \OPT$.
\end{lemma}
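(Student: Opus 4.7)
The plan is to decompose $\Phi(C, X)$ into two pieces via the approximate triangle inequality (Fact~\ref{fact:2}) and then bound each piece separately. For each $x \in X$, let $\phi(x) \in X'$ denote its nearest point in $X'$, so that $\Phi(X', X) = \sum_{x \in X}\|x - \phi(x)\|^2 \leq \beta \cdot \OPT$. By the definition of $X''$, the weight on each point $x' \in X'$ equals $|\phi^{-1}(x')|$, which gives the useful identity $\Phi(C, X'') = \sum_{x \in X} \min_{c \in C}\|\phi(x) - c\|^2$. This identity is what lets us relate costs on $X$ and on $X''$ point by point.

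First I would show $\Phi(C, X) \leq 2\Phi(X', X) + 2\Phi(C, X'')$. Fixing $x \in X$ and letting $c^\dagger \in C$ minimize $\|\phi(x) - c\|$ over $c \in C$, Fact~\ref{fact:2} gives $\min_{c \in C}\|x - c\|^2 \leq \|x - c^\dagger\|^2 \leq 2\|x - \phi(x)\|^2 + 2\|\phi(x) - c^\dagger\|^2$. Summing over $x \in X$ and invoking the identity above yields $\Phi(C, X) \leq 2\Phi(X', X) + 2\Phi(C, X'') \leq 2\beta \cdot \OPT + 2\beta' \cdot \OPT(X'')$ after applying the two hypotheses.

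The remaining step is to bound $\OPT(X'')$ in terms of $\OPT$. The natural approach is to test the optimal $k$-center set $C^{\star}$ for $X$ against the weighted set $X''$. Writing $c^{\star}(x)$ for the point of $C^{\star}$ closest to $x$, another application of the approximate triangle inequality gives $\|\phi(x) - c^{\star}(x)\|^2 \leq 2\|\phi(x) - x\|^2 + 2\|x - c^{\star}(x)\|^2$. Summing over $x \in X$ and using the identity one more time, $\OPT(X'') \leq \Phi(C^{\star}, X'') \leq 2\Phi(X', X) + 2 \cdot \OPT \leq 2(\beta + 1) \cdot \OPT$. Substituting into the bound from the previous paragraph yields $\Phi(C, X) \leq 2\beta \cdot \OPT + 4\beta'(\beta+1) \cdot \OPT$, exactly as claimed.

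The argument is essentially mechanical once the identity $\Phi(C, X'') = \sum_{x \in X} \min_{c \in C}\|\phi(x) - c\|^2$ is in place, since both halves reduce to a single use of Fact~\ref{fact:2} followed by a sum over $X$. The one place that requires a small amount of care is the bound on $\OPT(X'')$: one must test against the optimal centers of $X$ rather than of $X''$, since only the latter choice lets one invoke the hypothesis $\Phi(X', X) \leq \beta \cdot \OPT$ directly. I do not expect any real obstacle.
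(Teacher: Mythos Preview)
Your proof is correct. The paper does not actually prove this lemma: it is stated as a result imported from Guha \etal~\cite{guha03} (``The main idea is captured in the following lemma from~\cite{guha03}'') and used as a black box in the parallel-algorithm section. The argument you give---two applications of the approximate triangle inequality, once to pass from $\Phi(C,X)$ to $\Phi(C,X'')$ and once to bound $\OPT(X'')$ via the optimal centers for $X$---is exactly the standard proof of this composability result, and the constants match. One minor wording slip: in your last paragraph you write ``only the latter choice'' where you mean the former (testing against the optimal centers of $X$), but the mathematics is unambiguous.
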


The above lemma can be used to design the following simple parallel algorithm in the CREW model with $\sqrt{n/k}$ processors. We do this simple case since it is easy to describe and later generalise further. 
Let $A$ be a constant $\beta$-factor approximation algorithm for the $k$-means problem. 
Such constant factor approximation algorithms that run in polynomial time are known~\cite{ahmadian17,arya01}. 
Consider a partition of the dataset $X$ into $\sqrt{n/k}$ partitions $S_1, ..., S_{\sqrt{n/k}}$ each containing $\sqrt{nk}$ points.
In the parallel algorithm, processor $i$ uses algorithm $A$ on the partition $S_i$ to find $k$ centers $C_i$ for its partition. 
Processor $i$ also does the Voronoi partitioning of $S_i$ with respect to $C_i$ and computes the weighted sets $C_1', ..., C_{\sqrt{n/k}}'$.
After this parallel step the algorithm $A$ is used on the collected set $\cup_{i=1}^{\sqrt{n/k}} C_i'$ of $k \cdot \sqrt{n/k} = \sqrt{nk}$ points to obtain a center set $\mathcal{C}$. This is produced as the solution. 
Note that this is an $O(1)$ approximation algorithm from the above lemma of \cite{guha03}.
Suppose the running time of algorithm $A$ for input parameters $n, k, d$ is $t(n, k, d)$. Then the running time of the parallel algorithm is $t(\sqrt{nk}, k, d)$.
The above algorithm is a two-level algorithm. We can extend this idea to design a multi-level algorithm where each processor works on small subset of weighted points. 
This idea has been used in the past for designing small-space streaming algorithms~\cite{guha03,ajm09}. We give the final result below. The detailed description of the construction may be found in \cite{guha03,ajm09}.

\begin{theorem}
Let $0 < \veps < 1$. There is a parallel algorithm in the CREW model with $N$ processors that runs in time $poly(n^{\veps}, k, d, \frac{1}{\veps}) \cdot \left\lceil \frac{n^{1-\veps}}{N}\right\rceil$ and outputs a constant $c$-factor solution where $c = \tilde{c}^{1/\veps}$ for some global constant $\tilde{c}$.
\end{theorem}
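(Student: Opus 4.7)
The plan is to generalize the two-level construction sketched just before the theorem statement into a multi-level tree of depth approximately $1/\veps$, where at each level the working set shrinks by a factor of roughly $n^{\veps}/k$ and the approximation ratio degrades by only a constant multiplicative factor, thanks to the Guha et al. composition lemma stated above.

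First, I would partition the input set $X$ into $n^{1-\veps}$ groups of size $n^{\veps}$ each and, in parallel, run the known sequential constant-factor algorithm $A$ (e.g., from \cite{ahmadian17,arya01}) on each group. Each processor computes $k$ centers $C_j$ for its group $S_j$, Voronoi-partitions $S_j$ against $C_j$, and emits the weighted representative set $C_j'$ (each point of $C_j$ weighted by the size of its Voronoi cell). The union $\bigcup_j C_j'$ is a weighted instance on $k \cdot n^{1-\veps}$ points. I would then iterate: at level $i$, chunk the current weighted set into groups of size $n^{\veps}$, run $A$ on each group in parallel, and pass up the weighted centers. After $\ell = \lceil 1/\veps \rceil$ levels, the surviving weighted set has size $O(k)$, and one final invocation of $A$ on a single processor yields the returned $k$-center set $\mathcal{C}$.

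For the approximation guarantee, I would iterate the Guha et al. lemma level by level. Let $\beta_0$ denote the approximation ratio of $A$ (an absolute constant), and let $c_i$ denote the approximation ratio achieved by the composite algorithm through the first $i$ levels. A single application of the lemma with the level-$i$ weighted set playing the role of the coreset $X'$ and the level-$(i{+}1)$ invocation playing the role of the top solver gives the recurrence
\[
c_{i+1} \leq 2 \beta_0 + 4\,\beta_0\,(c_i + 1),
\]
whose solution is $c_\ell \leq \tilde c^{\,\ell}$ for some absolute constant $\tilde c$ depending only on $\beta_0$. Plugging $\ell = \lceil 1/\veps \rceil$ yields the claimed $\tilde c^{\,1/\veps}$ overall approximation. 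The running time per level is bounded by $\mathrm{poly}(n^{\veps}, k, d) \cdot \lceil n^{1-\veps}/N \rceil$: the bottom level dominates because it contains the largest number of groups ($n^{1-\veps}$), and each processor runs $A$ on at most $n^{\veps}$ weighted points in time $\mathrm{poly}(n^{\veps}, k, d)$. Summing over the $O(1/\veps)$ levels gives the stated bound $\mathrm{poly}(n^{\veps}, k, d, 1/\veps) \cdot \lceil n^{1-\veps}/N \rceil$.

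The main obstacle is making sure the Guha et al. lemma applies at every level despite the fact that, from level two onward, the "points" are already weighted. This was handled in \cite{guha03,ajm09} by observing that $A$ extends to weighted point sets without loss in the approximation ratio and that the recurrence for $c_i$ closes cleanly to a single-exponential $\tilde c^{\,1/\veps}$ rather than blowing up doubly exponentially; once this compatibility is verified, the remaining work is mostly bookkeeping: scheduling the $N$ processors across levels so that the $\lceil n^{1-\veps}/N \rceil$ factor only appears once (not per level), which follows because the number of groups decreases geometrically from level to level and hence the levels above the bottom contribute lower-order terms to the parallel time.
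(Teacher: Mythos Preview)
Your proposal is correct and follows exactly the approach the paper intends: the paper does not give its own proof but defers to the multi-level construction of Guha \etal~\cite{guha03} and Ailon \etal~\cite{ajm09}, and what you have written is precisely that construction (partition into groups of size $n^{\veps}$, apply $A$, pass up weighted centers, repeat for $O(1/\veps)$ levels, and iterate the composition lemma to get a $\tilde{c}^{1/\veps}$ ratio). One small slip: in your recurrence the roles of $\beta$ and $\beta'$ are transposed---with the level-$i$ weighted set as coreset ($\beta=c_i$) and the level-$(i{+}1)$ call of $A$ as solver ($\beta'=\beta_0$), the lemma gives $c_{i+1}\le 2c_i+4\beta_0(c_i+1)$ rather than $2\beta_0+4\beta_0(c_i+1)$, and the surviving set after $\lceil 1/\veps\rceil$ levels has size at most $n^{\veps}$ rather than $O(k)$; neither affects the stated bounds.
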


\section{Conclusion and open problems}
Our results demonstrate the versatility of the sampling based approach in the context of the $k$-means problem. 
This has also been demonstrated in some of the past works.
The effectiveness of $k$-means++ (which is basically $D^2$-sampling in $k$ rounds) is well known~\cite{av07}.
The $D^2$-sampling technique has been used to give simple {PTAS} for versions of the $k$-means/median problems with various metric-like distance measures~\cite{jks} and also various constrained variations of $k$-means~\cite{bjk}.
It has also been used to give efficient algorithms in the semi-supervised setting~\cite{abjk,ghs18} and coreset construction~\cite{lfkf17}.
In this work, we see its use in the streaming, parallel, and clustering-under-stability settings.
The nice property of the sampling based approach is that we have a uniform template of the algorithm that is extremely simple and that works in various different settings. 
This essentially means that the algorithm remains the same while the analysis changes. 

There are multiple problems that are left open in this work. 
In the streaming setting, we give a generic algorithm within the unified framework of Ding and Xu~\cite{dx15}. 
The advantage of working in this unified framework is that we get streaming algorithms for various constrained versions of the $k$-means problem. 
However, it may be possible to obtain better streaming algorithms for the constrained problems when considered separately. 
For instance, our streaming algorithm works in $4$ passes. 
So, one important question is whether it is possible to design a single-pass algorithm. 
The running time of our streaming algorithm has an exponential dependence on $k$ which is not a problem as long as $k$ is a fixed constant and not part of the input. 
If that is not the case, then our algorithm is not very efficient. 
Moreover, there is an efficient constant-approximation factor streaming algorithm for the classical $k$-means problem~\cite{brav11}. 
So, the relevant question is whether such efficient algorithms can also be designed for the constrained versions of the $k$-means problem.

For the FPT approximation scheme, we obtained an algorithm with running time $O(dn^3 \cdot f(params))$. 
The relevant open problem in this case is whether a linear time (i.e., $O(nd \cdot f(params))$) FPT approximation scheme can be achieved.
An important observation related to our FPT approximation scheme is that our algorithm simply follows the generalisation of the $D^2$-sampling based algorithm that one can find a list of good $t$-center sets for any fixed set of $t$ clusters. 
This allowed us to deal with a few clusters  which were called expensive clusters in the setting considered in this work). 
There may be other settings where apart from a few ``bad" clusters, it is easy to find good centers for the rest of the clusters.
So, an interesting question is whether there are other problem instances where this idea can be exploited.

\paragraph{\bf Acknowledgements.} The authors thank Anup Bhattacharya for helpful discussions. The authors would also like to thank Sanjeev Khanna and Sepehr Assadi for allowing us to use their impossibility argument for the chromatic $k$-means problem.

\addcontentsline{toc}{section}{References}
\bibliographystyle{alpha}
\bibliography{paper}

\appendix
\section{Proof of Lemma~\ref{lem:list-mainlem} (continued)}\label{app:list-mainlem}

We continue with the proof of Lemma~\ref{lem:list-mainlem}.

\paragraph{\bf Case-I: $\left(\Phi(C, X_j) \leq \frac{\veps}{6 \alpha t} \cdot \Phi(C, X) \right)$}

First, note that the following follows from the the fact that $\Phi(C, X) \leq \alpha \cdot \OPT$ (eqn. (\ref{eqn:cost})) and $\OPT \leq OPT$:
\begin{equation}\label{eqn:case1-1}
\Phi(C, X_j) \leq \frac{\veps}{6t} \cdot OPT
\end{equation}
For any point $x \in X$, let $c(x)$ denote the center in the set $C$ that is closest to $x$. That is, $c(x) = \arg\min_{c \in C}{||c - x||}$.
Given this definition, note that:
\begin{equation}
\sum_{x \in X_j} ||x - c(x)||^2 = \Phi(C, X_j)
\end{equation}
We define the multi-set $X_j' = \{c(x) : x \in X_j\}$.
Let $m$ and $m'$ denote the means of the point sets $X_j$ and $X_j'$ respectively. 
So, we have $\Delta(X_j) = \Phi(m, X_j)$ and $\Delta(X_j') = \Phi(m', X_j')$.
We will show that $\Delta(X_j) \approx \Delta(X_j')$. 
First, we bound the distance between $m$ and $m'$.

\begin{lemma}\label{lemma:case1-1}
$||m - m'||^2 \leq \frac{\Phi(C, X_j)}{|X_j|}$.
\end{lemma}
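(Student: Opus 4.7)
The plan is to prove this via a direct application of Jensen's inequality (equivalently, the Cauchy–Schwarz inequality) to the squared norm, which is a convex function.

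First I would write both means in a form that makes the difference a single average of vectors. By definition $m = \frac{1}{|X_j|}\sum_{x \in X_j} x$ and, since $X_j'$ is the multi-set $\{c(x) : x \in X_j\}$ and hence has cardinality $|X_j|$, also $m' = \frac{1}{|X_j|}\sum_{x \in X_j} c(x)$. Therefore
\[
m - m' \;=\; \frac{1}{|X_j|} \sum_{x \in X_j} \bigl(x - c(x)\bigr).
\]

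Next I would apply the standard inequality that the squared norm of an average of vectors is at most the average of the squared norms: for any vectors $v_x \in \mathbb{R}^d$ indexed by a finite set $X_j$,
\[
\left\| \frac{1}{|X_j|} \sum_{x \in X_j} v_x \right\|^2 \;\leq\; \frac{1}{|X_j|} \sum_{x \in X_j} \|v_x\|^2.
\]
This is immediate from the convexity of $\|\cdot\|^2$ (Jensen) or, equivalently, from Cauchy–Schwarz applied coordinate-wise. Setting $v_x = x - c(x)$ and recalling that $\sum_{x \in X_j} \|x - c(x)\|^2 = \Phi(C, X_j)$ by the definition of $c(x)$ as the closest center in $C$ to $x$, we obtain
\[
\|m - m'\|^2 \;\leq\; \frac{1}{|X_j|} \sum_{x \in X_j} \|x - c(x)\|^2 \;=\; \frac{\Phi(C, X_j)}{|X_j|},
\]
which is the claimed bound.

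There is no real obstacle here; the only thing to be careful about is the bookkeeping that $|X_j'| = |X_j|$ because $X_j'$ is treated as a multi-set indexed by $X_j$, so the two means are averages over the same number of terms, which is precisely what allows the Jensen step to go through cleanly.
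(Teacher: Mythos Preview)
Your proof is correct and essentially identical to the paper's: both write $m-m'=\frac{1}{|X_j|}\sum_{x\in X_j}(x-c(x))$ and then apply Cauchy--Schwarz (equivalently Jensen for $\|\cdot\|^2$) to bound the squared norm of the average by the average of the squared norms.
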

\begin{proof}
We have:
\begin{eqnarray*}
||m - m'||^2 = \frac{\lvert\lvert \sum_{x \in X_j} (x - c(x))\rvert\rvert^2}{|X_j|^2} \leq  \frac{\sum_{x \in X_j} \lvert\lvert (x - c(x))\rvert\rvert^2}{|X_j|} = \frac{\Phi(C, X_j)}{|X_j|}.
\end{eqnarray*}
where the second last inequality follows from Cauchy-Schwartz.\qed
\end{proof}

\begin{lemma}\label{lemma:case1-2}
$\Delta(X_j') \leq 2 \cdot \Phi(C, X_j) + 2 \cdot \Delta(X_j)$.
\end{lemma}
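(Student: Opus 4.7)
The plan is to exploit two facts: (i) the centroid $m' = \mu(X_j')$ is the unique minimizer of the 1-means cost over $X_j'$, so $\Delta(X_j') \leq \Phi(p, X_j')$ for every point $p \in \R^d$, and (ii) the approximate triangle inequality (Fact~\ref{fact:2}) converts pairwise squared distances into a two-term sum paying a factor of $2$.

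First I would apply (i) with the choice $p = m = \mu(X_j)$, which gives
\[
\Delta(X_j') \;\leq\; \Phi(m, X_j') \;=\; \sum_{x \in X_j} \norm{c(x) - m}^2,
\]
where the equality uses the definition $X_j' = \{c(x) : x \in X_j\}$ as a multi-set (each $x \in X_j$ contributes a copy of $c(x)$). Next I would apply Fact~\ref{fact:2} term-by-term to split $\norm{c(x) - m}^2$ via the intermediate point $x$:
\[
\norm{c(x) - m}^2 \;\leq\; 2\,\norm{c(x) - x}^2 + 2\,\norm{x - m}^2.
\]
Summing this bound over all $x \in X_j$ and identifying $\sum_{x \in X_j}\norm{c(x)-x}^2 = \Phi(C, X_j)$ (by definition of $c(x)$ as the closest center in $C$) and $\sum_{x \in X_j}\norm{x-m}^2 = \Phi(m, X_j) = \Delta(X_j)$ yields the claimed inequality $\Delta(X_j') \leq 2\,\Phi(C, X_j) + 2\,\Delta(X_j)$.

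There is essentially no obstacle here: the argument is a direct two-line combination of the centroid optimality and the approximate triangle inequality, and does not require Lemma~\ref{lemma:case1-1} (that lemma will instead be used together with this one later, presumably via Fact~\ref{fact:1}, to compare $\Delta(X_j)$ with $\Delta(X_j')$ in the subsequent step of Case~I).
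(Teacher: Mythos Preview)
Your proof is correct and is essentially identical to the paper's own argument: the paper also bounds $\Delta(X_j') = \sum_{x \in X_j}\norm{c(x)-m'}^2 \leq \sum_{x \in X_j}\norm{c(x)-m}^2$ by centroid optimality and then applies Fact~\ref{fact:2} with intermediate point $x$ to split into $2\,\Phi(C,X_j) + 2\,\Delta(X_j)$. Your remark that Lemma~\ref{lemma:case1-1} is not needed here is also accurate.
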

\begin{proof}
We have:
\begin{eqnarray*}
\Delta(X_j') &=& \sum_{x \in X_j} ||c(x) - m'||^2 \leq \sum_{x \in X_j} ||c(x) - m||^2 \\
&\stackrel{\tiny{(Fact\ \ref{fact:2})}}{\leq}& 2 \cdot \sum_{x \in X_{j}} (||c(x) - x||^2 + ||x - m||^2) = 2 \cdot \Phi(C, X_j) + 2 \cdot \Delta(X_j)
\end{eqnarray*}
This completes the proof of the lemma.\qed
\end{proof}

We now show that a good center for $X_j'$ will also be a good center for $X_j$.

\begin{lemma}
Let $m''$ be a point such that $\Phi(m'', X_j') \leq (1 +\frac{\veps}{8}) \cdot \Delta(X_j')$.
Then $\Phi(m'', X_j) \leq (1 + \frac{\veps}{2}) \cdot \Delta(X_j) + \frac{\veps}{2t} \cdot OPT$.
\end{lemma}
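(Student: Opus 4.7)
The strategy is to transport the quality guarantee for $m''$ on the surrogate set $X_j'$ back to the true set $X_j$, paying only small error terms that can be absorbed into $\veps/2 \cdot \Delta(X_j)$ and $\frac{\veps}{2t} \cdot OPT$. The key identity is Fact~\ref{fact:1}, which lets us write $\Phi(m'', X_j) = \Delta(X_j) + |X_j| \cdot \lVert m - m''\rVert^2$ and $\Phi(m'', X_j') = \Delta(X_j') + |X_j'| \cdot \lVert m' - m''\rVert^2$, and $|X_j| = |X_j'|$. So everything reduces to comparing $\lVert m - m''\rVert^2$ with $\lVert m' - m''\rVert^2$, for which we already have the ingredients.

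First, I would apply the weighted (parameterised) triangle inequality $\lVert a+b\rVert^2 \leq (1+\delta)\lVert a\rVert^2 + (1+1/\delta)\lVert b\rVert^2$ with $a = m' - m''$ and $b = m - m'$, giving
\[
\lVert m - m''\rVert^2 \leq (1+\delta)\lVert m'-m''\rVert^2 + (1+1/\delta)\lVert m-m'\rVert^2.
\]
Multiplying by $|X_j|$ and invoking Lemma~\ref{lemma:case1-1} to handle the $\lVert m-m'\rVert^2$ term yields
\[
|X_j|\lVert m - m''\rVert^2 \leq (1+\delta)\bigl(\Phi(m'', X_j') - \Delta(X_j')\bigr) + (1+1/\delta)\Phi(C, X_j).
\]
The hypothesis on $m''$ says $\Phi(m'', X_j') - \Delta(X_j') \leq \frac{\veps}{8}\Delta(X_j')$, and then Lemma~\ref{lemma:case1-2} gives $\Delta(X_j') \leq 2\Phi(C, X_j) + 2\Delta(X_j)$.

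Putting these together and using Fact~\ref{fact:1},
\[
\Phi(m'', X_j) \leq \Delta(X_j) + (1+\delta)\tfrac{\veps}{4}\bigl(\Phi(C, X_j) + \Delta(X_j)\bigr) + (1+1/\delta)\Phi(C, X_j).
\]
Choosing $\delta = 1$ makes the coefficient of $\Delta(X_j)$ equal to $1 + \veps/2$ and the coefficient of $\Phi(C, X_j)$ at most $2 + \veps/2 \leq 3$ (using $\veps \leq 1/2$). Finally, the Case-I hypothesis combined with $\Phi(C,X) \leq \alpha \cdot \OPT \leq \alpha \cdot OPT$ gives $\Phi(C, X_j) \leq \frac{\veps}{6t} \cdot OPT$ (eqn.~(\ref{eqn:case1-1})), so the residual error term is bounded by $3 \cdot \frac{\veps}{6t} \cdot OPT = \frac{\veps}{2t} \cdot OPT$, which is precisely what is wanted.

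There is no real obstacle here; the one thing to watch is the constant bookkeeping, in particular verifying that the choice $\delta = 1$ simultaneously yields the sharp $1 + \veps/2$ factor in front of $\Delta(X_j)$ and a coefficient on $\Phi(C, X_j)$ small enough (at most $3$) that $\frac{\veps}{6t} \cdot OPT$ absorbs into $\frac{\veps}{2t} \cdot OPT$. Everything else is a mechanical application of Fact~\ref{fact:1}, the approximate triangle inequality, and the two lemmas already established for Case-I.
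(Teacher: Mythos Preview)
Your proposal is correct and essentially identical to the paper's proof: both expand $\Phi(m'',X_j)$ via Fact~\ref{fact:1}, split $\lVert m-m''\rVert^2$ through $m'$ using the approximate triangle inequality (your parameterised version with $\delta=1$ is exactly Fact~\ref{fact:2}), and then invoke Lemma~\ref{lemma:case1-1}, the hypothesis, Lemma~\ref{lemma:case1-2}, and eqn.~(\ref{eqn:case1-1}) in the same order to arrive at $(1+\veps/2)\Delta(X_j)+(2+\veps/2)\Phi(C,X_j)\leq (1+\veps/2)\Delta(X_j)+\frac{\veps}{2t}\,OPT$.
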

\begin{proof}
We have:
\begin{eqnarray*}
\Phi(m'', X_j) &\stackrel{\tiny{(Fact~\ref{fact:1})}}{=}& \sum_{x \in X_j} ||x - m||^2 + |X_j| \cdot ||m - m''||^2\\
&\stackrel{\tiny{(Fact~\ref{fact:2})}}{\leq}& \Delta(X_j) + 2|X_j| \cdot (||m - m'||^2 + ||m' - m''||^2) \\
&\stackrel{\tiny{(Lemma~\ref{lemma:case1-1})}}{\leq}& \Delta(X_j) + 2 \cdot \Phi(C, X_j) + 2 |X_j| \cdot ||m' - m''||^2\\
&\stackrel{\tiny{(Fact~\ref{fact:1})}}{\leq}& \Delta(X_j) + 2 \cdot \Phi(C, X_j) + 2 (\Phi(m'', X_j') - \Delta(X_j'))\\
&\leq&  \Delta(X_j) + 2 \cdot \Phi(C, X_j) + \frac{\veps}{4} \cdot \Delta(X_j')\\
&\stackrel{\tiny{(Lemma~\ref{lemma:case1-2})}}{\leq}&  \Delta(X_j) + 2 \cdot \Phi(C, X_j) + \frac{\veps}{2} \cdot (\Phi(C, X_j) + \Delta(X_j)) \\
&\stackrel{\tiny{(Eqn.~\ref{eqn:case1-1})}}{\leq}& \left(1+\frac{\veps}{2} \right) \cdot \Delta(X_j) + \frac{\veps}{2t} \cdot OPT.
\end{eqnarray*}
This completes the proof of the lemma.\qed
\end{proof}

We know from Lemma~\ref{lemma:inaba} that there exists a (multi) subset of $X_j'$ of size $\frac{16}{\veps}$ such that the mean of these points satisfies the condition of the lemma above.
Since $C_j'$ contains at least $\frac{16}{\veps}$ copies of every element of $C$, there is guaranteed to be a subset $T_j \subseteq C_j'$ that satisfies eqn. (\ref{eqn:reqd}).
So, for any index $j \in \{1, ..., t\}$ such that $\frac{\Phi(C, X_j)}{\Phi(C, X)} \leq \frac{\veps}{6 \alpha t}$, $M_j$ has a good subset $T_j$ with probability $1$.

\paragraph{\bf Case-II: $\left(\Phi(C, X_j) > \frac{\veps}{6 \alpha t} \cdot \Phi(C, X) \right)$}

If we can show that a $D^2$-sampled set with respect to center set $C$ has a subset $S$ that may be considered uniform sample from $X_j$, then we can use Lemma~\ref{lemma:inaba} to argue that $M_j$ has a subset $T_j$ such that $\mu(T_j)$ is a good center for $X_j$. 
Note that since $\frac{\Phi(C, X_j)}{\Phi(C, X)} > \frac{\veps}{6 \alpha t}$, we can  argue that if we $D^2$-sample $poly(\frac{t}{\veps})$ elements, then we will get a good representation from $X_j$.
However, some of the points from $X_j$ may be very close to one of the centers in $C$ and hence will have a very small chance of being $D^2$-sampled. 
In such a case, no subset $S$ of a $D^2$-sampled set will behave like a uniform sample from $X_j$. 
So, we need to argue more carefully taking into consideration the fact that there may be points in $X_j$ for which the chance of being $D^2$-sampled may be very small. 
Here is the high-level argument that we will build:
\begin{itemize}
\item Consider the set $X_j'$ which is same as $X_j$ except that points in $X_j$ that are very close to $C$ have been ``collapsed" to their closest center in $C$.
\item Argue that a good center for the set $X_j'$ is a good center for $X_j$.
\item Show that a convex combination of copies of centers in $C$ (i.e., $C_j'$) and $D^2$-sampled points from $X_j$ gives a good center for the set $X_j'$.
\end{itemize}
The closeness of point in $X_j$ to points in $C$ is quantified using radius $R$ that is defined by the equation: 
\begin{equation}\label{eqn:defineR}
R^2 \stackrel{defn.}{=} \frac{\veps^2}{41} \cdot \frac{\Phi(C, X_j)}{|X_j|}.
\end{equation}
Let $X_j^{near}$ be the points in $X_j$ that are within a distance of $R$ from a point in set $C$ and $X_j^{far} = X_j \setminus X_j^{near}$. That is,
$
X_j^{near} = \{x \in X_j : \min_{c \in C}{||x - c||} \leq R\}$ and $X_j^{far} = X_j \setminus X_{j}^{near}.
$
Using these we define the multi-set $X_j'$ as:
\[
X_j' = X_j^{far} \cup \{c(x) : x \in X_j^{near}\}
\]

\newcommand{\nst}{\bar{n}}

Note that $|X_j| = |X_j'|$.
Let $m = \mu(X_j)$, $m' = \mu(X_j')$.
Let $n = |X_j|$ and $\nst = |X_j^{near}|$.
We first show a lower bound on $\Delta(X_j)$ in terms of $R$.

\begin{lemma}\label{lemma:case2-1}
$\Delta(X_j) \geq \frac{16 \nst}{\veps^2} R^2$.
\end{lemma}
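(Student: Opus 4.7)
My plan is to argue by contradiction. Suppose, for sake of contradiction, that $\Delta(X_j) < \frac{16 \nst}{\veps^2} R^2$. The intuition is that the definition of $R$ (eqn.~\ref{eqn:defineR}) pins $\Phi(C, X_j) = \frac{41 n R^2}{\veps^2}$, which is much larger than the assumed upper bound on $\Delta(X_j)$. So the centroid $\mu(X_j)$ must sit far from every point of $C$, which in turn forces each point of $X_j^{near}$ to be far from $\mu(X_j)$, contradicting the assumed smallness of $\Delta(X_j)$.

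To execute this, I would first apply Fact~\ref{fact:1} with $c = c^\star$, where $c^\star$ denotes the center of $C$ closest to $\mu(X_j)$, giving $\Phi(c^\star, X_j) = \Delta(X_j) + n \cdot \|c^\star - \mu(X_j)\|^2$. Combined with the trivial inequality $\Phi(C, X_j) \leq \Phi(c^\star, X_j)$ and the identity $\Phi(C, X_j) = \frac{41 n R^2}{\veps^2}$, the (contradictory) assumption $\Delta(X_j) < \frac{16 \nst R^2}{\veps^2} \leq \frac{16 n R^2}{\veps^2}$ yields $\|c^\star - \mu(X_j)\|^2 > \frac{(41-16) R^2}{\veps^2} = \frac{25 R^2}{\veps^2}$, i.e.\ $\|c^\star - \mu(X_j)\| > \frac{5R}{\veps}$.

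Next, for each $x \in X_j^{near}$, let $c(x) \in C$ be its closest center in $C$. By definition of $X_j^{near}$, $\|x - c(x)\| \leq R$, and by the choice of $c^\star$, $\|c(x) - \mu(X_j)\| \geq \|c^\star - \mu(X_j)\| > \frac{5R}{\veps}$. The triangle inequality then gives $\|x - \mu(X_j)\| \geq \frac{5R}{\veps} - R = \frac{(5-\veps) R}{\veps}$, and using the hypothesis $\veps \leq \tfrac{1}{2}$ the squared version is at least $\frac{(9/2)^2 R^2}{\veps^2} = \frac{81 R^2}{4\veps^2} > \frac{16 R^2}{\veps^2}$. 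Summing over $X_j^{near}$ and using $\Delta(X_j) \geq \sum_{x \in X_j^{near}} \|x - \mu(X_j)\|^2$ produces $\Delta(X_j) > \frac{16 \nst R^2}{\veps^2}$, contradicting the assumption.

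The only edge case to check is when $\Phi(C, X_j) \leq \Delta(X_j)$, which would make the step derived from Fact~\ref{fact:1} vacuous; but in that case $\Delta(X_j) \geq \Phi(C, X_j) = \frac{41 n R^2}{\veps^2} \geq \frac{16 \nst R^2}{\veps^2}$ already, so the lemma holds trivially. I don't expect any real obstacles beyond being careful with the constant-factor arithmetic (the gap between the $41$ in the definition of $R^2$ and the target $16$ is exactly what provides the slack the triangle inequality then needs to absorb once the $\veps \leq 1/2$ hypothesis is used).
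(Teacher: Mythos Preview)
Your proof is correct and follows essentially the same approach as the paper: the paper presents it as an explicit case split on whether $\|m-c\| \ge \frac{5R}{\veps}$ (where $c$ is the nearest center in $C$ to $m=\mu(X_j)$), while you package the same two cases as a contradiction argument, using the contradictory hypothesis to force $\|c^\star - m\| > \frac{5R}{\veps}$ and then running the paper's Case~1 computation. The ingredients (Fact~\ref{fact:1}, the identity $\Phi(C,X_j)=\frac{41nR^2}{\veps^2}$, and the triangle inequality on points of $X_j^{near}$) are identical, and your ``edge case'' is in fact automatically excluded by the contradictory assumption since $\frac{16\nst R^2}{\veps^2}\le\frac{16nR^2}{\veps^2}<\frac{41nR^2}{\veps^2}=\Phi(C,X_j)$.
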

\begin{proof}
Let $c = \arg\min_{c' \in C}{||m - c'||}$. We do a case analysis:
\begin{enumerate}
\item \underline{Case 1}: $||m - c|| \geq \frac{5}{\veps} \cdot R$\\
Consider any point $p \in X_j^{near}$. From triangle inequality, we have:
\[
||p - m|| \geq ||c(p) - m|| - ||c(p) - p|| \geq \frac{5}{\veps} \cdot R - R \geq \frac{4}{\veps} \cdot R.
\]
This gives: $\Delta(X_j) \geq \sum_{p \in X_j^{near}} ||p - m||^2 \geq \frac{16 \nst}{\veps^2} \cdot R^2$.

\item \underline{Case 2}: $||m - c|| < \frac{5}{\veps} \cdot R$\\
In this case, we have:
\begin{eqnarray*}
\Delta(X_j) = \Phi(c, X_j) - n \cdot ||m - c||^2 \geq \Phi(C, X_j) - n \cdot ||m - c||^2 
\geq \frac{41 n}{\veps^2} \cdot R^2 - \frac{25 n}{\veps^2} \cdot R^2 \geq \frac{16 \nst}{\veps^2} \cdot R^2.
\end{eqnarray*}
\end{enumerate}
This completes the proof of the lemma.\qed
\end{proof}

We now bound the distance between $m$ and $m'$ in terms of $R$.

\begin{lemma}\label{lemma:case2-2}
$||m - m'||^2 \leq \frac{\nst}{n} \cdot R^2$.
\end{lemma}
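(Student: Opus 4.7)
The plan is to compute $m - m'$ directly from the definitions and then apply the Cauchy--Schwarz inequality together with the near-point bound $\|x - c(x)\| \leq R$ for $x \in X_j^{near}$.

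First I would observe that $X_j$ and $X_j'$ agree on the sub(multi)set $X_j^{far}$, so the two centroids differ only due to $X_j^{near}$. Explicitly,
\[
m - m' \;=\; \frac{1}{n}\sum_{x \in X_j} x \;-\; \frac{1}{n}\sum_{y \in X_j'} y \;=\; \frac{1}{n}\sum_{x \in X_j^{near}} \bigl(x - c(x)\bigr),
\]
since every $x \in X_j^{far}$ appears unchanged in $X_j'$ while each $x \in X_j^{near}$ is replaced by $c(x)$.

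Next I would apply Cauchy--Schwarz to the sum of $\nst$ vectors:
\[
\Bigl\|\sum_{x \in X_j^{near}} (x - c(x))\Bigr\|^2 \;\leq\; \nst \cdot \sum_{x \in X_j^{near}} \|x - c(x)\|^2.
\]
By the definition of $X_j^{near}$, every $x \in X_j^{near}$ satisfies $\|x - c(x)\| = \min_{c \in C}\|x-c\| \leq R$, so $\sum_{x \in X_j^{near}} \|x - c(x)\|^2 \leq \nst R^2$. Combining,
\[
\|m - m'\|^2 \;\leq\; \frac{1}{n^2}\cdot \nst \cdot \nst R^2 \;=\; \frac{\nst^2}{n^2}\,R^2 \;\leq\; \frac{\nst}{n}\,R^2,
\]
where the last step uses $\nst \leq n$. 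This yields the claimed bound.

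There is no real obstacle here; the lemma is essentially a one-line Cauchy--Schwarz calculation once the right expression for $m-m'$ is written down. The only point that needs a bit of care is recognizing that $X_j$ and $X_j'$ are multi-sets of the same cardinality that agree outside of $X_j^{near}$, so that the differences telescope to the stated sum over $X_j^{near}$. Note that the argument in fact proves the slightly sharper bound $\|m-m'\|^2 \leq (\nst/n)^2 R^2$, but the weaker form stated in the lemma suffices for the subsequent analysis.
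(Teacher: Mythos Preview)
Your proof is correct and follows essentially the same approach as the paper: write $m-m'$ as $\frac{1}{n}\sum_{x\in X_j^{near}}(x-c(x))$, apply Cauchy--Schwarz, bound each term by $R^2$, and finish with $\nst \leq n$. The paper's argument is line-for-line the same, including the intermediate bound $\frac{\nst^2}{n^2}R^2$ before weakening to $\frac{\nst}{n}R^2$.
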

\begin{proof}
Since $|X_j| = |X_j'|$ and the only difference between $X_j$ and $X_j'$ are the points corresponding to $X_j^{near}$, we have:
\[
||m - m'||^2 = \frac{1}{(n)^2} \left| \left| \sum_{p \in X_j^{near}} (p - c(p))\right| \right|^2 \leq \frac{\nst}{(n)^2} \sum_{p \in X_j^{near}} ||p - c(p)||^2 \leq \frac{\nst^2}{(n)^2}R^2 \leq \frac{\nst}{n} R^2.
\]
The second inequality above follows from the Cauchy-Schwarz inequality. \qed
\end{proof}

We now show that $\Delta(X_j)$ and $\Delta(X_j')$ are close.

\begin{lemma}\label{lemma:case2-3}
$\Delta(X_j') \leq 4 \nst R^2 + 2 \cdot \Delta(X_j)$.
\end{lemma}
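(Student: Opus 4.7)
\noindent\textbf{Proof proposal for Lemma~\ref{lemma:case2-3}.} The plan is to bound $\Delta(X_j')$ by evaluating the cost of $X_j'$ at the \emph{wrong} center $m = \mu(X_j)$ (instead of the true centroid $m' = \mu(X_j')$), exploiting the optimality of the centroid to get an upper bound, then splitting the sum over $X_j^{far}$ and $X_j^{near}$ and handling them separately.

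First I would invoke the well-known optimality of the centroid (Fact~\ref{fact:1}) to write
\[
\Delta(X_j') \;=\; \Phi(m', X_j') \;\leq\; \Phi(m, X_j').
\]
Next I would decompose according to the definition of $X_j'$: points of $X_j^{far}$ appear unchanged in $X_j'$, while each $p \in X_j^{near}$ is replaced by $c(p)$. Thus
\[
\Phi(m, X_j') \;=\; \sum_{p \in X_j^{far}} \|p - m\|^2 \;+\; \sum_{p \in X_j^{near}} \|c(p) - m\|^2.
\]
The first sum is at most $\Delta(X_j) = \Phi(m, X_j)$. For the second sum, I would apply the approximate triangle inequality (Fact~\ref{fact:2}) together with the defining property $\|p - c(p)\| \leq R$ of points in $X_j^{near}$:
\[
\|c(p) - m\|^2 \;\leq\; 2\|c(p) - p\|^2 + 2\|p - m\|^2 \;\leq\; 2R^2 + 2\|p - m\|^2.
\]

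Summing over the $\nst$ points of $X_j^{near}$ yields $\sum_{p \in X_j^{near}} \|c(p) - m\|^2 \leq 2 \nst R^2 + 2 \sum_{p \in X_j^{near}} \|p - m\|^2$. Combining the two sums and using $\sum_{p \in X_j^{far}} \|p - m\|^2 + 2 \sum_{p \in X_j^{near}} \|p - m\|^2 \leq 2 \Delta(X_j)$ gives $\Delta(X_j') \leq 2 \nst R^2 + 2 \Delta(X_j)$, which is stronger than (hence implies) the stated bound $4 \nst R^2 + 2 \Delta(X_j)$. There is no real obstacle here: the only subtlety is remembering to evaluate at $m$ rather than at $m'$ so that the triangle inequality can be anchored at the known centroid of $X_j$, and recognizing that points of $X_j^{far}$ contribute identically to $\Phi(m, X_j')$ and $\Phi(m, X_j)$.
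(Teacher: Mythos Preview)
Your proof is correct and in fact yields the sharper constant $2\nst R^2 + 2\Delta(X_j)$. The paper argues differently: it keeps the true centroid $m'$ throughout, writing
\[
\Delta(X_j') = \sum_{p \in X_j^{near}} \|c(p) - m'\|^2 + \sum_{p \in X_j^{far}} \|p - m'\|^2,
\]
applies Fact~\ref{fact:2} to the near terms to reach $2\nst R^2 + 2\,\Phi(m', X_j)$, and only then converts $\Phi(m', X_j)$ to $\Delta(X_j)$ via Fact~\ref{fact:1} and Lemma~\ref{lemma:case2-2}, which is where the extra $2\nst R^2$ enters. Your trick of passing to $m$ at the very first step via the centroid-optimality inequality $\Delta(X_j') \le \Phi(m, X_j')$ lets you anchor the triangle inequality directly at $m$, so the resulting $\|p - m\|^2$ terms already sum to $\Delta(X_j)$ without any detour through $m'$; in particular you never need Lemma~\ref{lemma:case2-2}. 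The paper's route is slightly more mechanical (work at $m'$, then translate), while yours is shorter and loses less in the constants; either suffices for the downstream use in Lemma~\ref{lemma:case2-4}.
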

\begin{proof}
The lemma follows from the following sequence of inequalities:
\begin{eqnarray*}
\Delta(X_j') &=& \sum_{p \in X_j^{near}} ||c(p) - m'||^2 + \sum_{p \in X_j^{far}} ||p - m'||^2\\
&\stackrel{\tiny{(Fact~\ref{fact:2})}}{\leq}& \sum_{p \in X_j^{near}} 2 \cdot \left( ||c(p) - p||^2 + ||p - m'||^2\right) + \sum_{p \in X_j^{far}} ||p - m'||^2\\
&\leq& 2 \nst R^2 + 2 \cdot \Phi(m', X_j)\\
&\stackrel{\tiny{(Fact~\ref{fact:1})}}{\leq}& 2 \nst R^2 + 2 \cdot (\Phi(m, X_j) + n \cdot ||m - m'||^2) \\
&\stackrel{\tiny{(Lemma~\ref{lemma:case2-2})}}{\leq}& 4 \nst R^2 + 2 \cdot \Delta(X_j)
\end{eqnarray*}
This completes the proof of the lemma.\qed
\end{proof}

We now argue that any center that is good for $X_j'$ is also good for $X_j$.
\begin{lemma}\label{lemma:case2-4}
Let $m''$ be  such that $\Phi(m'', X_j') \leq \left(1 + \frac{\eps}{16} \right) \cdot \Delta(X_j')$.
Then $\Phi(m'', X_j) \leq \left(1 + \frac{\eps}{2} \right) \cdot \Delta(X_j)$.
\end{lemma}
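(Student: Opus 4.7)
The plan is to peel the distance from $m''$ to the true centroid $m$ of $X_j$ using the intermediate point $m' = \mu(X_j')$. By Fact~\ref{fact:1} applied to $X_j$, $\Phi(m'', X_j) = \Delta(X_j) + n \cdot \lVert m - m''\rVert^2$, so everything reduces to controlling $\lVert m - m''\rVert^2$. Using the approximate triangle inequality (Fact~\ref{fact:2}) through $m'$ gives $\lVert m - m''\rVert^2 \leq 2\lVert m - m'\rVert^2 + 2 \lVert m' - m''\rVert^2$. The first term is already bounded by Lemma~\ref{lemma:case2-2}: $\lVert m - m'\rVert^2 \leq \frac{\nst}{n} R^2$, contributing at most $2 \nst R^2$ after multiplying by $n$.

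For the second term, I would use Fact~\ref{fact:1} again, this time on $X_j'$, noting that $|X_j'| = n$: it yields $\lVert m' - m''\rVert^2 = \frac{\Phi(m'', X_j') - \Delta(X_j')}{n} \leq \frac{\veps/16}{n} \cdot \Delta(X_j')$ by hypothesis. Multiplying by $2n$ gives a $\frac{\veps}{8} \Delta(X_j')$ contribution. Then Lemma~\ref{lemma:case2-3} bounds $\Delta(X_j') \leq 4\nst R^2 + 2 \Delta(X_j)$, so this piece becomes $\frac{\veps \nst R^2}{2} + \frac{\veps}{4} \Delta(X_j)$. Combining everything:
\[
\Phi(m'', X_j) \;\leq\; \Delta(X_j) + 2 \nst R^2 + \tfrac{\veps}{2} \nst R^2 + \tfrac{\veps}{4} \Delta(X_j) \;\leq\; \bigl(1 + \tfrac{\veps}{4}\bigr) \Delta(X_j) + 3 \nst R^2.
\]

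The last step is to absorb the $3 \nst R^2$ term back into $\Delta(X_j)$ using Lemma~\ref{lemma:case2-1}, which says $\nst R^2 \leq \frac{\veps^2}{16} \Delta(X_j)$. Hence $3 \nst R^2 \leq \frac{3\veps^2}{16}\Delta(X_j) \leq \frac{\veps}{4}\Delta(X_j)$ whenever $\veps \leq \frac{4}{3}$ (in particular for $\veps \leq \frac{1}{2}$), giving $\Phi(m'', X_j) \leq (1 + \frac{\veps}{2}) \Delta(X_j)$.

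The step I expect to be most delicate is not any individual inequality but the constant bookkeeping: the $\frac{1}{41}$ in the definition of $R^2$ (eqn.~(\ref{eqn:defineR})) and the $\frac{16}{\veps^2}$ in Lemma~\ref{lemma:case2-1} were engineered precisely so that the two error sources --- the approximation-triangle inequality loss (worth $\frac{\veps}{4}\Delta(X_j)$) and the ``near points collapsed to centers'' loss (worth another $\frac{\veps}{4}\Delta(X_j)$) --- each stay within half the $\frac{\veps}{2}$ budget. If any of these constants were loosened, the argument would fail to close.
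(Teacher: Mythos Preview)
Your proof is correct and follows essentially the same route as the paper: apply Fact~\ref{fact:1} to $X_j$, split $\lVert m-m''\rVert^2$ via Fact~\ref{fact:2} through $m'$, control the two pieces by Lemma~\ref{lemma:case2-2} and Fact~\ref{fact:1}/the hypothesis on $X_j'$, then invoke Lemmas~\ref{lemma:case2-3} and~\ref{lemma:case2-1} to absorb the residual $\nst R^2$ terms. The only cosmetic difference is that you lump $2\nst R^2 + \tfrac{\veps}{2}\nst R^2$ into a single $3\nst R^2$ before applying Lemma~\ref{lemma:case2-1}, whereas the paper keeps these terms separate; the constants close either way.
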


\begin{proof}
The lemma follows from the following inequalities:
\begin{eqnarray*}
\Phi(m'', X_j) &=& \sum_{p \in X_j} ||m''-p||^2 \\
&\stackrel{\tiny{(Fact~\ref{fact:1})}}{=}& \sum_{p \in X_j} ||m - p||^2 +  n \cdot ||m-m''||^2 \\
&\stackrel{\tiny{(Fact~\ref{fact:2})}}{\leq}& \Delta(X_j) + 2n \left( ||m - m'||^2 + ||m'-m''||^2 \right)  \\
&\stackrel{\tiny{(Lemma~\ref{lemma:case2-2})}}{\leq}& \Delta(X_j) + 2 \nst R^2 + 2n \cdot ||m'-m''||^2 \\
&\stackrel{\tiny{(Fact~\ref{fact:1})}}{\leq}& \Delta(X_j) + 2 \nst R^2 + 2 \cdot \left( \Phi(m'', X_j') - \Delta(X_j')\right)\\
&\stackrel{\tiny{(Lemma\ hypothesis)}}{\leq}& \Delta(X_j) + 2 \nst R^2 + \frac{\eps}{8} \cdot \Delta(X_j') \\
&\stackrel{\tiny{(Lemma~\ref{lemma:case2-3})}}{\leq}& \Delta(X_j) + 2 \nst R^2 + \frac{\eps}{2} \cdot \nst R^2 + \frac{\eps}{4} \cdot \Delta(X_j)\\
&\stackrel{\tiny{(Lemma~\ref{lemma:case2-1})}}{\leq}& \left(1 + \frac{\eps}{2} \right) \cdot \Delta(X_j).
\end{eqnarray*}
This completes the proof of the lemma.
\qed
\end{proof}

Given the above lemma, all we need to argue is that our algorithm indeed considers a center $m''$ such that $\Phi(m'', X_j') \leq (1+\veps/16) \cdot \Delta(X_j')$.
For this we would need about $\Omega(\frac{1}{\veps})$ uniform samples from $X_j'$. 
However, our algorithm can only sample using $D^2$-sampling w.r.t. $C$. 
For ease of notation, let $c(X_j^{near})$ denote the multi-set $\{c(p): p \in X_j^{near}\}$.
Recall that $X_j'$ consists of $X_j^{far}$ and $c(X_j^{near})$.
The first observation we make is that the probability of sampling an element from $X_j^{far}$ is reasonably large (proportional to $\frac{\veps}{k}$). 
Using this fact, we show how to sample from $X_j'$ (almost uniformly). 
Finally, we show how to convert this almost uniform sampling to uniform sampling (at the cost of increasing the size of sample).

\begin{lemma}
\label{lem:osample}
Let $x$ be a sample from $D^2$-sampling w.r.t. $C$.
Then, $\pr[x \in X_j^{far}] \geq \frac{\eps}{8 \alpha t}$.
Further, for any point $p \in X_j^{far}$, $\pr[x=p] \geq \frac{\gamma}{|X_j|}$, where $\gamma$ denotes $\frac{\veps^3}{246 \alpha t}$.
\end{lemma}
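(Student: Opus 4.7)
The plan is to prove both parts directly from the definition of $D^2$-sampling $\Pr[x=p] = \Phi(C,\{p\})/\Phi(C,X)$, the defining inequality of Case-II $\Phi(C,X_j) > \frac{\veps}{6\alpha t}\cdot \Phi(C,X)$, and the definition of $R$ in eqn.~(\ref{eqn:defineR}). Both bounds reduce to algebra once the right inequalities are assembled, so I expect no genuine obstacle; the main care is in tracking constants.

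For the second statement (which I would do first because it is the cleanest), fix $p \in X_j^{far}$. By definition of $X_j^{far}$, $\min_{c \in C}\|p-c\|^2 > R^2$, so $\Phi(C,\{p\}) > R^2$. Then
\[
\Pr[x=p] \;=\; \frac{\Phi(C,\{p\})}{\Phi(C,X)} \;>\; \frac{R^2}{\Phi(C,X)} \;=\; \frac{\veps^2}{41|X_j|}\cdot \frac{\Phi(C,X_j)}{\Phi(C,X)} \;>\; \frac{\veps^2}{41|X_j|}\cdot \frac{\veps}{6\alpha t} \;=\; \frac{\gamma}{|X_j|},
\]
where I used the definition of $R$ from (\ref{eqn:defineR}) and the Case-II hypothesis in the last inequality, and $\gamma = \veps^3/(246\alpha t)$.

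For the first statement, I would lower-bound $\Phi(C, X_j^{far})$ by upper-bounding $\Phi(C, X_j^{near})$. Since every $p \in X_j^{near}$ has $\min_{c \in C}\|p-c\|^2 \leq R^2$,
\[
\Phi(C,X_j^{near}) \;\leq\; |X_j^{near}|\cdot R^2 \;\leq\; |X_j|\cdot \frac{\veps^2}{41}\cdot \frac{\Phi(C,X_j)}{|X_j|} \;=\; \frac{\veps^2}{41}\,\Phi(C,X_j).
\]
Hence $\Phi(C,X_j^{far}) \geq (1-\veps^2/41)\Phi(C,X_j) \geq \tfrac{163}{164}\Phi(C,X_j)$ for $\veps \leq 1/2$. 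Combining this with Case-II,
\[
\Pr[x \in X_j^{far}] \;=\; \frac{\Phi(C,X_j^{far})}{\Phi(C,X)} \;\geq\; \frac{163}{164}\cdot \frac{\Phi(C,X_j)}{\Phi(C,X)} \;>\; \frac{163}{164}\cdot \frac{\veps}{6\alpha t} \;\geq\; \frac{\veps}{8\alpha t},
\]
since $163/(164\cdot 6) > 1/8$. This completes the plan; no step involves anything beyond bookkeeping of the Case-II assumption and the numerical constant in the definition of $R$.
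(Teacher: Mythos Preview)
Your proposal is correct and follows essentially the same approach as the paper's own proof: bound $\Phi(C,X_j^{near})$ by $|X_j|R^2 = \tfrac{\veps^2}{41}\Phi(C,X_j)$ and subtract from $\Phi(C,X_j)$ to handle the first claim, and for the second claim combine $\Phi(C,\{p\}) \geq R^2$ with the Case-II lower bound on $\Phi(C,X_j)/\Phi(C,X)$. The only cosmetic differences are that the paper phrases the first part directly in terms of probabilities rather than costs, and presents the two parts in the opposite order.
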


\begin{proof}
Note that $\sum_{p \in X_j^{near}} \pr[x=p] \leq \frac{R^2}{\Phi(C, X)} \cdot |X_j| \leq \frac{\eps^2}{41} \cdot \frac{\Phi(C, X_j)} {\Phi(C, X)}$.
Therefore, the fact that we are in case~II implies that:
$$\pr[x \in X_j^{far}] \geq \pr[x \in X_j] - \pr[x \in X_j^{near}] \geq \frac{\Phi(C, X_j)}{\Phi(C, X)} - \frac{\eps^2}{41} \cdot \frac{\Phi(C, X_j)} {\Phi(C, X)} \geq \frac{\eps}{8 \alpha t}.$$

\noindent
Also, if $x \in X_j^{far}$, then $\Phi(C, \{x\}) \geq R^2=\frac{\eps^2}{41} \cdot \frac{\Phi(C, X_j)}{|X_j|}$.
Therefore,
$$\frac{\Phi(C, \{x\})}{\Phi(C, X)}  \geq \frac{\eps}{6 \alpha t} \cdot \frac{R^2}{\Phi(C, X_j)} \geq \frac{\veps}{6 \alpha t} \cdot \frac{\veps^2}{41} \cdot \frac{1}{|X_j|} \geq \frac{\veps^3}{246 \alpha t} \cdot \frac{1}{|X_j|}.
$$
This completes the proof of the lemma.
\qed
\end{proof}

Let $O_1, \ldots O_{\eta}$ be $\eta$ points sampled independently using $D^2$-sampling w.r.t. $C$.
We construct a new set of random variables $Y_1, \ldots, Y_{\eta}$.
Each variable $Y_u$ will depend on $O_u$ only, and will take values either in $X_j'$ or will be $\nl$.
These variables are defined as follows: if $O_u \notin X_j^{far}$, we set $Y_u$ to  $\nl$. 
Otherwise, we assign $Y_u$ to one of the following random variables with equal probability:
(i) $O_u$ or (ii) a random element of the multi-set $c(X_j^{near})$.
The following observation follows from Lemma~\ref{lem:osample}.

\begin{corollary}
\label{cor:osample}
For a fixed index $u$, and an element $x \in X_j'$, $\pr[Y_u=x] \geq \frac{\gamma'}{|X_j'|},$ where $\gamma'=\gamma/2$.
\end{corollary}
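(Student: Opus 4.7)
The plan is to verify the bound by unfolding the two-stage definition of $Y_u$ and casing on which part of the multi-set $X_j' = X_j^{far} \cup c(X_j^{near})$ the element $x$ lies in. I treat $X_j'$ as a multi-set indexed by $X_j$ via the natural bijection that fixes points of $X_j^{far}$ and sends each $p \in X_j^{near}$ to its representative $c(p)$; in particular $|X_j'| = |X_j|$.

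First I would handle the case $x \in X_j^{far}$. Here $Y_u = x$ whenever $O_u = x$ and the fair coin selects the first branch, so by Lemma~\ref{lem:osample} and the independent factor of $\tfrac{1}{2}$ for the coin,
\[
\pr[Y_u = x] \geq \tfrac{1}{2} \cdot \pr[O_u = x] \geq \frac{\gamma}{2 |X_j|} = \frac{\gamma'}{|X_j'|}.
\]
Next I would handle the case $x = c(p)$ for some specific $p \in X_j^{near}$ (so that $x$ is the particular copy in $X_j'$ contributed by $p$). Then $Y_u = x$ requires the three independent events (i) $O_u \in X_j^{far}$, (ii) the coin picks the second branch, and (iii) the uniform draw over the multi-set $c(X_j^{near})$ lands on the position contributed by $p$. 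Lemma~\ref{lem:osample} bounds (i) from below by $\tfrac{\veps}{8 \alpha t}$; (ii) contributes $\tfrac{1}{2}$; and (iii) contributes exactly $\tfrac{1}{|X_j^{near}|}$ since the draw is uniform over positions. Therefore
\[
\pr[Y_u = x] \geq \frac{\veps}{16 \alpha t \cdot |X_j^{near}|} \geq \frac{\veps}{16 \alpha t \cdot |X_j'|}.
\]

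To finish, I would check that $\tfrac{\veps}{16 \alpha t} \geq \gamma' = \tfrac{\veps^3}{492\alpha t}$, which follows trivially from $\veps \leq 1/2 \leq 1$ and $16 \leq 492$. Both cases therefore yield $\pr[Y_u = x] \geq \gamma'/|X_j'|$, proving the corollary. There is no real obstacle here: the only subtle point is interpreting ``a random element of the multi-set $c(X_j^{near})$'' as a uniform choice over positions (with multiplicity), which is precisely what makes the $1/|X_j^{near}|$ bound tight at each individual copy rather than diluted by duplicates.
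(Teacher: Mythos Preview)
Your proof is correct and follows essentially the same approach as the paper: a two-case analysis on whether $x$ lies in $X_j^{far}$ or in $c(X_j^{near})$, invoking the two bounds of Lemma~\ref{lem:osample} together with the $\tfrac{1}{2}$ coin factor, and then checking $\tfrac{\veps}{16\alpha t}\ge \gamma'$. Your explicit treatment of the multi-set draw as uniform over positions is a helpful clarification that the paper leaves implicit.
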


\begin{proof}
If $x \in X_j^{far}$, then we know from Lemma~\ref{lem:osample} that $O_u$ is $x$ with probability at least $\frac{\gamma}{|X_j'|}$ (note that
$X_j'$ and $X_j$ have the same cardinality). 
Conditioned on this event, $Y_u$ will be equal to $O_u$ with probability $1/2$.
Now suppose $x \in c(X_j^{near})$. Lemma~\ref{lem:osample} implies that $O_u$ is an element of $X_j^{far}$ with probability at least $\frac{\eps}{8 \alpha t}$.
Conditioned on this event, $Y_u$ will be equal to $x$ with probability at least $\frac{1}{2} \cdot \frac{1}{|c(X_j^{near})|}$. 
Therefore, the probability that $O_u$ is equal to $x$ is at least $\frac{\eps}{8 \alpha t} \cdot \frac{1}{2|c(X_j^{near})|} \geq \frac{\eps}{16 \alpha t |X_j'|} \geq \frac{\gamma'}{|X_j'|}$.
\qed
\end{proof}

Corollary~\ref{cor:osample} shows that we can obtain samples from $X_j'$ which are nearly uniform (up to a constant factor).
To convert this to a set of uniform samples, we use the idea of~\cite{jks}.
For an element $x \in X_j'$, let $\gamma_x$ be such that $\frac{\gamma_x}{|X_j'|}$ denotes the probability that the random variable $Y_u$ is equal to $x$ (note that this is independent of $u$).
Corollary~\ref{cor:osample} implies that $\gamma_x \geq \gamma'$.
We define a new set of independent random variables $Z_1, \ldots, Z_{\eta}$.
The random variable $Z_u$ will depend on $Y_u$ only.
If $Y_u$ is $\nl$, $Z_u$ is also $\nl$.
If $Y_u$ is equal to $x \in X_j'$, then $Z_u$ takes the value $x$ with probability $\frac{\gamma'}{\gamma_x}$, and $\nl$ with the remaining probability.
\lv{Note that $Z_u$ is either $\nl$ or one of the elements of $\ostp{\ti}$.
Further, conditioned on the latter event, it is a uniform sample from $\ostp{\ti}$.}
We can now prove the key lemma.

\begin{lemma}\label{lem:final}
Let $\eta$ be $\frac{256}{\gamma' \cdot \eps}$, and $m''$ denote the mean of the non-null samples from $Z_1, \ldots, Z_{\eta}$. Then, with probability at least $(3/4)$,
$\Phi(m'', X_j') \leq (1+ \frac{\veps}{16}) \cdot \Delta(X_j')$.
\end{lemma}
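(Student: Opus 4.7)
The plan is to combine two independent observations: (i) each non-null $Z_u$ is a uniformly random element of $X_j'$, and (ii) with high probability sufficiently many $Z_u$'s are non-null to invoke Inaba's lemma at the required accuracy.

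For (i), I would verify directly from the construction that for any $x \in X_j'$,
\[
\pr[Z_u = x] \;=\; \pr[Y_u = x] \cdot \pr[Z_u = x \mid Y_u = x] \;=\; \frac{\gamma_x}{|X_j'|} \cdot \frac{\gamma'}{\gamma_x} \;=\; \frac{\gamma'}{|X_j'|}.
\]
Since this is independent of $x$, we have $\pr[Z_u \neq \nl] = \gamma'$ and, conditional on $Z_u \neq \nl$, the variable $Z_u$ is uniform on $X_j'$. Because the $Z_u$'s are independent (each depending only on its own $Y_u$, hence on $O_u$), for any fixed subset $S \subseteq \{1,\ldots,\eta\}$ of non-null indices, the collection $\{Z_u : u \in S\}$ is an i.i.d.\ uniform sample of size $|S|$ from $X_j'$.

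For (ii), let $M$ denote the (random) number of non-null $Z_u$'s. Then $M \sim \mathrm{Bin}(\eta, \gamma')$ with $\mathbf{E}[M] = \eta \gamma' = 256/\veps$. A standard multiplicative Chernoff bound gives
\[
\pr\!\left[M < \tfrac{128}{\veps}\right] \;\leq\; \exp\!\left(-\tfrac{1}{2}\cdot\tfrac{1}{4}\cdot\tfrac{256}{\veps}\right) \;=\; \exp\!\left(-\tfrac{32}{\veps}\right) \;\leq\; \tfrac{1}{8}.
\]
Now condition on any realization $M = m \geq 128/\veps$: by the observation in (i), the $m$ non-null samples are i.i.d.\ uniform on $X_j'$, so Lemma~\ref{lemma:inaba} applied with $\delta = 1/8$ yields
\[
\pr\!\left[\Phi(m'', X_j') \leq \Bigl(1 + \tfrac{8}{m}\Bigr)\cdot \Delta(X_j') \;\Big|\; M = m\right] \;\geq\; \tfrac{7}{8}.
\]
Since $m \geq 128/\veps$ implies $8/m \leq \veps/16$, the event inside the probability is contained in $\{\Phi(m'', X_j') \leq (1 + \veps/16)\Delta(X_j')\}$. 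A union bound over the ``few non-null samples'' event and the failure of Inaba's bound yields the overall success probability of at least $1 - 1/8 - 1/8 = 3/4$.

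The only place requiring care is the conditioning: one must check that restricting to $\{M = m\}$ does not destroy the i.i.d.-uniform structure needed by Inaba's lemma. This is clean here because the $Z_u$'s are independent and each non-null value is marginally uniform on $X_j'$ irrespective of what the other coordinates do; conditioning on which indices are non-null therefore still gives $m$ i.i.d.\ uniform draws. Apart from this, the argument is a routine Chernoff-plus-union-bound computation.
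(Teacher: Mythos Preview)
Your proposal is correct and follows essentially the same approach as the paper: verify that each non-null $Z_u$ is uniform on $X_j'$ with probability $\gamma'$ of being non-null, use Chernoff to guarantee at least $128/\veps$ non-null samples with probability at least $7/8$, and then apply Inaba's lemma with $\delta=1/8$. The only cosmetic difference is that the paper introduces an auxiliary coupling $Z_u'$ (an explicit $\gamma'$-coin followed by a uniform draw) to make the conditioning step transparent, whereas you argue the conditioning directly; both are valid.
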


\begin{proof}
Note that a random variable $Z_u$ is equal to a specific element of $X_j'$ with probability equal to $\frac{\gamma'}{|X_j'|}$.
Therefore, it takes $\nl$ value with probability $1-\gamma'$.
Now consider a different set of iid random variables $Z_u'$, $1 \leq u \leq \eta$ as follows: each $Z_u$ tosses a coin with probability of Heads being $\gamma'$.
If we get Tails, it gets value $\nl$, otherwise it is equal to a random element of $X_j'$. 
It is easy to check that the joint distribution of the random variables $Z_u'$ is identical to that
of the random variables $Z_u$.
Thus, it suffices to prove the statement of the lemma for the random variables $Z_u'$.

Now we condition on the coin tosses of the random variables $Z_u'$.
Let $n'$ be the number of random variables which are not $\nl$.
($n'$ is a deterministic quantity because we have conditioned on the coin tosses).
Let $m''$ be the mean of such non-$\nl$ variables among $Z_1', \ldots, Z_{\eta}'$.
If $n'$ happens to be larger than $\frac{128}{\veps}$, Lemma~\ref{lemma:inaba} implies that with probability at least $(7/8)$,
$\Phi(m'', X_j') \leq (1+ \frac{\veps}{16}) \cdot \Delta(X_j')$.

Finally, observe that the expected number of non-$\nl$ random variables is $\gamma' \cdot \eta \geq \frac{256}{\veps}$.
Therefore, with probability at least $\frac{7}{8}$ (using Chernoff-Hoeffding), the number of non-$\nl$ elements will be at least $\frac{128}{\veps}$.
\qed
\end{proof}

Let $C^{(\eta)}$ denotes the multi-set obtained by taking $\eta$ copies of each of the centers in $C$. 
Now observe that all the non-$\nl$ elements among $Y_1, \ldots, Y_{\eta}$ are elements of $\{O_1, \ldots, O_{\eta}\} \cup C^{(\eta)}$, and so the same must hold for $Z_1, \ldots, Z_{\eta}$. 
Moreover, since we only need a uniform subset of size $\frac{128}{\veps}$, $C_j'$ suffices instead of $C^{(\eta)}$.
This implies that in steps 5-6 of the algorithm, we would have tried adding the point $m''$ as described in Lemma~\ref{lem:final}. 
This means that $M_j$ contains a subset $T_j$ such that $\Phi(\mu(T_j), X_j) \leq (1+\frac{\veps}{2})\cdot \Delta(X_j)$ with probability at least $3/4$.
This concludes the proof of Theorem~\ref{thm:2}.

\section{Proof of Theorem~\ref{thm:weakdel}}\label{app:stab}

We restate the theorem before the proof. 

\begin{theorem}
There are instances $(X,k)$ of the $k$-means problem which are 
$\gamma$-distributed, but are not $(1+\Omega(\gamma))$-weak deletion stable for some parameter $\gamma > 0$. 
\end{theorem}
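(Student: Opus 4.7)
The plan is to exhibit a simple three-point instance in $\R^2$ with $k=2$ whose $\beta$-distributed parameter strictly exceeds its weak-deletion-stability gap by a constant factor.

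\emph{Construction.} Take $k = 2$ and the three points $p_1 = (0, h)$, $p_2 = (0, -h)$, $p_3 = (L, 0)$ in $\R^2$, where $0 < h < L/\sqrt{3}$. First, I would verify that the optimal $2$-clustering is the symmetric partition $\{X_1^\star, X_2^\star\} = \{\{p_1, p_2\}, \{p_3\}\}$. Up to relabelling, a three-point set admits only three bipartitions into two nonempty parts; the symmetric one has cost $2h^2$, while each of the two remaining partitions (which pair $p_3$ with either $p_1$ or $p_2$) has cost $(L^2 + h^2)/2$. The hypothesis $h < L/\sqrt{3}$ is equivalent to $3h^2 < L^2$, i.e., $2h^2 < (L^2 + h^2)/2$, so the symmetric partition is uniquely optimal. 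This gives $c_1^\star = (0, 0)$, $c_2^\star = (L, 0)$, $|X_1^\star| = 2$, $|X_2^\star| = 1$, and $\OPT = 2h^2$.

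\emph{Evaluating the two parameters.} For the $\beta$-distributed condition, I evaluate the inequality $\|x - c_i^\star\|^2 \geq \beta\,\OPT/|X_i^\star|$ for each cluster: for cluster $1$ the external point is $p_3$ at squared distance $L^2$, giving $\beta \leq L^2/h^2$; for cluster $2$ the external points are $p_1, p_2$ at squared distance $L^2 + h^2$, giving $\beta \leq (L^2 + h^2)/(2h^2)$. Since $L > h$ the second bound is binding, so the instance is $\gamma$-distributed with $\gamma := (L^2 + h^2)/(2h^2)$. For weak deletion stability, I compute $|X_i^\star|\,\|c_1^\star - c_2^\star\|^2/\OPT$ for the two ordered roles of $i$, obtaining $L^2/h^2$ and $L^2/(2h^2)$; the minimum $\gamma_{\mathrm{WDS}} := L^2/(2h^2)$ is the WDS gap. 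The geometric reason for the separation is that $p_1, p_2$ sit off the axis joining $c_1^\star$ and $c_2^\star$, so the distance from $c_2^\star$ to those points, $\sqrt{L^2 + h^2}$, strictly exceeds the distance $L = \|c_1^\star - c_2^\star\|$ between the centres.

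\emph{Conclusion.} Since $\gamma_{\mathrm{WDS}}/\gamma = L^2/(L^2 + h^2) < 1$, the instance is $\gamma$-distributed yet is not $(1+\gamma)$-weakly deletion stable. A concrete witness is $h = 1, L = 2$, which satisfies $h < L/\sqrt{3}$ and gives $\gamma = 5/2$ and $\gamma_{\mathrm{WDS}} = 2$, so the instance is $(5/2)$-distributed yet not $(1 + 5/2)$-WDS; this realises the theorem with the constant hidden in $\Omega(\cdot)$ equal to $1$.

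\emph{Main obstacle.} The only nontrivial step is the optimality verification. Because the three-point instance has exactly three candidate bipartitions, optimality of the symmetric one reduces to the single polynomial inequality $3h^2 < L^2$, which is enforced by the parameter choice; all remaining bounds follow directly from the definitions of $\beta$-distributed and weak deletion stability.
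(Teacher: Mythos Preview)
Your construction and its verification are correct: the three-point instance is $\frac{L^2+h^2}{2h^2}$-distributed with weak-deletion-stability gap $\frac{L^2}{2h^2}$, so the ratio $\gamma_{\mathrm{WDS}}/\gamma = \frac{L^2}{L^2+h^2} < 1$. However, this only shows that $\gamma$-distributed need not imply $(1+\gamma)$-WDS; it does not establish what the paper actually proves, namely that for \emph{every} constant $c>0$ there is a $\gamma$-distributed instance that fails to be $(1+c\gamma)$-WDS. (This is the natural reading of ``not $(1+\Omega(\gamma))$-WDS'': one is negating the putative reverse implication ``$\gamma$-distributed $\Rightarrow$ $(1+c\gamma)$-WDS for some universal $c$''.) Your family has a built-in floor: the constraint $h<L/\sqrt{3}$ forces $\gamma_{\mathrm{WDS}}/\gamma = L^2/(L^2+h^2) > 3/4$, so every instance you produce is in fact $(1+\tfrac34\gamma)$-WDS, and the construction cannot witness arbitrarily small $c$.

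The paper's instance is designed to drive this ratio to zero. It takes $n$ points in $\R^{n+1}$, the $i$th point being the standard basis vector $e_i$ perturbed by $\pm\veps$ in coordinate $n+1$, and sets $k=2$. The contribution of the first $n$ coordinates to the $2$-means cost is $n-2$ for \emph{every} bipartition, so the $\pm\veps$ split is uniquely optimal with $\OPT = n-2$, and one checks the instance is $(1/2)$-distributed independently of $n$. But reassigning one cluster to the other centroid raises the cost only to $n(1+2\veps^2)$, so $\gamma_{\mathrm{WDS}} = \Theta(1/n) \to 0$ while $\gamma = 1/2$ stays fixed. The mechanism your three-point example lacks is a high-dimensional configuration in which every point is nearly equidistant from the two centroids, so that merging the clusters is almost free even though each point individually sits far (in the $\beta$-distributed sense) from the opposite centroid.
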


\begin{proof}
The construction of the input instance $X$ is very simple. 
There are  $n$ points in the $(n+1)$-dimensional Euclidean space (assume $n$ is even). 
The $i^{th}$ point is denoted by $p^i$ has the following coordinates: $p^i_{j} = 0$ if $j \neq i, n+1$. 
The coordinate $p^i_i$ is 1 and $p^i_{n+1}$ is $\eps$ if $i \leq n/2$, otherwise it is $-\eps$, where $\eps$ is small positive parameter. 
Let $X_1$ denote the set of points $\{p^1, \ldots, p^{n/2}\}$, i.e., the points for which coordinate $n+1$ is $\eps$, and $X_2$ be the remaining points (for which this coordinate is $-\eps$). 

We choose $k=2$. We will first verify that this solution is $(1/2)$-distributed. 
To show this, we need to figure out the structure of an optimal solution. 

\begin{claim}
\label{cl:strict}
Any optimal solution to the instance $X$ with $k=2$ must partition the set $X$ into $X_1$ and $X_2$. 
\end{claim}

\begin{proof}
Consider a solution which partitions $X$ into two clusters $C_1$ and
$C_2$ as follows, where $C_1$ contains $n_1$ points and $C_2$ contains $n_2$ points. 
Let $\mu^1$ and $\mu^2$ be the means of these clusters respectively. 
We first compute the cost of objective function corresponding to all the coordinates except for coordinate $n+1$, i.e., 
$$ \sum_{r=1}^2 \sum_{p^i \in C_r} \sum_{j=1}^n (p^i_j - \mu^r_j)^2. $$
Fix a point $p^i \in C_r$. It is easy to check that the sum of terms 
above involving $p^i$ is equal to 
$$ (1 - 1/n_r)^2 + \frac{n_r-1}{n_r^2} = 1 - 1/n_r. $$
Therefore the above sum over all points is equal to 
$$ \sum_{r=1}^2 n_r (1 - 1/n_r) = n - 2, $$
which is independent of the clustering. However, the contribution towards
the objective function corresponding to coordinate $n+1$ is 0 if and only if the clusterings are $X_1$ and $X_2$. This proves the claim. 
\qed
\end{proof}
Now let is check the $\gamma$-distributed property. By the proof of the claim above, we know that optimal value is $n-2$, and each of the optimal clusters have size $n/2$. Now, consider a point $p^i$, and assume wlog that $i \leq n/2$. So $p^i$ belongs to the optimal cluster $X_1$. Note that $\mu^2$ has coordinates given by $$ \mu^2_j = \left\{ 
\begin{array}{cc} 0 & \mbox{ if $ j \leq n/2$} \\
2/n & \mbox { if $n/2 < j \leq n$} \\
-\eps & \mbox{ if $j = n+1$} \end{array} \right. $$
Therefore, $$ ||p^i - \mu^2||^2 = 1 + 4/n + \eps^2 \geq \frac{1}{2} \cdot\frac{OPT}{n/2}.$$
Therefore, the instance is $1/2$-distributed. 
Now we check that this instance is not $(1+O(1))$-weakly deletion stable. 
Suppose we assign all the points in the cluster $X_1$ to $\mu_2$.
The cost of this clustering, where all points are now getting assigned to $\mu_2$ can again be easily calculated. It is easy to check that this quantity is $n(1 + 2 \eps^2)$. The ratio of the increased cost to the optimal cost is $$\frac{n(1+2\eps^2)}{n-2} = 1 + \theta(1/n). $$
Therefore, this instance is not $\Omega(1)$-weakly deletion stable. This completes the proof of Theorem~\ref{thm:weakdel}. \qed
\end{proof}

\end{document}